\newtheorem{theorem}{Theorem}[section]
\newtheorem{lemma}[theorem]{Lemma}
\newtheorem{proposition}[theorem]{Proposition}
\newtheorem{corollary}[theorem]{Corollary}
\newtheorem{definition}[theorem]{Definition} 
\newtheorem{assumption}[theorem]{Assumption}
\theoremstyle{remark}
\newtheorem{remark}[theorem]{Remark}
\newcommand{\dd}{\mathrm d}
\newcommand{\cA}{\mathcal A}
\newcommand{\cB}{\mathcal B}
\newcommand{\cG}{\mathcal G}
\newcommand{\cL}{\mathcal L}
\newcommand{\cN}{\mathcal N}
\newcommand{\cZ}{\mathcal Z}
\newcommand{\cX}{\mathcal{X}}
\newcommand{\bbR}{\mathbb R} 
\newcommand{\sP}{\mathsf P}
\newcommand{\sE}{\mathsf E}
\newcommand{\cE}{\mathcal E}
\newcommand{\cR}{R}
\newcommand{\wmin}{w_{\mathrm{min}}}
\newcommand{\Var}{\mathrm{Var}}
\newcommand{\qmin}{q_{\mathrm{adj}}}
\newcommand{\rmin}{r_{\mathrm{min}}}
\newcommand{\K}{K}
\newcommand{\T}{T}
\renewcommand{\mid}{\,|\,} 
\newcommand{\ind}{\mathds{1}}  
\newcommand{\Gap}{\lambda}
\newcommand{\TV}{\mathrm{TV}}
\newcommand{\barg}{\bar{g}}
\newcommand{\barP}{\bar{P}}
\newcommand{\KL}{\mathrm{KL}}
\newcommand{\one}{\mathbf{1}_d}
\title{Polynomial Mixing Times of Simulated Tempering for Mixture Targets by Conductance Decomposition}
\author{Quan Zhou\\
Department of Statistics, Texas A\&M University 
}
\date{}
\numberwithin{equation}{section}
\begin{document}

\maketitle

\begin{abstract} 
We study the theoretical complexity of simulated tempering for sampling from mixtures of log-concave components differing only by location shifts. 
The main result establishes the first polynomial-time guarantee for simulated tempering combined with the Metropolis-adjusted Langevin algorithm (MALA) with respect to the problem dimension $d$, maximum mode displacement $D$, and logarithmic accuracy $\log \epsilon^{-1}$.  
The proof builds on a general state decomposition theorem for $s$-conductance, applied to an auxiliary Markov chain constructed on an augmented space. 
We also obtain an improved complexity estimate for simulated tempering combined with random-walk Metropolis. 
Our bounds assume an inverse-temperature ladder with smallest value $\beta_1 = O(D^{-2})$ and spacing $\beta_{i+1}/\beta_i = 1 + O( d^{-1/2} )$, both of which are shown to be asymptotically optimal up to logarithmic factors. 
\end{abstract}

\section{Introduction} \label{sec:intro}
The problem of sampling from a log-concave distribution has been extensively studied in the recent literature, with numerous algorithms now known to achieve  polynomial-time convergence; see, e.g.,~\citet{chewiBook} for a detailed exposition in book form. 
In contrast, sampling from non-log-concave distributions is generally much more challenging. 
A prototypical example is a mixture distribution with log-concave components, where naive algorithms such as the random-walk Metropolis (RWM) are known to mix slowly~\citep{bovier2016metastability}. 
In this work, we study the complexity of the simulated tempering algorithm~\citep{marinari1992simulated} for sampling from a target probability density on $\bbR^d$ given by
\begin{equation} \label{eq:intro-target}
    \pi^*(x) \propto   \sum_{j=1}^{\K}  w_j e^{ -f (x - \mu_j)},   
\end{equation}
where $f$ is $L$-smooth and $m$-strongly convex,  $(w_j)_{j = 1}^{\K}$ are the mixture weights, and 
$(\mu_j)_{j = 1}^{\K} $ are the modes of component distributions. The key idea behind simulated tempering is to augment the target density to a joint density $\pi^*(i, x)$, where for each $i$, $\pi^*(i, \cdot ) \propto [ \pi^*(\cdot) ]^{\beta_i}$ with inverse-temperature parameter $\beta_i > 0$. 
We refer to the sequence $0 < \beta_1 <   \cdots < \beta_{\T} = 1$ as the (inverse) temperature ladder. 
At each temperature level, one can employ any standard sampling algorithm, such as RWM or Metropolis-adjusted Langevin algorithm (MALA), to sample from $\pi^*(i, \cdot)$. 
When $\beta_i$ is small (i.e., the temperature is high), the distribution $\pi^*(i, \cdot)$ is flattened, allowing the sampler to move easily between the modes. By alternating between updating the location $x$ and the temperature index $i$, simulated tempering is able to efficiently explore the entire state space. 

\subsection{Main Results} \label{sec:intro-main-results} 
We prove in \cref{th:main} that no matter whether RWM or MALA  is used at each fixed temperature level, simulated tempering can achieve a mixing time of order $\mathrm{poly}(d, D, \log \epsilon^{-1})$, where $d$ is the problem dimension,   $D = \max_j \| \mu_j \|$ denotes the maximum mode displacement (we always use $\| \cdot \|$ to denote the $L^2$-norm of a vector), and  $\epsilon$ denotes the target  accuracy in  total variation (TV) distance.  
In particular, when RWM is used and all other parameters are fixed,  $O(d^2 \log^2 D \log \epsilon^{-1} )$ iterations are needed to be $\epsilon$-close to the stationary density $\pi^*(i, x)$ in TV distance; see Remark~\ref{rmk:order}. 
This polynomial bound requires the step size of the proposal scheme (of either RWM or MALA) to be properly tuned and that 
\begin{equation} \label{eq:opt-choice-beta}
    \beta_1 \leq \frac{1}{4 L D^2},  \quad \frac{\beta_{i+1}}{\beta_i} = 1 + \frac{m}{L \sqrt{d} }, \text{ for } i = 1, \dots, \T - 1.  
\end{equation}

To our knowledge, our results significantly improve existing complexity bounds for simulated tempering with mixture targets in three key aspects. 
First, the non-asymptotic complexity of simulated tempering combined with MALA has not been studied in the literature, and this work is the first to provide such a theoretical guarantee for the use of MALA in simulated tempering. 
Second, no prior analysis has established polynomial dependence   on all three complexity parameters, $d, D, \log \epsilon^{-1}$. The seminal work of~\citet{woodard2009conditions} only considered a mixture of two Gaussian distributions with means $c \one$ and $-c \one$ for some fixed $c > 0$, where 
$\one =  (1, \dots, 1)$, and thus did not investigate the dependence on $D$. The work of~\citet{ge2018simulated, lee2018beyond} studied  a discretized version of simulated tempering with continuous-time 
Langevin diffusion at each temperature level, yielding a complexity bound that scales polynomially in $\epsilon^{-1}$ instead of $\log \epsilon^{-1}$. 
Third, the conditions on the temperature ladder required by our results also improve those used in existing results. 
For comparison, assuming $L, m$ are fixed,  \citet{ge2018simulated} required $\beta_{i+1} / \beta_i = 1 + O( d^{-1} )$, and~\citet{woodard2009conditions} required $\beta_{i+1} / \beta_i = 1 + O( d^{-1} \log d )$, while our condition~\eqref{eq:opt-choice-beta} only assumes $\beta_{i+1} / \beta_i = 1 + O( d^{-1/2} )$, which corresponds to a smaller number of temperature levels.  
The only work that used the same asymptotic order as~\eqref{eq:opt-choice-beta} is~\citet{garg2025restricted} (see Lemma 19 therein), but their analysis was restricted to mixtures of Gaussian distributions rather than general log-concave components. 
In fact, as we prove in \cref{coro:necessary}, our condition~\eqref{eq:opt-choice-beta} achieves asymptotically optimal dependence on $d$ and $D$, up to logarithmic factors.

\subsection{Proof Overview} \label{sec:intro-proof} 

While our overall proof strategy  follows the general approach of existing works, our analysis features two major technical improvements. First, all existing non-asymptotic mixing time bounds for simulated tempering are obtained via spectral gap analysis. The simulated tempering chain  is decomposed into restricted chains targeting each component distribution and a projected chain characterizing the probability flow across   temperature levels and mixture components. Then, one bounds the spectral gaps of restricted chains and projected chain separately, which yield a multiplicative bound on the overall spectral gap of the  algorithm via the celebrated state decomposition theorem~\citep{woodard2009conditions}.  
However, when MALA is used at each temperature level, this strategy appears difficult to apply, since establishing a lower bound on the spectral gap of MALA on unbounded spaces remains challenging~\citep{bou2013nonasymptotic, chewi2021optimal}. Meanwhile, recent studies on the complexity of MALA for log-concave targets~\citep{dwivedi2019log,  wu2022minimax, chewi2021optimal} are all based on $s$-conductance analysis, where $s$-conductance essentially characterizes the conductance of a Markov chain after excluding sets with small stationary probabilities.  
This motivates us to directly decompose the $s$-conductance rather than the spectral gap of the simulated tempering chain. We show that a minor adaptation of the argument of~\citet{jerrum2004elementary} yields a general state decomposition theorem for $s$-conductance; see \cref{th:conduct-decomp}. 

Another refinement of our analysis lies in the application of state decomposition theorem to general mixture targets. Since~\citet{woodard2009conditions} assumed a mixture of two Gaussian distributions with means $c\one$ and $-c\one$, the state space $\bbR^d$ can be naturally partitioned into two symmetric subsets, and each restricted chain is defined on one subset. 
For general mixture targets of the form given in~\eqref{eq:intro-target}, how to partition the state space becomes a highly non-trivial task~\citep{lee2018beyond}. As an alternative, \citet{ge2018simulated} proposed to directly decompose the Dirichlet form of the simulated tempering chain and compare it with the Dirichlet forms of the samplers at each temperature level and a suitably defined projected chain. 
Their setup assumes that the continuous-time Langevin diffusion is used at each level, leading to a relatively simple expression for the decomposition bound. 
In~\citet{garg2025restricted}, this Markov chain decomposition argument was further extended to discrete-time settings via a more elaborate treatment. In this work, we reconcile the two approaches by comparing the Dirichlet form of the simulated tempering chain to that of an auxiliary Markov chain on the space $\{1,  \dots, \T\} \times \{1, \dots, \K\} \times \bbR^d$ (recall that $\T$ is the number of temperature levels, and $\K$ is the number of mixture components). 
Denote the transition kernel of the actual simulated tempering chain by $P^*$ and that of the auxiliary chain by $P$. The chain $P^*$ has stationary density $\pi^*(i, x)$, while $P$ has stationary density $\pi(i, j, x)$, whose marginal  $\pi(i, x)$ is a mixture distribution that approximates $\pi^*(i, x)$. 
We prove in \cref{lm:compare-st-one} and \cref{lm:compare-st-two} that the $s$-conductance of $P$ can be used to lower bound that of $P^*$. 
The application of the state decomposition theorem to $P$ is straightforward: we simply define each restricted chain on $\{i\} \times \{j\} \times \bbR^d$ for each $(i, j)$. 
This new approach provides a very general and transparent framework for analyzing simulated tempering with mixture target distributions and can be used to study the complexity of simulated tempering combined with other local sampling algorithms such as the proximal sampler~\citep{chen2022improved}. 

One technical challenge is that, in order to obtain a comparison bound for the Dirichlet forms of $P$ and $P^*$, we must assume that at each  level, $P$ and $P^*$ are Metropolis--Hastings chains with the same proposal scheme, though they have different stationary distributions. Hence, for the analysis of simulated tempering combined with MALA, each restricted chain of $P$ is not exactly a MALA sampler: the potential used in the proposal differs from that of the stationary density. This prevents us from directly applying existing $s$-conductance bounds for MALA with log-concave targets~\citep{chewi2021optimal, wu2022minimax}. In \cref{lm:mala}, we bound the $s$-conductance of such ``pseudo-MALA'' chains using an argument similar to that of~\citet{dwivedi2019log}, which is known to be suboptimal for log-concave targets. Whether the use of MALA can yield a faster convergence rate than RWM in simulated tempering remains an open problem.

\subsection{Structure of the Paper}
The rest of the paper is organized as follows. In \cref{sec:general-theory}, we review the definitions and properties of spectral gap and $s$-conductance and then prove a general state decomposition theorem for $s$-conductance. 
In \cref{sec:st}, we formally describe the setup for our theoretical analysis of the simulated tempering algorithm and introduce auxiliary Markov chains used in the proof. 
In \cref{sec:poly}, we state and prove the polynomial complexity bounds for simulated tempering combined with either RWM or MALA.
In \cref{sec:opt-beta}, we construct explicit examples which show that in general, the choice of the temperature ladder given in~\eqref{eq:opt-choice-beta} cannot be improved in terms of the dependence on $d$ and $D$. 
We conclude the paper with a brief discussion in \cref{sec:disc} on possible directions for future research. 
Preliminary lemmas and standard technical arguments are deferred to the appendices. 

\section{General Results for Bounding s-conductance}\label{sec:general-theory}

\subsection{Spectral Gap and s-conductance} 

Throughout this section, we consider a general  state space  $(\cX, \cB(\cX))$, where $\cX$ is assumed to be Polish and $\cB(\cX)$ denotes the Borel $\sigma$-algebra on $\cX$.    
Given a probability measure $\Pi$ on $(\cX, \cB(\cX))$, we write $\Pi(g) = \int_{\cX} g(x) \Pi(\dd x)$  and $V_\Pi(g) = \int_{\cX} [g(x) - \Pi(g)]^2 \Pi(\dd x) $ for any integrable   $g$. 
Denote by $\pi$ the density of $\Pi$ with respect to some dominating measure.  
We recall the definitions of spectral gap and $s$-conductance of a Markov chain. 

\begin{definition}\label{def:gap-conductance}
Let $P$ be the transition kernel of a Markov chain on $(\cX, \cB(\cX))$ reversible with respect to $\Pi$. Denote its spectral gap by $\Gap(P)$, which is defined by 
\begin{equation}
    \Gap(P) = \inf_{g \colon V_\Pi(g) > 0} \frac{ \cE_P(g)}{V_\Pi(g)},  
    \text{ where }
     \cE_P(g)  = \frac{1}{2} \int_{\cX^2} [g(y) - g(x)]^2  \Pi(\dd x)P(x, \dd y). 
\end{equation} 
We call $\cE_P$ the Dirichlet form.  
For $s \in [0, 1/2)$, the $s$-conductance of $P$ is defined by 
\begin{equation}\label{eq:def-s-conduct}
    \Phi_s(P) =  \inf \left\{ \frac{ P(A, A^c) }{ \Pi(A) -s }\colon A \in \cB(\cX) \text{  and }   \Pi(A) \in (s, 1/2] \right\}, 
\end{equation} 
where $P(A, A^c)  = \int_{ A}   P(x, A^c)  \Pi(\dd x)$. 
\end{definition}

It is well known that  $\Gap(P)$ provides a lower bound on the exponential rate at which the chain converges to the stationary distribution~\citep[Chap. 22]{douc2018markov}. An analogous bound can be obtained by using the $s$-conductance. 
To formally state this result,  let $\| \Pi_1 - \Pi_2 \|_{\TV} = \sup_{A \in \cB(\cX)} | \Pi_1(A) - \Pi_2(A)|$ denote the TV distance between two probability measures, and let $\nu P^t$ denote the distribution of the Markov chain (with transition kernel $P$) at time $t$ given initial distribution $\nu$; that is, $(\nu P^t)(A) = \int_{\cX} P^t(x, A) \nu(\dd x)$ for each $A \in \cB(\cX)$. It was shown in~\citet{lovasz1993random}  
that if $P$ is lazy (i.e., $P(x, \{x\}) \geq 1/2$ for all $x$) and reversible with respect to $\Pi$, then for any  distribution $\nu$ and $t \geq 0$, 
\begin{equation}\label{eq:s-conduct-tv-rate}
    \| \nu P^t - \Pi \|_{\TV} \leq \eta s + \eta e^{ -t \Phi_s^2(P)/ 2 }, 
    \text{ where } \eta = \sup_{A \in \cB(\cX)} \frac{ \nu(A)}{\Pi(A)}. 
\end{equation}
Hence, if $\eta < \infty$ (in which case we say $\nu$ is an $\eta$-warm start and call $\eta$ the warm-start parameter), for any target accuracy level $\epsilon > 0$, one can choose a sufficiently small $s$ to show that the TV distance drops below $\epsilon$ at an exponential rate as $t \rightarrow \infty$. 

Some useful inequalities for comparing $s$-conductance and spectral gap are given in \cref{lm:s-conduct}.  
A common approach to bounding the $s$-conductance of a Markov chain is to use the isoperimetric inequality~\citep{lovasz1993random, dwivedi2019log}, provided that such an inequality holds for the stationary distribution. We summarize this method in \cref{lm:s-conductance-isoperi}.

\subsection{Comparing s-conductance of Metropolis--Hastings Chains}\label{sec:compare-mh}
Let $\pi$ be the density of a probability measure on $(\cX, \cB(\cX))$ and $Q$ be a transition kernel with density $q(x, y)$.  
We say $P$ is the transition kernel of a Metropolis--Hastings chain with stationary density $\pi$ and proposal $Q$ if 
\begin{equation}\label{eq:def-MH-general}
    P(x, \dd y) = Q(x, \dd y) a_P(x, y)  \, + \, \left( 1 - \int_{\cX}    a_P(x, z) Q(x, \dd z) \right) \delta_x (\dd y ),  
\end{equation}
where $\delta_x$ denotes the Dirac measure at $x$, and $a_P(x, y) $ is the acceptance probability   given by 
\begin{equation}\label{eq:def-acc}
    a_P(x, y) = \min\left\{ 1, \frac{\pi( y) q(y, x)}{\pi(x) q(x,  y)}\right\}.  
\end{equation}
 
Comparison results for the spectral gaps of two Metropolis--Hastings chains sharing the same proposal are well known~\citep{madras1999importance}. Here we prove an analogous result for $s$-conductance. 

\begin{lemma}  \label{lm:compare}
For $k = 1, 2$, let $P_k$ be the transition kernel of the Metropolis--Hastings chain on $(\cX, \cB(\cX))$ with stationary density $\pi_k$ and same proposal $Q$. 
Suppose for some $c \in (0, 1]$, 
\begin{equation}
    c \leq \frac{\pi_1(x)}{\pi_2(x)} \leq c^{-1}, \quad \forall \, x \in \cX. 
\end{equation}
Then,   
    $\Phi_s(P_1) \geq    c^2  \Phi_{c s} (P_2)$ for any $s \in [0, 1/2)$ and $\Gap(P_1) \geq c^2 \Gap(P_2)$.    
\end{lemma}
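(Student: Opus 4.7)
The whole argument will hinge on the symmetric form of detailed balance for a Metropolis--Hastings chain: for any $x \neq y$,
\begin{equation*}
\pi_k(x)\, a_{P_k}(x, y)\, q(x, y) \;=\; \min\bigl\{\pi_k(x)\, q(x, y),\; \pi_k(y)\, q(y, x)\bigr\}.
\end{equation*}
Because $\min$ preserves pointwise inequalities, the hypothesis $\pi_1 \ge c\,\pi_2$ immediately yields
\begin{equation*}
\pi_1(x)\, a_{P_1}(x, y)\, q(x, y) \;\ge\; c\, \pi_2(x)\, a_{P_2}(x, y)\, q(x, y)
\end{equation*}
on the off-diagonal $\{x \neq y\}$. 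The rejection parts $\delta_x(\dd y)$ never enter the arguments that follow, since they contribute zero to both $P(A, A^c)$ and to the Dirichlet form. Integrating the pointwise inequality over $A \times A^c$ gives the edge-flow comparison $P_1(A, A^c) \ge c\, P_2(A, A^c)$, which is the single ingredient I will use for both conclusions.

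Next I will derive the $s$-conductance bound by a short case analysis. Fix $A$ with $\Pi_1(A) \in (s, 1/2]$. The other direction of the hypothesis, $\pi_1 \le c^{-1}\pi_2$, gives $\Pi_2(A) \ge c\,\Pi_1(A) > cs$. If $\Pi_2(A) \le 1/2$, then $A$ is admissible in the definition of $\Phi_{cs}(P_2)$, and combining the edge-flow bound with $\Pi_2(A) - cs \ge c(\Pi_1(A) - s)$ delivers
\begin{equation*}
P_1(A, A^c) \;\ge\; c\, P_2(A, A^c) \;\ge\; c\,\Phi_{cs}(P_2)\bigl(\Pi_2(A) - cs\bigr) \;\ge\; c^2\, \Phi_{cs}(P_2)\bigl(\Pi_1(A) - s\bigr).
\end{equation*}
If instead $\Pi_2(A) > 1/2$, I apply $\Phi_{cs}(P_2)$ to $A^c$, using $P_2(A, A^c) = P_2(A^c, A)$ by reversibility. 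Admissibility is automatic since $\Pi_2(A^c) \le 1/2$ and $\Pi_2(A^c) \ge c\,\Pi_1(A^c) \ge c/2 > cs$. The crucial inequality $\Pi_2(A^c) - cs \ge c(\Pi_1(A) - s)$ now follows from $\Pi_2(A^c) \ge c\,\Pi_1(A^c) = c(1 - \Pi_1(A)) \ge c\,\Pi_1(A)$, where the last step uses $\Pi_1(A) \le 1/2$. Either case yields $\Phi_s(P_1) \ge c^2\, \Phi_{cs}(P_2)$.

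For the spectral gap the same pointwise comparison lifts directly to the Dirichlet form, giving $\cE_{P_1}(g) \ge c\, \cE_{P_2}(g)$ for every test function $g$. On the denominator side, the variational identity $V_{\Pi_1}(g) = \inf_a \int (g - a)^2\, \dd\Pi_1$ together with $\pi_1 \le c^{-1}\pi_2$ gives
\begin{equation*}
V_{\Pi_1}(g) \;\le\; \int_{\cX}\bigl(g - \Pi_2(g)\bigr)^2\, \dd\Pi_1 \;\le\; c^{-1}\, V_{\Pi_2}(g),
\end{equation*}
and taking the ratio produces $\Gap(P_1) \ge c^2\, \Gap(P_2)$. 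The main obstacle I anticipate is the second branch of the conductance argument: one must switch from $A$ to $A^c$ and verify both admissibility for $\Phi_{cs}(P_2)$ and the denominator inequality $\Pi_2(A^c) \ge c\,\Pi_1(A)$, and it is precisely this bookkeeping that forces the second argument of $\Phi_{cs}$ to be $cs$ rather than $s$.
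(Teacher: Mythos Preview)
Your proof is correct and follows essentially the same approach as the paper: both rely on the pointwise comparison $\min\{\pi_1(x)q(x,y),\,\pi_1(y)q(y,x)\} \ge c\min\{\pi_2(x)q(x,y),\,\pi_2(y)q(y,x)\}$ to obtain $P_1(A,A^c) \ge c\,P_2(A,A^c)$, and then do the same case split on whether $\Pi_2(A) \le 1/2$ or not. The only difference is cosmetic: the paper cites \citet{madras1999importance} for the spectral-gap inequality, whereas you supply the short self-contained argument via $\cE_{P_1}(g)\ge c\,\cE_{P_2}(g)$ and $V_{\Pi_1}(g)\le c^{-1}V_{\Pi_2}(g)$.
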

\begin{proof}
The inequality for spectral gaps directly follows from 
Proposition 2.3 of~\citet{madras1999importance}. 
To prove the inequality for $s$-conductance, fix a set $A$ with $\Pi_1(A) \in (s, 1/2]$.  
The assumption on $\pi_1 /\pi_2$ implies that $\Pi_1(A) \leq c^{-1}\Pi_2(A)$. 
Hence, if $\Pi_2(A) \leq 1/2$, we get  
\begin{align*}
    \Pi_1(A) - s \leq   \frac{1}{c} (\Pi_2(A) - c s) 
    \leq    \frac{P_2(A, A^c)}{c \, \Phi_{c s} (P_2) } \leq \frac{ P_1(A, A^c)}{c^2 \Phi_{c s} (P_2) }, 
\end{align*}
where the last inequality follows from the assumption on $\pi_1/\pi_2$ and that 
\begin{equation}
    P_k(A, A^c) = \int_{x \in A} \int_{y \in A^c}\min \left\{ q(x, y) \pi_k(x),  \, q(y, x) \pi_k(y) \right\} \dd x \, \dd y.  
\end{equation}

If $\Pi_2(A) \geq 1/2$, then we bound $\Pi_1(A)$ by  $\Pi_1(A) \leq \Pi_1(A^c) \leq c^{-1} \Pi_2(A^c)$. Applying the same argument and using $P_k(A, A^c) = P_k(A^c, A)$, again we get $\Pi_1(A) - s \leq P_1(A, A^c) /  ( c^2 \Phi_{c s} (P_2) )$. 
The claimed bound then follows from the definition of $s$-conductance. 
\end{proof}

\subsection{State Decomposition Theorem for s-conductance}
For spectral gap, various state decomposition bounds have been established~\citep{madras2002markov, jerrum2004elementary, guan2007small, pillai2017elementary, qin2025spectral}. 
The main idea is to decompose the state space into disjoint subsets and then analyze how the chain moves within each piece and between the pieces.   
For a positive integer $n$, we use the shorthand $[n] = \{1, 2, \dots, n\}$. 

\begin{definition}\label{def:decomp} 
Let $P$ be a transition kernel on $(\cX, \cB(\cX))$ that is reversible with respect to the distribution $\Pi$.  
Given a disjoint partition $(\cX_k)_{k \in [n]}$ of $\cX$, define the  transition kernel  $P_k$  of the ``restricted chain'' on $\cX_k$ by 
\begin{equation}
    P_k(x, A)  =   P(x, A \cap \cX_k) 
    + \delta_x(A) P(x, \cX \setminus \cX_k), \quad \forall x \in \cX_k, \, A \in \cB(\cX). 
\end{equation}
Define the transition kernel $\barP$ of the ``projected chain'' on $[n]$ by 
\begin{equation}
    \barP(k, \ell) = \frac{ \int_{\cX_k} P(x, \cX_\ell) \Pi (\dd x) }{ \Pi(\cX_k) }. 
\end{equation}
\end{definition}
 
By checking the detailed balance condition, one can verify that $P_k$ and $\bar{P}$ are reversible with respect to the distribution  $\Pi_k$ and  density $\bar{\pi}$ respectively, where 
\begin{equation}\label{eq:def-muk-barmu}
    \Pi_k(A) = \frac{ \Pi(A \cap \cX_k) }{ \Pi(\cX_k)}, \quad \bar{\pi}(k) = \Pi(\cX_k), 
\end{equation}
for any  $A \in \cB(\cX)$ and $k \in [n]$. Note that $\bar{\pi}$ is a probability mass function on $[n]$, and $\Pi_k$ has support $\cX_k$.  
Existing state decomposition theorems  bound  $\Gap(P)$ using  $\Gap(\barP)$ and $\min_k \Gap(P_k)$, capturing the intuition that fast mixing within each $\cX_k$ and across components leads to fast global mixing. In particular, Theorem 5.2 of~\citet{woodard2009conditions} gives
\begin{equation}\label{eq:gap-decomp}
    \Gap(P)  \geq \frac{1}{2} \, \Gap(\barP) \min_{k \in [n]} \Gap(P_k). 
\end{equation}  

We show in the following theorem that a similar bound can be obtained for the $s$-conductance of $P$.    The assumption $\Gap(\barP) \leq 1$ is mild, since it holds for any lazy $P$ (in which case $\barP$ is also lazy).

\begin{theorem}\label{th:conduct-decomp}
Consider the setting of~\cref{def:decomp} and assume $\Gap(\barP) \leq 1$, and $\Pi(A_0) = 1/2$ for some $A_0 \in \cB(\cX)$. 
For any $s \in [0, 1/2)$,   
\begin{equation}\label{eq:decomp-temp-1}
    \Phi_s(P) \geq \frac{\Gap(\barP)}{8}  \Psi\left( \frac{\Gap(\barP)}{8}  s \right),  
\end{equation}
where  
$\Psi(s) = \min_{k} \Phi_s(P_k)$ denotes the minimum $s$-conductance of restricted chains. 
\end{theorem}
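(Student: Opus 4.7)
The plan is to adapt the spectral-gap state decomposition of \citet{jerrum2004elementary} to the $s$-conductance setting. Fix any measurable set $A$ with $\Pi(A) \in (s, 1/2]$, and write $a_k = \Pi_k(A)$, $\bar{g}(k) = a_k$, $\mu = \Pi(A)$, and $\alpha = \Gap(\barP)/8$; it suffices to prove the cutset estimate $P(A, A^c) \ge \alpha\, \Psi(\alpha s)(\mu - s)$. I begin by decomposing the Dirichlet form $\cE_P(\ind_A) = P(A, A^c)$ into within-piece and across-piece contributions, $\cE_P(\ind_A) = \sum_k F_k + E_{\text{cross}}$, where $F_k = \int_{A \cap \cX_k} P(x, \cX_k \setminus A)\,\Pi(\dd x)$ equals $\bar{\pi}(k)$ times the $P_k$-flow across $A \cap \cX_k$. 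The variance identity $\mu(1-\mu) = \sum_k \bar{\pi}(k)\, a_k(1-a_k) + V_{\bar{\pi}}(\bar{g})$ together with $\mu \le 1/2$ (so $\mu(1-\mu) \ge \mu/2$) forces at least one of the two right-hand summands to be $\ge \mu/4$, which drives a two-case analysis.

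In Case A ($\sum_k \bar{\pi}(k)\, a_k(1-a_k) \ge \mu/4$), I bound each within-piece flow via the $s'$-conductance of $P_k$ with $s' = \alpha s$: by reversibility of $P_k$, one has $F_k/\bar{\pi}(k) \ge \Psi(\alpha s)\,(\min(a_k, 1-a_k) - \alpha s)_+$. Summing over $k$, applying Jensen's inequality to the convex map $x \mapsto (x - \alpha s)_+$, and invoking the elementary bound $\min(a_k, 1-a_k) \ge a_k(1-a_k)$ gives $\sum_k F_k \ge \Psi(\alpha s)\,(\mu/4 - \alpha s)$. Since $\alpha \le 1/8$ and $s < \mu$, the right-hand side is at least $\Psi(\alpha s)(\mu-s)/4 \ge \alpha\,\Psi(\alpha s)(\mu-s)$, closing Case~A.

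In Case B ($V_{\bar{\pi}}(\bar{g}) \ge \mu/4$), I apply the Poincar\'e inequality for $\barP$ to the test function $\bar{g}$ to obtain $\bar{\cE}(\bar{g}) \ge \Gap(\barP)\,\mu/4$, and then compare $\bar{\cE}(\bar{g})$ to $\cE_P(\ind_A)$. Following \citet{jerrum2004elementary}, I write $\ind_A = \bar{g}^* + h$, where $\bar{g}^*(x) = \bar{g}(k)$ for $x \in \cX_k$ is the piecewise-constant lift (satisfying $\cE_P(\bar{g}^*) = \bar{\cE}(\bar{g})$) and $h = \ind_A - \bar{g}^*$ is piecewise centered with $\cE_{P_k}(h|_{\cX_k}) = F_k/\bar{\pi}(k)$. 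Applying Young's inequality to the identity $\bar{g}^*(y) - \bar{g}^*(x) = (\ind_A(y) - \ind_A(x)) - (h(y) - h(x))$ and absorbing the within-piece residuals into $\sum_k F_k$ then yields a comparison of the form $\bar{\cE}(\bar{g}) \lesssim \cE_P(\ind_A)$. Combined with the Poincar\'e bound, this gives $\cE_P(\ind_A) \gtrsim \Gap(\barP)\,\mu/8 = \alpha\,\mu$, which dominates $\alpha\,\Psi(\alpha s)(\mu - s)$ since $\Psi(\alpha s) \le 1$ and $\mu \ge \mu - s$. Combining Cases~A and~B completes the proof.

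The hard part will be Case~B, specifically the Dirichlet-form comparison linking $\bar{\cE}(\bar{g})$ to $\cE_P(\ind_A)$. Because the two-point measure $\ind_{\cX_k}(x)\,\ind_{\cX_\ell}(y)\,P(x,\dd y)\Pi(\dd x)$ does not in general have marginals $\Pi_k$ and $\Pi_\ell$, a direct Jensen bound on $(a_k - a_\ell)^2$ is unavailable, forcing the detour through the piecewise-constant lift $\bar{g}^*$ and a careful Young-type splitting in which the within-piece residuals must be absorbed into the $\sum_k F_k$ already counted in $\cE_P(\ind_A)$. The constant $1/8$ absorbs the multiplicative losses from this step and the factor of two from the variance-dominance case split; the hypothesis $\Gap(\barP) \le 1$ is used to close Case~A by ensuring $\alpha \le 1/8$, and the existence of $A_0$ with $\Pi(A_0) = 1/2$ guarantees that the infimum in the definition of $\Phi_s(P)$ is taken over a non-degenerate family of cuts.
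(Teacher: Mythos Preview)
Your Case A argument is fine, but Case B has a genuine gap. The Dirichlet-form comparison $\bar\cE(\bar g) \lesssim \cE_P(\ind_A)$ that you aim for is simply false in general, and the Young-inequality sketch does not rescue it. Young gives $\cE_P(\bar g^*) \leq 2\cE_P(\ind_A) + 2\cE_P(h)$, and the \emph{cross-piece} part of $\cE_P(h)$ is not a ``within-piece residual'' that can be absorbed into $\sum_k F_k$. By the argument of \citet{jerrum2004elementary} that you invoke, this cross-piece part is controlled only by $\gamma\sum_k \bar\pi(k) V_{\Pi_k}(\ind_A)$, which in Case~B you have no handle on---it can be of order $\mu$, swamping the Poincar\'e lower bound $\Gap(\barP)\mu/4$ since $\Gap(\barP)\le 1$. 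Concretely: take $\cX_1=\{u_1,u_2\}$ with $\Pi(u_1)=\delta$, $\Pi(u_2)=\tfrac12-\delta$, $\cX_2=\{v\}$ with $\Pi(v)=\tfrac12$; let $P$ move only via $u_1\leftrightarrow v$ (rate $p$) and $u_1\leftrightarrow u_2$ (with $P(u_2,u_1)=q$), and set $A=\{u_2\}$. For small $\delta$ this lies strictly in Case~B (not Case~A), $\bar\cE(\bar g)\asymp \delta p$, yet $\cE_P(\ind_A)\asymp q$, which can be made arbitrarily small relative to $\delta p$. Your Case~B conclusion $\cE_P(\ind_A)\gtrsim \Gap(\barP)\mu/8\asymp \delta p$ therefore fails. (The theorem itself survives here because $\Psi$ is also small.)

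The paper's proof avoids this trap by \emph{not} case-splitting. It applies the Jerrum--Son--Tetali--Vigoda variance bound $V_\Pi(\ind_A)\le \frac{3}{2\Gap(\barP)}\sum_{k\ne\ell}\cE_P^{k,\ell}(\ind_A)+\frac{4}{\Gap(\barP)}\sum_k\bar\pi(k)V_{\Pi_k}(\ind_A)$ directly to $\Pi(A)-s\le 2V_\Pi(\ind_A)-s$, distributes the $-s$ across the $V_{\Pi_k}$ terms, and then uses the elementary inequality $V_{\Pi_k}(\ind_A)-\tilde s\le \cE_{P_k}(\ind_A)/\Psi(\tilde s)$ (Lemma~\ref{lm:s-conduct}(i)) with $\tilde s=s\Gap(\barP)/8$. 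This converts the problematic within-piece variances into within-piece Dirichlet forms at the cost of a uniform factor $\Psi(\tilde s)^{-1}$, after which a term-by-term comparison with the decomposition $\cE_P(\ind_A)=\tfrac12\sum_{k\ne\ell}\cE_P^{k,\ell}(\ind_A)+\sum_k\bar\pi(k)\cE_{P_k}(\ind_A)$ yields the bound. The point is that $\Psi$ must enter in \emph{both} regimes, not only in your Case~A; the cross-piece ``bias'' terms $\bar E_1,\bar E_3$ of Jerrum et al.\ are controlled by the within-piece variances, and it is the restricted-chain conductance---not the projected-chain gap---that converts those variances into flows.
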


\begin{proof} 
Let  $g = \ind_A$ for some arbitrary $A \in \cB(\cX)$ with $\Pi(A) \in (s, 1/2]$. 
It is straightforward to verify that $\cE_P(g) = P(A, A^c)$, and thus it suffices to show that 
\begin{equation}\label{eq:decomp-goal}
    \frac{\cE_P(g) }{\Pi(A) - s} \geq  \frac{\Gap(\barP)}{8}  \Psi\left( \frac{\Gap(\barP)}{8}  s \right). 
\end{equation} 
By \cref{th:jerrum}, 
\begin{align}
     \cE_P(g)  
    &=   \frac{1}{2} \sum_{k \neq \ell} \cE_{P}^{k, \ell}(g) + \sum_{k \in [n]} \bar{\pi}(k) \cE_{P_k}(g), \label{eq:decomp-E-main-text} \\  
    V_ \Pi(g)    &\leq \frac{3}{2 \Gap(\barP)}\sum_{k \neq \ell} \cE_{P}^{k, \ell}(g)   +  \frac{4}{\Gap(\barP)}   \sum_{k \in [n]} \bar{\pi}(k)  V_{ \Pi_k}(g),   \label{eq:decomp-V-main-text}
\end{align}
where $\cE_{P}^{k, \ell}$ is defined in~\eqref{eq:def-Pkl} and  $\Gap(\barP) \leq 1$ is used in the second inequality. 
Letting $\tilde{s} = s \, \Gap(\barP) / 8$ and using $V_ \Pi(g) =  \Pi(A)  \Pi(A^c)$, we get 
\begin{align*}\label{eq:decomp-temp1}
\Pi(A) - s &\leq  
 2  V_ \Pi(g) -  s   \\
    &\leq \frac{3}{ \Gap(\barP)}\sum_{k \neq \ell} \cE_{P}^{k, \ell}(g)  + \frac{8}{\Gap(\barP)} \sum_{k \in [n]} \bar{\pi}(k) V_{ \Pi_k}(g)  - s \\
    &= \frac{3}{ \Gap(\barP)}\sum_{k \neq \ell} \cE_{P}^{k, \ell}(g)  + \frac{8}{\Gap(\barP)} \sum_{k \in [n]} \bar{\pi}(k) \left[ V_{ \Pi_k}(g)  - \tilde{s} \right] \\
    & \leq  \frac{3}{ \Gap(\barP)}\sum_{k \neq \ell} \cE_{P}^{k, \ell}(g)  +  \frac{8}{\Gap(\barP) \,  \Psi( \tilde{s}) } \sum_{k \in [n]} \bar{\pi}(k)  \cE_{P_k}(g), 
\end{align*} 
where the last step follows from   part~(i) of Lemma~\ref{lm:s-conduct}.  
Comparing the right-hand side with~\eqref{eq:decomp-E-main-text}, we find that 
\begin{equation}\label{eq:decomp-min}
    \frac{\cE_P( \ind_A )}{  \Pi(A)  - \tilde{s}} \geq 
    \min\left\{ \frac{\Gap(\barP)}{6}, \,  \frac{ \Gap(\barP ) \Psi(\tilde{s})}{ 8 } \right\}. 
\end{equation} 

Since $\Gap(\barP) \leq 1$ and $s \leq 1/2$, we have $\tilde{s}  = s \, \Gap(\barP) / 8 \leq 1/16$. By parts (ii) and (iii) of Lemma~\ref{lm:s-conduct}, 
\begin{equation}
   \frac{1}{8} \Psi(\tilde{s})  \leq \frac{1}{8} \Psi\left( \frac{1}{16} \right) \leq 
\frac{1}{8} \frac{1}{1 - 2 / 16} = \frac{1}{7}. 
\end{equation}
Hence,  the minimum in~\eqref{eq:decomp-min} is attained by the second term, which yields~\eqref{eq:decomp-goal} and thus proves the claim. 
\end{proof}

A more general version of \cref{th:conduct-decomp} is given in \cref{th:conduct-decomp-general}, which, in certain scenarios, may yield a sharper lower bound on the $s$-conductance of order $ \min\{ \Gap(\barP), \Psi(c s \Gap(\barP)  ) \} $ for some constant $c > 0$.

\section{Theoretical Setup for Simulated Tempering Analysis} \label{sec:st} 

\subsection{Target Distribution} \label{sec:st-target}
Consider a probability density function $\pi^*$ with respect to the Lebesgue measure on $\bbR^d$  given by 
\begin{equation}\label{eq:target-pi-star}
    \pi^*(x) \propto e^{-U(x)}, \quad 
    U(x) = - \log  \sum_{j=1}^{\K}  w_j e^{ -f (x - \mu_j)},   
\end{equation}
where $\K$ is the number of mixture components, $(w_j)_{j=1}^{\K}$ are strictly positive weights that sum to one, $\mu_j \in \bbR^d$ for $j \in [\K]$,  and $f \colon \bbR^d \rightarrow [0, \infty)$. We assume that $f$ satisfies the following condition, which is widely used in the sampling literature. 

\begin{assumption}\label{ass:f}
The function $f$ in~\eqref{eq:target-pi-star} is twice differentiable, $L$-smooth and $m$-strongly convex for  $m, L > 0$, and $f(x)$ is minimized at $x = 0$ with $f(0) = 0$.  Define $\kappa = L / m$. 
\end{assumption}

For the definition of $L$-smoothness and $m$-strong convexity, see \cref{def:smooth-convex}, which essentially mean that $m I_d  \preceq   \nabla^2 f (x) \preceq L I_d$ for each $x$. The quantity $\kappa = L / m$  is often called the condition number.    
Note that assuming $\inf_{x \in \bbR^d} f(x) = f(0) = 0$ introduces no loss of generality, since for any strongly convex function $\tilde{f}$, there always exists a unique minimizer $x^*$, and we can define $f(x) = \tilde{f}(x + x^*) - \tilde{f}(x^*)$. 
Some properties of $L$-smooth and $m$-strongly convex functions and strongly log-concave distributions are given in \cref{sec:prelim-convex}.

Let $\beta_1 \leq \beta_2 \leq \cdots \leq \beta_{\T} = 1$ be a sequence of $\T$ inverse temperatures. 
Simulated tempering is a Metropolis--Hastings chain with state space $[\T] \times \bbR^d$ and stationary density 
\begin{equation}\label{eq:def-pi-star-ix}
    \pi^*(i, x) =    \frac{  r_i  \pi^*(x)^{\beta_i}  }{\int_{\bbR^d} \pi^*(y)^{\beta_i}  \dd y }, 
\end{equation}
where $(r_i)_{i=1}^{\T}$ are strictly positive and sum to one. 
In practice, one directly specifies the ratio $r_i / \textstyle\int  \pi^*(y)^{\beta_i}  \dd y$ instead of $r_i$, since the normalizing constant is typically unknown; see Remark~\ref{rmk:constant} for more discussion. 
Let $\pi_i^*(x)$ denote the conditional distribution of $x$ given $i$, which can be expressed by 
\begin{equation}\label{eq:def-pi-star-i}
    \pi_i^*(x) \propto e^{ - \beta_i U(x)}, 
\end{equation}
where $U$ is the potential of $\pi^*$ defined in~\eqref{eq:target-pi-star}. 

\subsection{Proposal Scheme} \label{sec:st-proposal}
The transition kernel of the proposal scheme of simulated tempering, which we denote by $Q^*$, can be described by  
    \begin{equation}\label{eq:def-Q-st}
        Q^*( (i, x), \, \{i'\} \times A ) =  \alpha \, q(i, i') \delta_x(A) + 
         (1 - \alpha) \, Q_i(x, A)  \ind_{\{i = i'\}}, \quad \forall i' \in[\T], A \in \cB(\bbR^d), 
    \end{equation}
where $\alpha \in (0, 1)$ is a constant, $q(i, i')$ denotes a transition matrix on $[\T]$, and   $Q_i$ is a transition kernel on $(\bbR^d, \cB(\bbR^d))$. 
In words, with probability $\alpha$, we propose changing the temperature level according to $q(i, i')$, and with probability $1 - \alpha$, we propose a new value for $x$ using $Q_i(x, \cdot)$.  
Note that  we never propose changing $i$ and $x$ simultaneously.

We consider two choices of $Q_i$ in this paper, both of which are very common in practice: 
\begin{align} 
    \text{RWM proposal:  } & Q_i(x, \cdot)  = N(x, (2h \beta_i^{-1}) I_d); \\ 
    \text{MALA proposal:  } & Q_i(x, \cdot) = N(x - h   \nabla U(x), (2h \beta_i^{-1}) I_d). 
\end{align} 
For both proposal schemes, the step size is $h / \beta_i$, where the scaling $\beta_i^{-1}$ is introduced to ensure that the acceptance probabilities at different temperature levels are comparable.  
For the MALA proposal, we use  that at the $i$-th temperature level, the stationary density $\pi^*_i$ has potential $\beta_i U(x)$, and with a step size of $h/\beta_i$, this yields the drift term $h \nabla U (x)$ in the proposal density.

\subsection{Transition Kernel} \label{sec:st-mtk}
Let $P^*$ denote the transition kernel of simulated tempering. 
According to the construction of  Metropolis--Hastings chains  in \cref{sec:compare-mh},  $P^*$ is uniquely determined given the stationary density $\pi^*(i, x)$ and the proposal scheme $Q^*$. 
Explicitly, the acceptance probabilities for the two types of proposals can be computed by 
\begin{align}
 a_{P^*}( (i, x), (i, x') ) & = \min\left\{ 1, \, \frac{ \pi^*_i(x') q_i(x', x)}{\pi^*_i( x) q_i(x, x')} \right\},  \label{eq:acc-Pstar-x} \\ 
  a_{P^*}( (i, x), (i', x)  ) &= \min\left\{ 1, \, \frac{ \pi^*(i', x) q(i', i) }{\pi^*(i, x) q(i, i') } \right\}  
  \label{eq:acc-Pstar-i}
\end{align}
where   $q_i(x, x')$ denotes  the density of the kernel $Q_i$. 
Since  either $x$ or $i$ can be updated in one step, we can express $P^*$ as 
\begin{align} 
   P^*( (i, x), \, \{i\} \times A ) &= (1 - \alpha)  \int_{A}  a_{P^*}( (i, x), (i, x') ) Q_i(x, \dd x') +  b(x) \delta_x(A)    \\
    P^*( (i, x), \, \{i'\} \times A ) &= \alpha\,  q(i, i') a_{P^*}( (i, x), (i', x) ) \delta_x(A),  \text{ if }  i \neq i',    
\end{align}
where $b(x)   = 1 -  (1 - \alpha)  \int_{\bbR^d}  a_{P^*}( (i, x), (i, y) ) Q_i(x, \dd y)  - \alpha \sum_{i' \neq i}  q(i, i') a_{P^*}( (i, x), (i', x) ).$ 

\subsection{First Auxiliary Metropolis--Hastings Chain}\label{sec:first-aux-chain}
The simulated tempering chain $P^*$ is not convenient to analyze, since whenever $\beta_i \neq 1$,  $\pi^*_i$ is no longer a mixture distribution. 
This motivates us to construct  auxiliary Metropolis--Hastings chains whose $s$-conductance can be compared to that of $P^*$.   
Define a joint density $\pi(i, j, x)$ on $[\T] \times [\K] \times \bbR^d$ by 
\begin{equation}\label{eq:def-joint-pi}
    \pi(i, j, x) = 
    \frac{ r_i}{C_i}   w_j e^{ -\beta_i \,  f (x - \mu_j)}, \quad \text{ where } C_i = \int_x e^{ -\beta_i \,  f (x)} \dd x. 
\end{equation}
The marginal distribution of $x$ is a mixture  of $\T \times \K$ component distributions: 
\begin{equation}\label{eq:def-pi-ij}
    \pi(x)  = \sum_{i, j} \pi(i, j)  \pi_{i, j}(x),  \text{ where } \pi(i, j) = r_i w_j, \text{ and } \pi_{i, j}(x) = \frac{1}{C_i} e^{ -\beta_i \,  f (x - \mu_j)}. 
\end{equation}
In the above expression, $\pi(i, j)$ is the marginal probability of $(i, j)$, and $\pi_{i, j}(x)$ is the conditional density of $x$ given $(i, j)$. 
Note that the normalizing constant $C_i$ does not depend on the location parameter $\mu_j$. Integrating over $j$, we get the marginal density
\begin{equation}\label{eq:def-approx-pi}
    \pi(i, x) =  \frac{ r_i}{C_i}  \sum_{j \in [\K]}  w_j e^{ -\beta_i \,  f (x - \mu_j)}, 
\end{equation}
and we use $\pi_i(x) = \sum_{j \in [\K]} w_j \pi_{i, j}(x)$ to denote the conditional density of $x$ given $i$. 

The first auxiliary transition kernel, denoted by $\tilde{P}$, is that of the Metropolis--Hastings chain with state space $[\T] \times \bbR^d$,  stationary density $\pi(i, x)$, and proposal $Q^*$. 
The acceptance probabilities of $\tilde{P}$ are thus given by 
\begin{equation} 
 a_{\tilde{P}}( (i, x), (i, x') )    = \min\left\{ 1, \frac{ \pi_i(x') q_i(x', x)}{\pi_i(x) q_i(x, x')} \right\},   \quad  
  a_{\tilde{P}}( (i, x), (i', x)  ) = \min\left\{ 1, \frac{ \pi(i', x) q(i', i)}{\pi(i, x)q(i, i') } \right\}. 
\end{equation}  
This chain has also  been used  in~\citet{ge2018simulated, garg2025restricted} for spectral gap comparison.

\begin{lemma}\label{lm:compare-st-one}  
Define $\wmin = \min_{j \in [\K]} w_j$, and let $P^*$ and $\tilde{P}$ be the transition kernels introduced above. 
The following results hold:  
\begin{enumerate}[label=(\roman*)] 
 \setlength\itemsep{2pt}  
    \item For any $s \in [0, 1/2)$,  $\Phi_s(P^*) \geq  \wmin^2  \Phi_{s \, \wmin  }(\tilde{P}).  $
    \item For any $s \in [0, \wmin / 2)$,  $ \Phi_s(P^*)  \leq \wmin^{-2} \Phi_{s / \wmin} (\tilde{P}) $. 
    \item  $\wmin^{-2} \, \Gap(\tilde{P}) \geq \Gap(P^*) \geq \wmin^2 \, \Gap(\tilde{P}).$ 
\end{enumerate} 
\end{lemma}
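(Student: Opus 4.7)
The plan is to apply the general comparison lemma \cref{lm:compare} directly. Both $P^*$ and $\tilde{P}$ are Metropolis--Hastings chains on $[\T]\times\bbR^d$ with the same proposal $Q^*$ and differ only in their stationary densities $\pi^*(i,x)$ and $\pi(i,x)$. So once I establish the pointwise bound $\wmin \leq \pi(i,x)/\pi^*(i,x) \leq \wmin^{-1}$, part (i) and the lower-bound half of (iii) follow from \cref{lm:compare} with $c = \wmin$. For part (ii) and the upper-bound half of (iii), I would apply \cref{lm:compare} with the roles of the two densities swapped (the sandwich is symmetric under inversion), obtaining $\Phi_s(\tilde P) \geq \wmin^2 \Phi_{s\wmin}(P^*)$ and $\Gap(\tilde P) \geq \wmin^2 \Gap(P^*)$; rearranging and substituting $s\mapsto s/\wmin$ yields the stated inequalities, with the restriction $s\in[0,\wmin/2)$ in (ii) precisely ensuring that $s/\wmin$ remains in the admissible range $[0,1/2)$ required by \cref{def:gap-conductance}.

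The entire content therefore reduces to checking the density-ratio bound. Since $\pi(i,x) = r_i \pi_i(x)$ and $\pi^*(i,x) = r_i \pi^*_i(x)$, the weight $r_i$ cancels, and it suffices to bound $\pi_i(x)/\pi^*_i(x)$. Write $g(x) = \sum_{j} w_j e^{-f(x-\mu_j)}$, so that $\pi^*_i(x) \propto g(x)^{\beta_i}$ while $\pi_i(x) \propto \sum_j w_j e^{-\beta_i f(x-\mu_j)}$. Because $\beta_i \in (0,1]$, the map $t\mapsto t^{\beta_i}$ is concave on $[0,\infty)$, so Jensen's inequality yields the upper bound $\sum_j w_j e^{-\beta_i f(x-\mu_j)} \leq g(x)^{\beta_i}$. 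For the lower bound I would use $g(x) \leq \max_j e^{-f(x-\mu_j)}$ (a weighted average is at most the max), giving
\begin{equation}
g(x)^{\beta_i} \leq \max_j e^{-\beta_i f(x-\mu_j)} \leq \wmin^{-1} \sum_j w_j e^{-\beta_i f(x-\mu_j)}.
\end{equation}
So the un-normalized ratio lies in $[\wmin, 1]$. Integrating in $x$ and using the translation invariance $\int e^{-\beta_i f(x-\mu_j)}\,\dd x = C_i$ for every $j$ shows that $\int \sum_j w_j e^{-\beta_i f(x-\mu_j)}\,\dd x = C_i$, so the same bounds propagate to the ratio of normalizing constants $Z_i^*/C_i$, where $Z_i^* = \int g^{\beta_i}$. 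Combining the two one-sided estimates gives $\pi_i(x)/\pi^*_i(x) \in [\wmin, \wmin^{-1}]$ for every $x$, as needed.

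I do not anticipate any real obstacle: the Jensen-plus-$\wmin$ sandwich is the only step carrying content, and it is exactly the place where the factor $\wmin$ in the final bounds arises. Everything else is bookkeeping around \cref{lm:compare}, plus the elementary substitution $s\mapsto s/\wmin$ that produces the swapped direction.
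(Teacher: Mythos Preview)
Your proposal is correct and follows essentially the same approach as the paper: establish the pointwise sandwich $\wmin \le \pi^*(i,x)/\pi(i,x) \le \wmin^{-1}$ and then invoke \cref{lm:compare} in both directions. The paper simply cites this density-ratio bound from \citet{ge2018simulated} (restated as \cref{lm:ge-compare}), whereas you supply a self-contained proof via Jensen's inequality for $t\mapsto t^{\beta_i}$ and the elementary $\max_j a_j \le \wmin^{-1}\sum_j w_j a_j$ bound; the remaining bookkeeping (swapping roles, the substitution $s\mapsto s/\wmin$, and the range restriction in (ii)) is handled correctly.
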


\begin{proof}
By Lemma 7.3 of~\citet{ge2018simulated} (see also \cref{lm:ge-compare}), $\pi^*(i, x) / \pi(i, x) \in [\wmin, \, \wmin^{-1}]$. All the three claims then follow from  \cref{lm:compare}.  
\end{proof}

\subsection{Second Auxiliary Metropolis--Hastings Chain}\label{sec:second-aux-chain}
While the chain $\tilde{P}$  targets a mixture density at every temperature level, the mixture component label $j$ is still treated as latent and not explicitly updated in the chain's dynamics. 
So we now introduce another auxiliary Markov chain, which will be used to establish the polynomial mixing times of simulated tempering. 

Let $P$ be the transition kernel of a Metropolis--Hastings chain on the joint space $[\T] \times [\K] \times \bbR^d$ with stationary density $\pi(i, j, x)$ defined in~\eqref{eq:def-joint-pi} and proposal $Q$ defined by 
\begin{equation}\label{eq:def-Q-joint}
   Q( (i, j, x),  \, \{i'\} \times \{j'\} \times A) 
    =  \frac{1}{2}   \pi_{i, x}(j') \ind_{ \{i = i', x \in A \} }   
    + \frac{1}{2}  Q^*( (i, x), \, \{i'\} \times A )  \ind_{\{ j = j'\} }, 
\end{equation}
where $ \pi_{i, x}(j) = \pi(i, j, x) / \pi(i, x)$ denotes the conditional density of $j$ given $(i, x)$.  
In words, with probability $1/2$, we fix $j$ and propose $(i', x')$ according to the transition kernel $Q^*$ used by the actual simulated tempering algorithm;
with probability $1/2$, we fix $(i, x)$ and propose $j'$ by sampling it from the full conditional distribution.  
Updates of $j$ are always accepted as in Gibbs sampling, and the acceptance probabilities of the other two types of moves are given by 
\begin{align}
    a_{P}( (i, j, x), \, (i, j, x') )  &= \min\left\{ 1, \, \frac{ \pi_{i,j}(x') q_i(x', x)}{\pi_{i,j}(x) q_i(x, x')} \right\},  \\
    a_{P}( (i, j, x), \, (i', j, x) )  &= \min\left\{ 1, \, \frac{ \pi(i', j, x) q(i', i)}{\pi(i, j, x) q(i, i')} \right\}. 
\end{align}

\begin{table}[!t] 
\small
\caption{Construction of the three Metropolis--Hastings chains.}\label{tab:mh-compare}
\begin{center}
  \begin{tabular}{cccc} \hline
     & \bf State Space & \bf Stationary Density & \bf Proposal \\ \hline
    $P^*$ & $[\T] \times \bbR^d$ & $\pi^*(i, x)$ & $Q^*$ (either change $i$ or change $x$) \\
    $\tilde{P}$ & $[\T] \times \bbR^d$ & $\pi(i, x)$ & $Q^*$ (either change $i$ or change $x$) \\
    $P$ & $[\T] \times [\K] \times \bbR^d$ & $\pi(i, j, x)$ & $Q$ (change $i$, change $j$, or change $x$) \\  \hline
  \end{tabular}
\end{center}
\end{table}

We summarize the construction of the three chains $P^*, \tilde{P}, P$ in Table~\ref{tab:mh-compare}. Unlike the analysis of $\tilde{P}$, 
we cannot  use \cref{lm:compare} to compare the spectral gap or $s$-conductance of $P$ with that of $P^*$ or $\tilde{P}$, since the  chains are defined on different spaces. 
However, since the stationary distribution of $P$ admits that of $\tilde{P}$ as a marginal, we can compare the Dirichlet forms of $P$ and $\tilde{P}$ for any function $g(i, x)$ that is independent of $j$, which yields the following key lemma.

\begin{lemma}\label{lm:compare-st-two} 
Let $\tilde{P}$  and $P$ be the transition kernels introduced above.  
For any $s\in [0, 1/2)$, $\Phi_s(\tilde{P}) \geq 2 \Phi_s (P)$, and $\Gap(\tilde{P}) \geq 2 \Gap(P)$. 
\end{lemma}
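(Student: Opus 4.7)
The plan is to compare the Dirichlet forms of $\tilde P$ and $P$ on functions independent of the auxiliary coordinate $j$. Given any $g \colon [\T] \times \bbR^d \to \bbR$, I would consider its lift $\tilde g(i,j,x) = g(i,x)$ to the state space of $P$. Since the marginal of $\pi(i,j,x)$ over $j$ is exactly $\pi(i,x)$, we have $V_\Pi(\tilde g) = V_{\tilde \Pi}(g)$, where $\Pi$ and $\tilde\Pi$ denote the stationary distributions of $P$ and $\tilde P$ respectively. The central estimate to establish is $\cE_{\tilde P}(g) \geq 2\, \cE_P(\tilde g)$; both claims of the lemma then follow by routine rearrangement.

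To establish this Dirichlet-form inequality, I would decompose $\cE_P(\tilde g)$ according to the three proposal types in~\eqref{eq:def-Q-joint}: the Gibbs update of $j$, the $x$-move inherited from $Q^*$, and the $i$-move inherited from $Q^*$. Since $\tilde g$ is $j$-invariant, the Gibbs piece contributes zero, leaving the $x$- and $i$-pieces, each scaled by the proposal weight $1/2$. For the $x$-piece at temperature $i$, I would rewrite each local flow in its symmetric Metropolis form $\pi(i,j,x)\, q_i(x,x')\, a_P((i,j,x),(i,j,x')) = \min\{\pi(i,j,x) q_i(x,x'),\, \pi(i,j,x') q_i(x',x)\}$ and sum over $j$ using the elementary inequality $\sum_j \min(a_j,b_j) \leq \min(\sum_j a_j, \sum_j b_j)$. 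Since $\sum_j \pi(i,j,x) = \pi(i,x)$, the upper bound is precisely the local flow for $\tilde P$, i.e., $\pi(i,x)\, q_i(x,x')\, a_{\tilde P}((i,x),(i,x'))$. The $i$-piece is handled by the same min-inequality with $q(i,i')$ in place of $q_i(x,x')$. Assembling the two pieces and using that the squared increment $[g(i',x')-g(i,x)]^2$ has no $j$-dependence, one obtains $\cE_P(\tilde g) \leq \tfrac{1}{2}\, \cE_{\tilde P}(g)$, where the factor $\tfrac{1}{2}$ is nothing more than the coin-flip weighting between the Gibbs step and the $Q^*$ step in $Q$.

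Given this comparison, the two conclusions follow quickly. For the spectral gap, for any $g$ we have $\cE_{\tilde P}(g)/V_{\tilde \Pi}(g) \geq 2\, \cE_P(\tilde g)/V_\Pi(\tilde g) \geq 2\, \Gap(P)$, and taking the infimum over $g$ yields $\Gap(\tilde P) \geq 2\, \Gap(P)$. For the $s$-conductance, fix $A \subseteq [\T]\times \bbR^d$ with $\tilde\Pi(A) \in (s, 1/2]$, set $\tilde A = \{(i,j,x) : (i,x) \in A\}$, and apply the Dirichlet-form comparison to $g = \ind_A$ and $\tilde g = \ind_{\tilde A}$; this gives $\tilde P(A, A^c) = \cE_{\tilde P}(\ind_A) \geq 2\, \cE_P(\ind_{\tilde A}) = 2\, P(\tilde A, \tilde A^c) \geq 2\, \Phi_s(P)\,(\Pi(\tilde A) - s)$, and $\Pi(\tilde A) = \tilde\Pi(A)$ then closes the argument. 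The only real technical obstacle is verifying the sum-min interchange cleanly for both the $x$- and $i$-moves; once that is in hand, the factor $2$ drops out mechanically from the $\tfrac12/\tfrac12$ split in $Q$.
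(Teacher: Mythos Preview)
Your proposal is correct and follows essentially the same route as the paper: lift functions on $[\T]\times\bbR^d$ to $j$-independent functions on $[\T]\times[\K]\times\bbR^d$, observe that the Gibbs-in-$j$ piece of $\cE_P$ vanishes, and dominate the remaining $x$- and $i$-pieces by the corresponding pieces of $\cE_{\tilde P}$, with the factor $2$ arising from the $\tfrac12/\tfrac12$ split in~\eqref{eq:def-Q-joint}. The only cosmetic difference is that the paper packages your sum--min inequality $\sum_j \min(a_j,b_j)\le\min(\sum_j a_j,\sum_j b_j)$ as a separate lemma (\cref{lm:compare-acc}), proved via Jensen applied to the concave function $\min\{1,\cdot\}$; the two formulations are equivalent once one writes the Metropolis flow as $\min\{\pi(i,j,x)q_i(x,x'),\,\pi(i,j,x')q_i(x',x)\}$.
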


\begin{proof} 
Fix an integrable measurable function $\tilde{g} \colon [\T] \times \bbR^d \rightarrow \bbR$,  and define $g$ on $ [\T] \times [\K] \times \bbR^d $ by $g(i, j, x) = \tilde{g}(i, x)$. 
Consider the Dirichlet form $\cE_P(g)$. Since $P$ involves three types of proposals (change $i$, change $j$, or change $x$) and $g$ is independent of $j$, we can decompose $\cE_P(g)$ as 
\begin{align*}
  \cE_P(g) = \cE_{P, 0} (g) + \cE_{P, 1}(g), 
\end{align*}
where 
\begin{align}
  \cE_{P, 0} (g) &= \frac{1}{2} \sum_{i, i' \in [\T], j \in [\K]}     \int   [ \tilde{g}(i, x) - \tilde{g}(i', x)  ]^2  \pi(i, j, x) \,  \frac{\alpha}{2} q(i, i') a_P( (i, j, x), \, (i', j, x) ) \, \dd x, \\  
  \cE_{P, 1} (g) &= \frac{1}{2} \sum_{i  \in [\T], j \in [\K]}   \int \int    [ \tilde{g}(i, x) - \tilde{g}(i', x)  ]^2  \pi(i, j, x) \,  \frac{1 - \alpha}{2} q_i(x, x') a_P( (i, j, x), \, (i, j, x') ) \, \dd x \, \dd x'. 
\end{align}

By~\cref{lm:compare-acc}, 
\begin{align}
    2\cE_{P, 0}(g) &\leq \frac{1}{2} \sum_{i, i' \in [\T]}    \int   [ \tilde{g}(i, x) - \tilde{g}(i', x)  ]^2  \pi(i, x) \,  \alpha \, q(i, i') a_{\tilde{P}}( (i, x), \, (i', x) )  \, \dd x, \\ 
    2\cE_{P, 1} (g) &\leq \frac{1}{2} \sum_{i  \in [\T]}  \int \int    [ \tilde{g}(i, x) - \tilde{g}(i', x)  ]^2  \pi(i,  x) \,  (1 - \alpha) q_i(x, x') a_{\tilde{P}}( (i,   x), \, (i,   x') ) \, \dd x \, \dd x'. 
\end{align}
Observe that the sum of the right-hand sides  of the above two inequalities equals $\cE_{\tilde{P}}(\tilde{g})$. Hence,  
\begin{equation}\label{eq:compare-two-diri}
\cE_P(g) = \cE_{P, 0} (g) + \cE_{P, 1}(g) \leq \frac{1}{2} \cE_{\tilde{P}}(\tilde{g}). 
\end{equation}

Let $\Pi$ denote the probability measure on   $[\T] \times [\K] \times \bbR^d $ with density $\pi(i, j, x)$, and let $\tilde{\Pi}$ denote the probability measure on $[\T] \times \bbR^d$ with density $\pi(i, x)$.  Since $\pi(i, x)$ is the marginal of $\pi(i, j, x)$, we have $\Pi(g) = \tilde{\Pi}(\tilde{g})$ and $\Var_\Pi(g) = \Var_{\tilde{\Pi}}(\tilde{g})$. 
The claimed inequality for spectral gaps then follows from~\eqref{eq:compare-two-diri} and the definition of spectral gap. By restricting $\tilde{g}$ to indicator functions, we can repeat the same argument to prove the claimed inequality for $s$-conductance. 
\end{proof}

\section{Polynomial Mixing Time Bounds for Simulated Tempering}\label{sec:poly}

\subsection{Main Results and Proof Outline} \label{sec:outline} 

We now state our main results. For ease of presentation, we assume that the density $q(i, i')$ satisfies 
\begin{equation}\label{ass:Q}
    q(i, i + 1) = q(i + 1, i) = \qmin, \; \forall \, i\in [\T - 1], 
\end{equation}
where $\qmin \in (0, 1)$ is a constant. This assumption is standard and usually satisfied in  practice. For the temperature ladder, we use  
\begin{equation}\label{eq:choice-beta}
\T = \lceil (\kappa \sqrt{d} + 1) \log (4 L D^2 + 1) \rceil, \text{  and  } 
\frac{\beta_{i+1}}{\beta_i} =  1 + \frac{1}{\kappa \sqrt{d} } \text{  for } i \in [\T - 1], 
\end{equation}
where we recall $\kappa = L/m$. These conditions  guarantee that $\beta_1 \leq (4 L D^2)^{-1}$.  
We characterize the complexity of the simulated tempering algorithm with respect to the dimension $d$, target accuracy $\epsilon$,  proposal parameters $\alpha$ and $\qmin$,  smoothness parameter $L$, strong convexity parameter $m$, and mixture distribution parameters 
\begin{equation}\label{eq:def-complex-parameters}
D = \max_{j \in [\K]} \| \mu_j \|,   \quad  \wmin = \min_{j \in [\K]} w_j,  \quad
\rmin  = \min_{i \in [\T]} r_i,  \quad \tilde{r} = \min_{i, i' \in [\T]}  \frac{r_{i'} }{ r_i}. 
\end{equation}
Note that if one chooses $\K = 2$, $\mu_1 = \one$ and $\mu_2 = -\one$, then $D = \sqrt{d}$, which grows with the dimension $d$. 
Let $P^*_{\mathrm{lazy}}$ denote the lazy version of $P^*$, i.e., $P^*_{\mathrm{lazy}}(x, A) = (P^*(x, A) + \delta_x(A) ) / 2$ for any $A \in \cB(\bbR^d)$.  

\begin{theorem}\label{th:main}
Consider the simulated tempering chain $P^*$ defined in \cref{sec:st-mtk} with stationary density $\pi^*$ given by~\eqref{eq:target-pi-star}. Suppose Assumption~\ref{ass:f} holds, $q(i, i')$ satisfies~\eqref{ass:Q}, 
$\T$ and $(\beta_i)_{i \in [\T]}$ are given by~\eqref{eq:choice-beta}. 
Define
\begin{equation}
    \tau = \alpha  (1 - \alpha) \, \qmin \, \tilde{r}.  
\end{equation}
Let $\epsilon, \eta > 0$ and $Q_i$ be either of the following two kernels:
\begin{align}
    Q_i(x, \cdot)  &= N(x, (2h \beta_i^{-1}) I_d), \text{ with } h = \frac{1}{L d}, \\ 
    Q_i(x, \cdot) &= N(x - h \nabla U(x), (2h \beta_i^{-1}) I_d), \text{ with } h = \frac{c}{L^2 (D + \cR_{\eta, \epsilon})^2 d}, 
\end{align}
where $c \leq 0.01$ and 
\begin{equation}\label{eq:def-Rs}
    \cR_{\eta, \epsilon} = \frac{2}{\sqrt{m}} \max \left\{ \sqrt{d}, \, \sqrt{ 2 \log \frac{86 \, \T \eta}{ \tau \, \rmin \, \wmin \, \epsilon } } \right\}.  
\end{equation}
Given  an $\eta$-warm initial distribution $\nu$, we have $\| \nu (P^*_{\mathrm{lazy}})^t - \Pi \|_{\TV} \leq \epsilon$ for  
    \begin{equation}
        t \geq  \frac{c' \T^2}{h    m \,  \tau^2 \, \rmin^2 \, \wmin^4 } \log \frac{2 \eta}{\epsilon}, 
    \end{equation}
where $c' > 0$ denotes some universal constant. 
\end{theorem}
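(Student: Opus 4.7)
The plan is to chain the two comparison results \cref{lm:compare-st-one} and \cref{lm:compare-st-two} with the state decomposition \cref{th:conduct-decomp} applied to the auxiliary chain $P$. Writing
\[
\Phi_s(P^*) \;\geq\; \wmin^2 \, \Phi_{s\wmin}(\tilde{P}) \;\geq\; 2\wmin^2 \, \Phi_{s\wmin}(P),
\]
the problem reduces to lower-bounding the $s$-conductance of $P$ on the enlarged space $[\T]\times[\K]\times\bbR^d$, where $\pi$ factorizes cleanly as $\pi(i,j)\pi_{i,j}(x)$. Applying \cref{th:conduct-decomp} to the natural partition $\cX_{i,j}=\{i\}\times\{j\}\times\bbR^d$ then splits the task into a lower bound on $\min_{i,j}\Phi_{s'}(P_{i,j})$ for the restricted chains on $\bbR^d$ and a lower bound on the spectral gap of the projected chain $\bar{P}$ on $[\T]\times[\K]$. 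Passing to $P^*_{\mathrm{lazy}}$ at the very end absorbs the laziness hypothesis needed by both the decomposition theorem and the TV bound \eqref{eq:s-conduct-tv-rate}.

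For the restricted chains, each $\pi_{i,j}\propto e^{-\beta_i f(\cdot-\mu_j)}$ is $(m\beta_i)$-strongly log-concave and $(L\beta_i)$-log-smooth. After factoring out the constant self-loop probabilities introduced by the auxiliary $i$- and $j$-updates in the definition of $P$, the restricted chain behaves, up to a constant factor $(1-\alpha)/4$, like a lazy MH chain on $\bbR^d$ with proposal $Q_i$ and target $\pi_{i,j}$. For the RWM kernel this is standard lazy RWM on a log-concave target, whose $s$-conductance at step size $h/\beta_i$ with $h=1/(Ld)$ I would control via log-concave isoperimetry as summarized in the paper's preliminary $s$-conductance--isoperimetry lemma. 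The MALA kernel produces the ``pseudo-MALA'' chain flagged in \cref{sec:intro-proof}: its drift $h\nabla U(x)$ does not match the local gradient $h\beta_i\nabla f(x-\mu_j)$, so the optimal MALA conductance bounds from \citet{chewi2021optimal, wu2022minimax} do not directly apply. I would instead adapt the argument of \cref{lm:mala}, working on the ball of radius $\cR_{\eta,\epsilon}$, which carries $\pi_{i,j}$-mass at least $1-\Omega(\tau\rmin\wmin\epsilon/(\T\eta))$ by Gaussian concentration, and on which the choice $h=c/(L^2(D+\cR_{\eta,\epsilon})^2 d)$ keeps the acceptance ratio $\Omega(1)$ despite the drift mismatch.

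For the projected chain $\bar{P}$ on $[\T]\times[\K]$, I would exploit its split structure: with probability $1/2$ it performs a full-conditional Gibbs update of $j$ (collapsed over $x$), and with probability $1/2$ it attempts a nearest-neighbour temperature swap on $i$. The geometric ladder $\beta_{i+1}/\beta_i=1+(\kappa\sqrt{d})^{-1}$ is chosen exactly so that the Hellinger overlap $\int\sqrt{\pi_i\pi_{i+1}}\,\dd x$ stays $\Omega(1)$, which via the Metropolis acceptance formula makes each swap succeed with probability $\Omega(\tau\tilde{r})$. Chaining these swaps along the ladder of length $\T$ by canonical paths, and combining with the fast $j$-mixing at the top level, yields the sought lower bound on $\Gap(\bar{P})$ depending polynomially on $1/\T$, $\tau$, $\rmin$, and $\wmin$. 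The top-level $j$-mixing is exactly what the condition $\beta_1\leq(4LD^2)^{-1}$ secures: at that temperature, $\pi_1$ looks like a single wide quadratic bump that covers all the modes, so the $j$-Gibbs weights stay within a constant factor of $w_j$ uniformly in $x$, and mixing then propagates down the ladder through the swaps.

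Combining these two estimates through \cref{th:conduct-decomp} (or its sharper companion \cref{th:conduct-decomp-general}) yields an explicit lower bound on $\Phi_{s\wmin}(P)$, and back-substituting the factors $\wmin^2$ from \cref{lm:compare-st-one} and $1/2$ from \cref{lm:compare-st-two} gives the matching bound for $\Phi_s(P^*_{\mathrm{lazy}})$; choosing $s=\epsilon/(2\eta)$ in \eqref{eq:s-conduct-tv-rate} and solving for $t$ then reproduces the stated iteration count. The main obstacle will be the spectral-gap estimate for $\bar{P}$: both the top-level coupling of the $\K$ modes and the uniform control of adjacent-temperature overlaps along the full ladder of length $\T\asymp \kappa\sqrt{d}\log(LD^2)$ must be established tightly, as any slack propagates quadratically into the final mixing time through the $\Phi_s^2$ in \eqref{eq:s-conduct-tv-rate}. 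A secondary difficulty is the pseudo-MALA conductance estimate, where the drift mismatch prevents importing the optimal $\sqrt{d}$-step MALA bounds from the log-concave sampling literature and forces a more delicate argument on a truncated high-probability region.
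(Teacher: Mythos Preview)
Your proposal is correct and follows essentially the same route as the paper: chain \cref{lm:compare-st-one} and \cref{lm:compare-st-two} to pass from $P^*$ to $P$, apply \cref{th:conduct-decomp} on the product partition $\{i\}\times\{j\}\times\bbR^d$, bound the restricted chains via \cref{lm:rwm} or \cref{lm:mala}, bound $\Gap(\barP)$ by canonical paths (\cref{lm:project-chain}) with the overlap estimates of \cref{lm:H-delta}, and finish with~\eqref{eq:s-conduct-tv-rate} at $s=\epsilon/(2\eta)$. Two small corrections to your sketch: the paper controls the adjacent-temperature swap acceptance via KL divergence and Pinsker (Hellinger is reserved for the $j$-move at level $\beta_1$), and the Gibbs weights $\pi_{1,x}(j)$ are \emph{not} uniformly close to $w_j$ in $x$---the paper instead averages over $x\sim\pi_{1,j}$ and applies Cauchy--Schwarz to get the Hellinger-affinity lower bound in \cref{lm:project-chain}.
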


\begin{proof}
By the definition of lazy Markov chains, we have $\Phi_{s}( P^*_{\mathrm{lazy}} ) = \Phi_{s}( P^*) / 2$. Combining \cref{lm:compare-st-one} and \cref{lm:compare-st-two}, we get $\Phi_{s}( P^*) \geq \wmin^2 \Phi_{s \, \wmin} (\tilde{P}) 
\geq 2 \wmin^2 \Phi_{s \, \wmin} (P) $ for any $s \in [0, 1/2)$. Hence, by~\eqref{eq:s-conduct-tv-rate}, the claims hold if 
\begin{equation}\label{eq:tv-bound-t}
     \frac{ 2 \log (2 \eta / \epsilon)}{ \Phi^2_{\epsilon / 2\eta}( P^*_{\mathrm{lazy}} )} = \frac{ 8 \log (2 \eta / \epsilon)}{ \Phi^2_{\epsilon / 2\eta}( P^*  )}  \leq  \frac{ 2\log (2 \eta / \epsilon)}{  \wmin^4 \Phi_{s   }^2(P) }   \leq t, \text{ where } s = \frac{\epsilon \, \wmin}{ 2 \eta}. 
\end{equation}

Let $\Theta = [\T] \times [\K] \times \bbR^d$ denote the state space of $P$, which admits the straightforward partition $\Theta = \cup_{i \in [\T], j \in [\K]} \Theta_{i, j}$ with $\Theta_{i, j} = \{ i \} \times \{ j \} \times \bbR^d$. 
Let $\barP$ denote the resulting projected chain and $P_{i, j}$ the restricted chain on $\Theta_{i, j}$. 
By \cref{th:conduct-decomp},
\begin{equation}\label{eq:decomp-st}
    \Phi_s(P) \geq \frac{\Gap(\barP)}{8} \min_{i \in [\T], j \in [\K]} \Phi_{s'} (P_{i, j}), 
\end{equation}
where $s' = s  \Gap(\barP) / 8$. 
We will analyze the spectral gap of $\barP$ in \cref{sec:gap-projected}. The main findings are summarized in \cref{lm:project-chain} and \cref{lm:H-delta}, from which we get  
\begin{equation}\label{eq:gap-barP-mix}
    \Gap(\barP) \geq \frac{3 \, \alpha \, \tilde{r} \, \rmin \, \qmin  }{16 \, \T}.   
\end{equation} 
Therefore, to use~\eqref{eq:decomp-st}, it remains to bound $\Phi_{\tilde{s}}(P_{i, j})$ where 
\begin{equation}
    \tilde{s} = \frac{3\, \alpha \, \tilde{r} \, \rmin \, \qmin  }{16 \, \T}  \frac{s}{8}  =  \frac{3 \, \alpha \, \tilde{r} \, \rmin \, \qmin \, \epsilon \, \wmin  }{256 \, \T \eta}   \geq \frac{\tau \, \rmin \,  \wmin  \epsilon}{86 \, \T \eta}. 
\end{equation}  

Let $c'$ denote a positive universal constant whose value may change between occurrences. 
The analysis of the conductance of $P_{i, j}$ will be detailed in \cref{sec:conduct-restrict}. 
By \cref{lm:rwm} and \cref{lm:mala} therein, the assumption on the step size $h$ ensures that $\Phi_{\tilde{s}}(P_{i, j}) \geq c' (1 - \alpha) \sqrt{h m}$. Hence, \cref{th:conduct-decomp} yields that 
\begin{align}
    \Phi_s(P) \geq \frac{  \alpha \, \tilde{r} \, \rmin \, \qmin  }{40 \T}   \min_{i, j} \Phi_{\tilde{s}} (P_{i, j}) \geq  
    \frac{ c'  \tau \, \rmin }{ \T }  \sqrt{h m}. 
\end{align}
Plugging this bound back into~\eqref{eq:tv-bound-t} completes the proof. 
\end{proof}

\begin{remark}\label{rmk:constant} 
Ideally, one wants to set $r_i = 1 / \T$  for each $i \in [\T]$ so that $\rmin = 1 / \T$ and $\tilde{r} = 1$. But since in practice, the normalizing constant 
\begin{equation}
    Z^*_i = \int_{\bbR^d} \pi^*(y)^{\beta_i} \dd y. 
\end{equation} 
is typically unknown, one implements the simulated tempering algorithm with $Z^*_i$ replaced by some estimate $\hat{Z}^*_i$, which effectively sets $r_i \propto Z^*_i / \hat{Z}^*_i$ by~\eqref{eq:def-pi-star-ix}.  
As shown in~\citet{lee2018beyond}, these estimates $(\hat{Z}^*_i)_{i \in [\T]}$ can be obtained iteratively. First, one runs the simulated tempering algorithm with only one inverse temperature $\beta_1$ and uses the samples to estimate $Z^*_2/ Z^*_1$. Then, one runs the algorithm with inverse temperatures $\beta_1, \beta_2$ to estimate $Z^*_3 / Z^*_2$, and so on. The polynomial-time guarantees provided in \cref{th:main} ensure that only a polynomial number of samples is required to obtain $(\hat{Z}^*_i)_{i \in [\T]}$ such that  $r_i = \Theta(\T^{-1})$ for each $i$ with high probability. We refer readers to~\citet{ge2018simulated, garg2025restricted} for details. 
\end{remark}

For RWM proposal, we can get a better bound by directly decomposing the spectral gap instead of $s$-conductance. Recall that given two probability measures $\Pi_1$ and $\Pi_2$, the $\chi^2$-divergence is defined as $\chi^2( \Pi_2 \,\|\, \Pi_1) = \Var_{\Pi_1} ( \dd \Pi_2 / \dd \Pi_1 )$. 

\begin{corollary}\label{coro:main-rwm}
Consider the setting of \cref{th:main}, and let $Q_i(x, \cdot)  = N(x, (2h \beta_i^{-1}) I_d)$ with $h = 1/(Ld)$. 
Given an initial distribution $\nu$ with $\chi^2( \nu   \, \| \, \Pi ) \leq \tilde{\eta}^2$ for some $\tilde{\eta} > 0$, we have $\| \nu (P^*_{\mathrm{lazy}})^t - \Pi \|_{\TV} \leq \epsilon$ for  
\begin{equation}
        t \geq  \frac{c' d  \kappa  \T}{ \tau  \rmin \,  \wmin^2 } \log \frac{ \tilde{\eta}}{2\epsilon}, 
\end{equation}
where $c' > 0$ denotes some universal constant.  
\end{corollary}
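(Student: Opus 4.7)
The plan is to repeat the overall strategy of \cref{th:main} but substitute the spectral-gap decomposition bound~\eqref{eq:gap-decomp} for the $s$-conductance decomposition of \cref{th:conduct-decomp}. This substitution avoids the costly $\wmin^4$ factor appearing in the $s$-conductance proof of \cref{th:main}, since parts~(iii) of \cref{lm:compare-st-one} and \cref{lm:compare-st-two} give the cleaner chain of inequalities
\begin{equation}
    \Gap(P^*_{\mathrm{lazy}}) = \tfrac{1}{2} \Gap(P^*) \geq \tfrac{1}{2} \wmin^2 \Gap(\tilde{P}) \geq \wmin^2 \Gap(P),
\end{equation}
costing only a $\wmin^2$ factor. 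Moreover, a $\chi^2$-warm start allows conversion from spectral gap to TV distance via the standard contraction inequality
\begin{equation}
    \| \nu (P^*_{\mathrm{lazy}})^t - \Pi \|_{\TV}^2 \leq \tfrac{1}{4}\, e^{-2 t\, \Gap(P^*_{\mathrm{lazy}}) }\, \chi^2(\nu \,\|\, \Pi),
\end{equation}
which saves an extra logarithmic factor relative to the $s$-conductance route that would pass through~\eqref{eq:s-conduct-tv-rate}.

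Next I would apply~\eqref{eq:gap-decomp} to $P$ with the partition $\Theta_{i,j} = \{i\}\times\{j\}\times\bbR^d$ used in the proof of \cref{th:main}, obtaining
\begin{equation}
    \Gap(P) \geq \tfrac{1}{2}\, \Gap(\bar P) \min_{i,j} \Gap(P_{i,j}).
\end{equation}
The projected-chain bound $\Gap(\bar P) \geq 3 \alpha \tilde{r} \rmin \qmin /(16\T)$ established in the proof of \cref{th:main} is reusable verbatim, so it remains only to lower-bound $\Gap(P_{i,j})$.

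The main obstacle is controlling this restricted-chain gap. When identified with a chain on $\bbR^d$, $P_{i,j}$ is the pure RWM chain targeting $\pi_{i,j}(x) \propto e^{-\beta_i f(x - \mu_j)}$ with proposal variance $2h / \beta_i$, further slowed by the probability $(1-\alpha)/2$ with which $P$ attempts an $x$-update (the $i$- and $j$-updates only contribute to the self-loop probability when restricted to $\Theta_{i,j}$). Since $\beta_i f(\cdot - \mu_j)$ is $\beta_i L$-smooth and $\beta_i m$-strongly convex with condition number $\kappa$, and the proposal variance $2h/\beta_i = 2/(\beta_i L d)$ is the standard step size for RWM on such a target, I would invoke the known spectral-gap bound of order $1/(\kappa d)$ for Metropolized RWM on strongly log-concave targets --- obtained by the same type of conductance-times-isoperimetry argument as in \cref{lm:rwm}, but carried out at $s = 0$ using the log-Sobolev or Cheeger inequality for strongly log-concave measures. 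Together with the laziness factor $(1-\alpha)/2$, this yields $\Gap(P_{i,j}) \geq c(1-\alpha)/(\kappa d)$ for some universal constant $c > 0$.

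Combining the three bounds and using $\tau = \alpha(1-\alpha)\qmin \tilde r$ produces $\Gap(P^*_{\mathrm{lazy}}) \geq c' \wmin^2 \tau \rmin /(\T \kappa d)$ for a universal constant $c' > 0$. Substituting into the $\chi^2$ contraction inequality above and solving for $t$ gives the claimed mixing-time bound.
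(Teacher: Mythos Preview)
Your proposal is correct and follows essentially the same route as the paper: pass from TV to $\chi^2$, contract $\chi^2$ at rate $\Gap(P^*_{\mathrm{lazy}})$, transfer to $\Gap(P)$ via \cref{lm:compare-st-one}(iii) and \cref{lm:compare-st-two}, then apply the spectral-gap decomposition~\eqref{eq:gap-decomp} together with the projected-chain bound~\eqref{eq:gap-barP-mix} and the RWM restricted-chain bound. One small simplification: you do not need to re-derive the $\Gap(P_{i,j})$ bound via an isoperimetry/Cheeger argument at $s=0$, since \cref{lm:rwm} already states $\Gap(P_{i,j}) \geq c(1-\alpha)/(d\kappa)$ directly (it quotes a spectral-gap result of Andrieu et al.), so you can simply invoke it.
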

\begin{proof}
Since the TV distance can be bounded by the $\chi^2$-divergence via the Cauchy--Schwarz inequality and spectral gap governs the convergence rate in the $\chi^2$-divergence~\citep{montenegro2006mathematical}, we get 
\begin{align*}
    \| \nu (P^*_{\mathrm{lazy}})^t - \Pi \|_{\TV} \leq \frac{1}{2}  \sqrt{ \chi^2( \nu (P^*_{\mathrm{lazy}})^t \, \|  \, \Pi ) } \leq \frac{1}{2} (1 - \Gap(P^*_{\mathrm{lazy}}))^t \sqrt{ \chi^2( \nu   \, \| \, \Pi ) }. 
\end{align*}
Hence, it suffices to choose  
\begin{equation}\label{eq:coro-rwm-1}
     \frac{1}{\Gap(P^*_{\mathrm{lazy}})} \log \frac{\tilde{\eta}}{ 2 \epsilon} =    \frac{2}{\Gap(P^*)} \log \frac{\tilde{\eta}}{ 2 \epsilon}  \leq  \frac{1}{\wmin^2 \Gap(P)} \log \frac{\tilde{\eta}}{ 2 \epsilon}  \leq t, 
\end{equation}
where we have used \cref{lm:compare-st-one} and \cref{lm:compare-st-two} to get $\Gap(P^*) \geq 2 \wmin^2 \Gap(P)$. 
By~\eqref{eq:gap-decomp}, 
\begin{equation}
    \Gap(P)  \geq \frac{1}{2} \Gap(\barP) \min_{i, j} \Gap(P_{i, j}) \geq \frac{3\, \alpha\, \tilde{r} \,\rmin \, \qmin}{32 \, \T} \frac{c' (1 - \alpha)}{d \kappa} \geq \frac{c' \, \tau \, \rmin }{11   d   \kappa   T}, 
\end{equation}
for some universal constant $c' > 0$, where we bound $\Gap(\barP)$ using~\eqref{eq:gap-barP-mix} and bound $\Gap(P_{i, j})$ by \cref{lm:rwm}. The claim then follows from~\eqref{eq:coro-rwm-1}.  
\end{proof}

\begin{remark}\label{rmk:order}
Assume $\tau = \alpha (1-\alpha) \, \qmin \, \tilde{r} = \Theta(1)$,  $\rmin = \Theta(\T^{-1})$,  and let $d, L D^2 \rightarrow \infty$.  
Using the expression for $\T$ given in~\eqref{eq:choice-beta} and \cref{coro:main-rwm}, we find that in order to reach $\epsilon$-accuracy in TV distance, the number of simulated tempering iterations needed is 
\begin{align}\label{eq:comp-rwm}
O \left( \frac{ d^2 \kappa^3  \log^2 (LD) }{\wmin^2}   \log \frac{\tilde{\eta}}{2\epsilon} \right) \text{ when $Q_i$ is the RWM proposal.} 
\end{align}
By \cref{th:main}, assuming $\log (T \eta / \epsilon) = O(d)$, the number of iterations needed is 
\begin{equation}\label{eq:comp-mala}
    O \left(  \frac{ d^3 (d + m D^2) \kappa^6   \log^4 (LD)  }{\wmin^4}  \log \frac{2\eta}{\epsilon} \right) \text{ when $Q_i$ is the MALA proposal.}
\end{equation}
The bound for the MALA proposal is roughly the square of that for the RWM proposal mainly for two reasons. 
First, the approach we take in \cref{sec:conduct-restrict} to bounding the $s$-conductance is likely suboptimal and underestimates the $s$-conductance of $P_{i, j}$ with the MALA proposal. Second, when estimating the mixing time, we use the fact that the TV distance decays exponentially at rate  $\Phi_s^2(P)$ or $\Gap(P)$. Since both the decomposition bound for $\Phi_s(P)$ given in~\cref{th:conduct-decomp}  and that for $\Gap(P)$ given in~\eqref{eq:gap-decomp} are proportional to $\Gap(\barP)$, the $s$-conductance approach effectively squares the dependence on $\Gap(\barP)$, and by Remark~\ref{rmk:project-gap},  $1/\Gap(\barP)$ typically has order $\Theta(\T^2) = \Theta( d \kappa^2 \log^2(LD) )$. 
\end{remark}

\subsection{Spectral Gap of the Projected Chain}\label{sec:gap-projected}

Consider the projected chain $\barP$. By \cref{def:decomp}, for our analysis the transition matrix of $\barP$ can be expressed by 
\begin{equation}
    \barP( (i, j), \, (i', j') ) 
    =  \int_{\bbR^d}   \pi_{i,j}(x) P( (i, j, x), \, (i', j', x ) ) \, \dd x, \text{ if } (i,j) \neq (i', j'), 
\end{equation}
where we have used that in each transition governed by $P$, $x$ cannot be changed if either $i$ or $j$ is modified. 
The stationary distribution of $\barP$ is given by $\pi(i, j) =  r_i w_j$. 
To bound $\Gap(\barP)$, we  use the standard canonical path method~\citep{sinclair1992improved}.  The paths are constructed based on the following intuition: if $\beta_1$ is sufficiently small, the component distributions $(\pi_{1, j})_{j \in [\K]}$ overlap substantially, allowing $\barP$  to easily move between $(1, j)$ and $(1, j')$ for any $j \neq j'$; at lower temperatures, for example at $\beta_{\T} = 1$, the chain $\barP$ can move from $(\T, j)$ to $(\T, j')$ by following   the path 
\begin{equation}
    (\T, j) \rightarrow (\T-1, j) \rightarrow \cdots \rightarrow (1, j) \rightarrow (1, j') \rightarrow (2, j') \rightarrow \cdots \rightarrow (\T, j'). 
\end{equation}
Since this argument is commonly used in the simulated tempering literature, we give the details in \cref{lm:canonical-path}.  
But different from existing results~\citep{ge2018simulated, garg2025restricted},  we bound the transition probabilities $\barP( (1, j), \, (1, j'))$ using the Hellinger affinity between two adjacent component distributions, as we describe in the following lemma. 

\begin{lemma}\label{lm:project-chain}
Consider the projected chain $\barP$ defined above. 
Let $\qmin$ be as given in~\eqref{ass:Q} and $\rmin, \tilde{r}$ be as given in~\eqref{eq:def-complex-parameters}.  Define 
\begin{align}  
    \Delta   = \frac{1}{2} \max_{i \in [\T - 1], j \in [\K]} \KL( \Pi_{i, j} \,\|\, \Pi_{ i + 1 , j } ), \quad 
    H  = \min_{j, j' \in [\K]} \int_{\bbR^d} \sqrt{ \pi_{1, j}(x) \pi_{1, j'}(x) } \,  \dd x.  \label{eq:def-H}
\end{align}
Then, 
\begin{equation}
    \Gap(\barP) \geq \frac{ \alpha \, \tilde{r} \, \rmin \, \qmin  }{4\T} \min \left\{ 1 - \sqrt{\Delta}, \, H^2 \right\}. 
\end{equation}
\end{lemma}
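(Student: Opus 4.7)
The plan is to apply the canonical path method (invoking the standard result stated as \cref{lm:canonical-path}). For each pair of states $((i_1, j_1), (i_2, j_2)) \in ([\T]\times[\K])^2$, I would select the path previewed in the text, namely
\[
(i_1, j_1) \to (i_1 - 1, j_1) \to \cdots \to (1, j_1) \to (1, j_2) \to (2, j_2) \to \cdots \to (i_2, j_2),
\]
of length at most $2\T - 1$. With this routing, the only edges of $\barP$ ever traversed are of two types: temperature-swap edges $\{(i,j),(i+1,j)\}$ and component-swap edges $\{(1,j),(1,j')\}$. The proof then reduces to (i) lower-bounding the transition probabilities on these two kinds of edges in terms of $\Delta$ and $H$ respectively, and (ii) bounding the congestion carried by each edge.

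For a temperature-swap edge, only the $i$-update of $P$ contributes, since in a single step neither the $x$-update nor the $j$-update changes $i$. Plugging $\pi(i,j,x) = r_i w_j \pi_{i,j}(x)$ into the Metropolis ratio and pulling out $r_{i+1}/r_i \geq \tilde{r}$ gives
\[
\barP((i,j),(i+1,j)) \; = \; \tfrac{\alpha\, \qmin}{2}\int \min\bigl\{\pi_{i,j}(x),\, (r_{i+1}/r_i)\pi_{i+1,j}(x)\bigr\}\, \dd x \; \geq \; \tfrac{\alpha\,\qmin\,\tilde{r}}{2}\bigl(1 - \TV(\Pi_{i,j}, \Pi_{i+1,j})\bigr).
\]
Pinsker's inequality combined with the definition $\Delta = \tfrac{1}{2}\max \KL$ then replaces $1-\TV$ by $1-\sqrt{\Delta}$. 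For a component-swap edge at $i=1$, only the $j$-update contributes, and since this is a Gibbs draw from $\pi_{1,x}(\cdot)$ it is accepted with probability one; a direct substitution gives $\barP((1,j),(1,j')) = (w_{j'}/2) \int \pi_{1,j}(x)\pi_{1,j'}(x)/(\sum_k w_k \pi_{1,k}(x))\, \dd x$. Cauchy-Schwarz applied to $\int \sqrt{\pi_{1,j}\pi_{1,j'}} = \int \sqrt{\pi_{1,j}\pi_{1,j'}/(\sum_k w_k \pi_{1,k})}\cdot \sqrt{\sum_k w_k \pi_{1,k}}$, together with $\int \sum_k w_k \pi_{1,k} = 1$, yields $\barP((1,j),(1,j')) \geq (w_{j'}/2) H^2$.

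The final step is the congestion count, and the crucial feature is that the $w_j$-factors cancel. For a temperature-swap edge in slice $j$, the stationary-weighted transition mass $\pi(i,j)\barP(e) \gtrsim r_{\min} w_j \cdot \alpha\,\qmin\,\tilde{r}(1-\sqrt{\Delta})$ carries a factor of $w_j$, while the flow through the edge is bounded by $2 w_j$ since every path using it has at least one endpoint in slice $j$. For a component-swap edge $\{(1,j),(1,j')\}$, the stationary-weighted transition mass is proportional to $r_{\min} w_j w_{j'} H^2$, and the flow through it comes only from pairs with $\{j_1,j_2\}=\{j,j'\}$ so is bounded by $2 w_j w_{j'}$. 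Taking the worse of the two ratios, multiplying by the path length $\leq 2\T$, and inverting gives the claimed bound.

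The main obstacle I anticipate is the bookkeeping in the congestion analysis: identifying exactly which canonical paths traverse each edge and confirming that the flow on every temperature edge in slice $j$ is of order $w_j$ (not $1$), so that the $w_{\min}^{-1}$ factor one might fear never appears. A secondary issue is matching constants: the Pinsker step requires keeping track of the factor of $2$ in $\Delta = \tfrac{1}{2}\max\KL$, and the precise constant in the final bound depends on the version of \cref{lm:canonical-path} being invoked.
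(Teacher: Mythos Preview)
Your proposal is correct and follows essentially the same approach as the paper: the same canonical paths, the same Pinsker bound for temperature-swap edges, the same Cauchy--Schwarz bound for component-swap edges, and the same congestion cancellation of $w_j$-factors. The only cosmetic difference is that the paper packages the congestion computation into the separate auxiliary \cref{lm:canonical-path} (which already outputs the quantities $\Lambda_1 = \min_{i,j} r_i\,\barP((i,j),(i+1,j))$ and $\Lambda_2 = \min_{j,j'} (\rmin/w_{j'})\,\barP((1,j),(1,j'))$), whereas you plan to do the congestion count inline; the final step of pulling $\alpha\,\tilde r\,\qmin \le 1$ outside the minimum to match the stated constant is immediate.
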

\begin{proof}
By the standard canonical path argument which we detail in \cref{lm:canonical-path},  we can bound $\Gap(\barP)$ by $\Gap(\barP)  \geq (2 \T)^{-1}  \min\{ \Lambda_1, \Lambda_2 \},$  
where 
\begin{align}
    \Lambda_1 = \min_{i \in [\T - 1], j \in [\K]}  r_i \, \barP( (i, j), \, (i+1, j) ),  \quad 
    \Lambda_2 = \min_{j, j' \in [\K]} \frac{ \rmin }{w_{j'} }  \barP( (1, j), \, (1, j') ).  
\end{align}

For $\Lambda_1$, recall that the proposal scheme $Q$ given in~\eqref{eq:def-Q-joint} proposes a temperature change from level $i$ to level $i+1$ with probability $(\alpha / 2) q(i, i+1)$. Hence, using~\eqref{ass:Q}, 
\begin{align*}
 \barP( (i, j), \, (i+1, j) ) &= \frac{\alpha \, q(i, i+1)}{2} \int \pi_{i, j}(x)  a_{P}( (i, j, x), \, (i+1, j, x) ) \dd x \\
 &= \frac{\alpha \, q(i, i+1)}{2  } \int \pi_{i, j}(x)
 \min\left\{ 1, \,  \frac{ r_{i+1}  \pi_{i + 1, j}(x) }{ r_{i}  \pi_{i, j}(x) } \right\} \dd x \\  
  &\geq  \frac{\alpha \, \tilde{r} \, \qmin  }{2   } \int  \min\left\{   \pi_{i, j}(x), \,    \pi_{i + 1, j}(x)   \right\} \dd x. 
\end{align*} 
Since $\int  \min\left\{   \pi_{i, j}(x), \,    \pi_{i + 1, j}(x)   \right\} \dd x = 1 - \|  \Pi_{i, j} - \Pi_{ i + 1 , j } \|_{\TV}$, applying Pinsker's inequality yields
\begin{equation}\label{eq:Lambda-1}
  \Lambda_1 \geq \rmin \barP( (i, j), \, (i+1, j) )  \geq \frac{1}{2}  \alpha \, \tilde{r}  \, \rmin \, \qmin  \left(1 - \sqrt{ \frac{1}{2}  \KL( \Pi_{i, j} \,\|\, \Pi_{ i + 1 , j } )} \right). 
\end{equation}

For $\Lambda_2$, recall $Q$ proposes changing $j$ with probability $1/2$, and $j'$ is drawn from the full conditional distribution. Using $ \pi_{i, x}(j')  = w_{j'} \pi_{i, j'}(x) / \pi_i(x)$ where $\pi_i(x) = \sum_j w_j \pi_{i, j}(x)$,   we get 
\begin{align*}
 \barP( (1, j), \, (1, j') ) &= \frac{1}{2} \int \pi_{i, j}(x)  \pi_{i, x}(j') \dd x \\ 
 &=  \frac{w_{j'}}{2} \int \frac{ \pi_{i, j}(x) \pi_{i, j'}(x) }{ \pi_i(x) } \dd x \\
 & \geq \frac{w_{j'}}{2} \left\{ \int \sqrt{ \pi_{i, j}(x) \pi_{i, j'}(x) } \dd x   \right\}^2, 
\end{align*}  
where the last step follows from the Cauchy--Schwarz inequality. Therefore,
\begin{equation}\label{eq:Lambda-2}
\Lambda_2 = \min_{j, j' \in [\K]} \frac{\rmin}{w_{j'} }  \barP( (1, j), \, (1, j') ) \geq \frac{1 }{ 2 }  \rmin  H^2. 
\end{equation}  

Combining~\cref{lm:canonical-path},~\eqref{eq:Lambda-1} and~\eqref{eq:Lambda-2}, we get 
\begin{equation}
    \Gap(\barP)  \geq \frac{1}{2 \T}  \min\{ \Lambda_1, \Lambda_2 \}
    \geq \frac{\rmin }{4 \T} \min\{ \alpha \, \tilde{r}  \, \, \qmin  (1 - \sqrt{ \Delta } ), \;     H^2  \},
\end{equation}
from which the claim follows. 
\end{proof}

\begin{remark}\label{rmk:project-gap}
\cref{lm:project-chain} implies that if $\alpha, \qmin, \tilde{r}, 1 - \sqrt{\Delta}, H^2$ are all bounded away from zero and $\rmin = \Theta(\T^{-1})$, then $\Gap(\barP) = \Omega( \T^{-2})$.
This spectral gap bound in general cannot be improved for any reversible simulated tempering algorithm, since the spectral gap of a simple lazy random walk on $[\T]$ is $O(T^{-2})$~\citep[Chap. 2.2]{montenegro2006mathematical}. 
\end{remark}

Next, we show that if $(\beta_i)_{i \in [\T]}$ are chosen properly, we can bound $\min \{ 1 - \sqrt{\Delta}, \, H^2  \}$ from below by a universal constant. 
The quantity $H$, which denotes the minimum Hellinger affinity between two component distributions with inverse temperature $\beta_1$, can be bounded by only using the smoothness parameter $L$ and the maximum separation between the modes $\mu_1, \dots, \mu_{\K}$. 
Bounding $\Delta$ is more involved and requires an integral representation of  $\KL(\Pi_{i, j} \,\|\, \Pi_{i+1, j})$ using the Fisher information for the inverse temperature parameter~\citep[Chap. 3.5]{amari2000methods}. 

\begin{lemma}\label{lm:H-delta}
Consider $\Delta, H$   defined in~\eqref{eq:def-H}, and let $D$ be as given in~\eqref{eq:def-complex-parameters}. Then, 
\begin{equation}
  H \geq e^{ - \beta_1 L D^2 / 2},  \quad \Delta \leq \frac{d \kappa^2}{4} \left( \max_{i \in [\T - 1]} \frac{\beta_{i+1}}{\beta_i} - 1 \right)^2. 
\end{equation}
Hence, assuming $L D^2 > 1/4$ and $(\beta_i)_{i \in [\T]}$ satisfies 
\begin{equation} \label{eq:cond-beta}
    \beta_1 \leq \frac{1}{4 L D^2}, \quad \frac{\beta_{i+1}}{\beta_i} \leq 1 + \frac{1}{\kappa \sqrt{d} } \text{ for } i \in [\T - 1], 
\end{equation}
we have $\min\{1 - \sqrt{\Delta}, \, H^2 \} \geq \min\{ 1 - 1/4, \, e^{-1/4} \} \geq 3/4. $ 
\end{lemma}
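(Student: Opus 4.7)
The plan is to prove the two inequalities separately and then plug in~\eqref{eq:cond-beta}.

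For $H$, fix $j, j' \in [\K]$ and rewrite the affinity as $C_1^{-1} \int e^{-\beta_1 (f(x - \mu_j) + f(x - \mu_{j'}))/2} \, \dd x$. The key substitution is $u = x - (\mu_j + \mu_{j'})/2$ and $v = (\mu_j - \mu_{j'})/2$: a second-order Taylor expansion of $f$ around $u$ combined with $\nabla^2 f \preceq L I_d$ yields $f(u - v) + f(u + v) \leq 2 f(u) + L\|v\|^2$. Factoring out $e^{-\beta_1 L \|v\|^2/2}$ and using $\int e^{-\beta_1 f(u)} \, \dd u = C_1$ gives $\int \sqrt{\pi_{1,j} \pi_{1, j'}} \, \dd x \geq e^{-\beta_1 L \|\mu_j - \mu_{j'}\|^2 / 8}$; the triangle inequality $\|\mu_j - \mu_{j'}\| \leq 2 D$ then finishes this part.

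For $\Delta$, by translation invariance of KL and variance I may assume $\mu_j = 0$ and write $\pi_t(x) \propto e^{-t f(x)}$. I would exploit the one-parameter exponential-family structure: $\log \pi_t(x) = -t f(x) - \log C_t$, so differentiating $\KL(\Pi_\beta \| \Pi_t)$ twice in $t$ produces a vanishing first derivative at $t = \beta$ and a second derivative equal to $\Var_{\Pi_t}[f(X)]$. Taylor's theorem with integral remainder yields
\begin{equation*}
\KL(\Pi_{\beta_i, j} \| \Pi_{\beta_{i+1}, j}) = \int_{\beta_i}^{\beta_{i+1}} (\beta_{i+1} - t) \, \Var_{\Pi_t}[f(X)] \, \dd t.
\end{equation*}
The crux is then bounding the Fisher information $\Var_{\Pi_t}[f]$. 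Brascamp--Lieb applied to the $(tm)$-strongly log-concave measure $\Pi_t$ gives $\Var_{\Pi_t}[f] \leq (tm)^{-1} \int \|\nabla f(x)\|^2 \pi_t(x)\, \dd x$; then $\|\nabla f(x)\|^2 \leq L^2 \|x\|^2$ since $\nabla f(0) = 0$ and $\nabla f$ is $L$-Lipschitz; and the virial identity $\int x^T \nabla f(x) \pi_t(x) \, \dd x = d/t$ (via integration by parts against $e^{-t f}$) combined with $x^T \nabla f(x) \geq m \|x\|^2$ from $m$-strong convexity yields $\int \|x\|^2 \pi_t(x) \, \dd x \leq d/(tm)$. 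Chaining these estimates produces $\Var_{\Pi_t}[f] \leq d\kappa^2/t^2$. Evaluating $\int_{\beta_i}^{\beta_{i+1}} (\beta_{i+1} - t)/t^2 \, \dd t = r_i - 1 - \log r_i$ with $r_i = \beta_{i+1}/\beta_i$ and using $r - 1 - \log r \leq (r - 1)^2/2$ for $r \geq 1$ yields the stated bound on $\Delta$. The concluding inequality is then immediate under~\eqref{eq:cond-beta}: $\beta_1 L D^2 \leq 1/4$ forces $H^2 \geq e^{-1/4}$, while $(\beta_{i+1}/\beta_i - 1)^2 \leq 1/(d\kappa^2)$ forces $\Delta \leq 1/4$.

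The main obstacle is assembling the Fisher-information bound on $\Var_{\Pi_t}[f]$: the factor $\kappa^2$ arises by combining Brascamp--Lieb (contributing one $1/m$) with the Stein-based bound on $\int \|x\|^2 \pi_t \, \dd x$ (contributing a second $1/m$) and two factors of $L$ coming from $\|\nabla f\|^2 \leq L^2 \|x\|^2$. The integration by parts underlying Stein's identity is justified by the rapid decay of strongly log-concave densities, and evaluating the Taylor integral is a routine calculation.
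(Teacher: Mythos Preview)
Your proposal is correct and follows essentially the same approach as the paper: the midpoint smoothness inequality for $H$, and for $\Delta$ the exponential-family integral representation of the KL divergence combined with the Poincar\'e (Brascamp--Lieb) inequality and the second-moment bound $\int \|x\|^2 \pi_t \leq d/(tm)$. The only cosmetic difference is that you evaluate $\int_{\beta_i}^{\beta_{i+1}} (\beta_{i+1}-t)/t^2\,\dd t = r_i - 1 - \log r_i$ exactly and then bound it by $(r_i-1)^2/2$, whereas the paper bounds $1/t^2 \leq 1/\beta_i^2$ before integrating; both yield the same final estimate.
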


\begin{proof}
Consider $H$ first, and fix any $j \neq j'$. 
Since $f$ is $L$-smooth, by \cref{lm:smooth}, 
\begin{equation}
    \frac{1}{2}\left[f(y) + f(z)\right] \leq  f\left( \frac{y + z}{2} \right) + \frac{L}{8} \|y - z \|^2, 
\end{equation}
for any $y, z \in \bbR^d$.  
Applying this inequality with $y = x - \mu_j$ and $z = x - \mu_{j'}$, we get 
\begin{align*}
\int_{\bbR^d} \sqrt{ \pi_{1, j}(x) \pi_{1, j'}(x) } \,  \dd x  
& = \frac{1}{C_1} \int_{\bbR^d} \exp\left\{ -\frac{\beta_1}{2}[ f(x- \mu_j) + f(x - \mu_{j'})] \right\}  \dd x \\
&\geq \frac{1}{C_1} \int_{\bbR^d} \exp\left\{ - \beta_1 \left[ f(x - \bar{\mu}) + \frac{L}{8} \| \mu_j - \mu_{j'} \|^2  \right] \right\}  \dd x \\
& = \exp( - \beta_1 L \| \mu_j - \mu_{j'} \|^2 / 8) , 
\end{align*}
where $\bar{\mu} = (\mu_j + \mu_{j'})/2$ and 
$C_1$ is the normalizing constant defined in~\eqref{eq:def-joint-pi}. Since $j, j'$ are arbitrary and $ \| \mu_j - \mu_{j'} \|^2 \leq 4 D^2$, this proves the claimed bound for $H$. 

Next, consider the quantity $\Delta$. 
Let $\Pi_{\beta, 0}$ denote the probability distribution with density $\pi_{\beta, 0}(x) \propto e^{- \beta f(x)}$. 
Using an integral representation of KL divergence for exponential families which we give in \cref{lm:KL},  we get 
\begin{equation}\label{eq:KL-fisher}
     \KL( \Pi_{i, j} \,\|\, \Pi_{i+1, j} )    = \int_{\beta_i}^{\beta_{i+1}} (\beta_{i + 1} - \beta)  V_{\Pi_{\beta, 0}} (f) \, \dd \beta. 
\end{equation} 

To bound the variance  $V_{\Pi_{\beta, 0}} (f) $, we first note that since $f$ is minimized at $0$, we have  $\nabla f (0) = 0$. It then follows from \cref{lm:smooth} that $\|  \nabla f (x)  \|^2 \leq L^2 \| x \|^2$. 
Applying  the Poincar\'{e} inequality and~\eqref{eq:durmus}  in \cref{lm:log-concave}, we obtain 
\begin{equation}
     V_{\Pi_{\beta, 0}} (f)  \leq  \frac{1}{\beta m} \Pi_{\beta, 0}( \|\nabla f \|^2 ) \leq \frac{ L^2 }{\beta m}  \int_{\bbR^d} \| x \|^2 \Pi_{\beta, 0} (\dd x)  \leq \frac{d L^2}{\beta^2 m^2} = \dfrac{d \kappa^2}{ \beta^2 }. 
\end{equation}
Hence, $ V_{\Pi_{\beta, 0}} (f)  \leq d \kappa^2 / \beta_i^2$ for $\beta \in [\beta_i, \beta_{i + 1}]$, and it follows from~\eqref{eq:KL-fisher} that 
\begin{equation}
      \KL( \Pi_{i, j} \,\|\, \Pi_{i+1, j} )  \leq \frac{d \kappa^2}{\beta_i^2} \int_{\beta_i}^{\beta_{i+1}}  ( \beta_{i + 1} - \beta ) \dd \beta = \frac{d \kappa^2}{2 \beta_i^2} (\beta_{i+1} - \beta_i)^2,  
\end{equation} 
which concludes the proof. 
\end{proof}

We compare our choice of $(\beta_i)_{i \in [\T]}$ with those in the simulated tempering literature. 
Assume that $\beta_{i + 1} / \beta_i \coloneqq 1 + \rho$ is a constant. Ideally, one wants to achieve polynomial mixing with as few temperatures as possible, which means that larger values of $\beta_1$ and $\rho$ are more desirable.  
In~\citet{atchade2011towards},  optimal scaling techniques were used to show that for certain target distributions including standard Gaussian distribution, the geometric spacing with $\rho = 1/\sqrt{d}$ is asymptotically optimal. Although their analysis is asymptotic and does not directly cover mixture targets, this result already suggests that the $1/\sqrt{d}$-dependence is likely unavoidable. 
The work of~\citet{garg2025restricted} considered only mixtures of Gaussian distributions and assumed fixed $d$, and their conditions on the temperature ladder are essentially the same as ours. 
In~\citet{woodard2009conditions}, polynomial mixing of simulated tempering was obtained when the target is a mixture of two isotropic Gaussian distributions with means $\mu_1 = \one, \mu_2 = -\one$, and $\beta_i = d^{-(d + 1 - i)/d}$ for $i = 1, \dots, d + 1$. Since in this case $D^2   = d$, their choice of $\beta_1 = d^{-1}$ coincides with ours. However, their temperature spacing is sub-optimal: as $d \rightarrow \infty$, their construction of $(\beta_i)_{i \in [\T]}$ implies $ \rho \approx (\log d )/ d$. 
In~\citet{ge2018simulated}, the authors considered the same target distribution~\eqref{eq:target-pi-star} and assumed $ \beta_1 \lesssim  1/ (d \kappa  D^2)$  and $\rho \leq  1/ [\kappa d \log (\kappa + 1) ]$. Both conditions are more restrictive than ours.

\subsection{Conductance of the Restricted Chains}\label{sec:conduct-restrict}

Consider the restricted chain $P_{i, j}$. By \cref{def:decomp},  $P_{i, j}$ is essentially the Metropolis--Hastings chain on $\bbR^d$ with stationary density $\pi_{i, j}$ and proposal $Q$ defined in~\eqref{eq:def-Q-joint} restricted to only updates of $x$. Explicitly, 
\begin{equation}
    P_{i, j}(x, A) = \frac{1-\alpha}{2} \int_A  \min\left\{1, \, \frac{ \pi_{i, j}(x') q_i(x', x)}{ \pi_{i, j}(x) q_i(x, x')} \right\} Q_i(x, \dd x'), \quad \text{ if } x \notin A. 
\end{equation}
It follows that if $P'_{i, j}$ is the transition kernel of a Metropolis--Hastings chain with stationary density $\pi_{i, j}$ and proposal $Q_i$, then 
\begin{equation}
   \Phi_s(P_{i, j}) = \frac{1-\alpha}{2}\Phi_s(P'_{i, j}), \quad \Gap(P_{i, j} ) = \frac{1-\alpha}{2}\Gap(P'_{i, j}).
\end{equation}

When $Q_i$ is the RWM proposal, we can simply apply the recent result of~\citet{andrieu2024explicit} to get the following spectral gap  and conductance bounds. 

\begin{lemma}\label{lm:rwm}
Consider the restricted Metropolis--Hastings chain $P_{i, j}$ defined above with  proposal 
$Q_i(x, \cdot) = N(x, (2h \beta_i^{-1}) I_d)$.
If $h = 1/ (L d)$, then for any $i \in [\T], \,  j \in [\K]$, 
\begin{equation}
   \Phi_0(P_{i, j}) \geq \frac{c (1 - \alpha)}{\sqrt{d \kappa}}, \quad  \Gap(P_{i, j}) \geq  \frac{c (1 - \alpha)}{ d \kappa }, 
\end{equation}
where $c$ is some universal constant. 
\end{lemma}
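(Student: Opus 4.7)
The plan is to reduce \cref{lm:rwm} to an off-the-shelf bound for random-walk Metropolis on a strongly log-concave target, applied to $\pi_{i,j}$ after two elementary reductions. First, the text preceding the lemma already gives
\begin{equation}
    \Phi_s(P_{i,j}) = \tfrac{1-\alpha}{2}\,\Phi_s(P'_{i,j}),\qquad \Gap(P_{i,j}) = \tfrac{1-\alpha}{2}\,\Gap(P'_{i,j}),
\end{equation}
where $P'_{i,j}$ is the standard Metropolis--Hastings chain with stationary density $\pi_{i,j}$ and Gaussian proposal $Q_i(x,\cdot) = N(x,(2h/\beta_i)I_d)$. Hence it suffices to produce the claimed lower bounds for $\Phi_0(P'_{i,j})$ and $\Gap(P'_{i,j})$, which will absorb the $(1-\alpha)/2$ factor.

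Second, I would reduce to a canonical, location- and temperature-free instance by a translation followed by a rescaling. The translation $x \mapsto x - \mu_j$ eliminates $\mu_j$ because both $\pi_{i,j}$ and the Gaussian proposal $Q_i$ are translation-covariant, so the conductance and spectral gap are unchanged. After this shift the target has density proportional to $e^{-\beta_i f(x)}$ and the proposal is $N(x,(2h/\beta_i)I_d)$. Now rescale $y = \sqrt{\beta_i}\,x$: the pushed-forward target has density proportional to $e^{-\beta_i f(y/\sqrt{\beta_i})}$, whose potential $\tilde{f}(y) \coloneqq \beta_i f(y/\sqrt{\beta_i})$ is easily checked to be $L$-smooth and $m$-strongly convex (the $\beta_i$'s cancel against the Jacobian $1/\beta_i$ in the Hessian), while the proposal becomes $N(y,2h I_d)$. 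Conductance and spectral gap are invariant under this bijective rescaling, so the problem is reduced to bounding $\Phi_0$ and $\Gap$ for RWM on an $L$-smooth, $m$-strongly convex target with proposal variance $2h I_d$ and $h = 1/(Ld)$.

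Third, I would invoke the explicit RWM bounds of \citet{andrieu2024explicit} for this canonical chain. With $h = 1/(Ld)$, the product $hL = 1/d$ is exactly the ``optimal scaling'' regime on which their theorem delivers dimension-free acceptance probabilities and a conductance lower bound of order $1/\sqrt{d\kappa}$ and a spectral gap lower bound of order $1/(d\kappa)$, with $\kappa = L/m$; one simply reads off the universal constant $c$ from their statement. Plugging these back through the $(1-\alpha)/2$ reduction yields the two displayed inequalities of the lemma uniformly in $(i,j)$.

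The main obstacle is a bookkeeping one rather than a conceptual one: it is to verify, in applying \citet{andrieu2024explicit}, that their hypotheses on step-size tuning and warm starts are compatible with the $s = 0$ conductance bound (so that no warmness assumption is hidden) and that the constant $c$ is indeed universal (independent of $d$, $\kappa$, $\beta_i$, $\mu_j$). The translation and rescaling steps above are precisely what ensure this last point, since after the reduction neither $\beta_i$ nor $\mu_j$ appears anywhere in the target or the proposal.
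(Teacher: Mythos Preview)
Your proposal is correct and follows essentially the same route as the paper: both reduce to the RWM bound of \citet{andrieu2024explicit} (their Corollary~35) after observing that the $\beta_i$'s cancel and the condition number is $\kappa$. The paper skips your explicit translation/rescaling and simply notes that $\pi_{i,j}$ has a $(\beta_i L)$-smooth, $(\beta_i m)$-strongly convex potential with proposal variance $2h/\beta_i$, which yields the same $1/d$ step-size scaling and condition number $\kappa$ directly.
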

\begin{proof}
    Since  $\pi_{i, j}(x) \propto e^{ -\beta_i f(x - \mu_j)}$, its potential is $(\beta_i L)$-smooth and $(\beta_i m)$-strongly convex. The claim then follows from Corollary 35 of~\citet{andrieu2024explicit}. 
\end{proof}

When $Q_i$ is the MALA proposal, the analysis becomes more challenging. Recall that when we constructed the auxiliary chains $P$ (and also $\tilde{P}$), the proposal scheme for updating $x$ is assumed to be the same as that used by the proposal kernel $Q^*$ of the actual simulated tempering algorithm. Hence, the MALA proposal is computed by assuming the target distribution is $\pi^*_i(x)$, which has potential $\beta_i U(x)$ with $U$ given in~\eqref{eq:target-pi-star}. However, the stationary distribution of $P_{i, j}$ is $\pi_{i, j}(x)$, which differs from $\pi^*_i(x)$. 
Consequently, though recent studies have provided sharp bounds on the $s$-conductance of MALA for log-concave targets~\citep{chewi2021optimal, wu2022minimax}, such results cannot be directly used here. In the next lemma, we take the approach of~\citet{dwivedi2019log} to bounding the $s$-conductance of $P_{i, j}$, which is known to be suboptimal for a true MALA chain targeting some log-concave distribution. 
The calculation is more involved due to the mismatch between $\pi_{i, j}(x)$ and $\pi^*_i(x)$, and most technical details are deferred to Appendix~\ref{sec:supp-Pij}.

\begin{lemma}\label{lm:mala}
Consider the restricted Metropolis--Hastings chain $P_{i, j}$  with proposal  $Q_i(x, \cdot) = N(x - h \nabla U(x), (2h/\beta_i) I_d)$.
For any $s \in [0, 1/2)$,  let
\begin{equation}
    h = \frac{c  }{   L^2 (D + \cR_s)^2  d}, \text{ where } \cR_s = 2 \sqrt{ \frac{d \vee (2 \log s^{-1})}{m} }.  
\end{equation}
If $c \leq 0.01$,  then  $\Phi_s(P_{i, j}) \geq c' \sqrt{h m}$ for some universal constant $c' > 0$. 
\end{lemma}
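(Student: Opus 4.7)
I would adapt the standard $s$-conductance analysis for MALA of \citet{dwivedi2019log} to the present setting, where the proposal drift $-h\nabla U(x)$ is driven by the \emph{full} mixture potential $U$ while the target of $P_{i,j}$ is the single component $\pi_{i,j}(x) \propto e^{-\beta_i f(x - \mu_j)}$. The argument rests on three ingredients, combined via the isoperimetric $s$-conductance lemma (\cref{lm:s-conductance-isoperi}): (a) a good set $\Omega_s = B(\mu_j, R_{s,i})$ with $R_{s,i} \lesssim \cR_s / \sqrt{\beta_i}$, chosen so that $\Pi_{i,j}(\Omega_s) \geq 1 - s/2$ by Gaussian-type concentration for $(\beta_i m)$-strongly log-concave distributions; (b) a one-step bound $\|P_{i,j}(x, \cdot) - P_{i,j}(x', \cdot)\|_{\TV} \leq 3/4$ whenever $x, x' \in \Omega_s$ and $\|x - x'\| \leq \rho$ for some $\rho = c_0 \sqrt{h/\beta_i}$; and (c) the Cheeger-type isoperimetric inequality for $\pi_{i,j}$, whose isoperimetric constant is of order $\sqrt{\beta_i m}$ by strong log-concavity.

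The bulk of the work is in (b), which splits into a proposal-overlap estimate and a lower bound on the acceptance probability. For the proposal, $Q_i(x, \cdot)$ and $Q_i(x', \cdot)$ are Gaussians with common covariance $(2h/\beta_i) I_d$ whose means differ by at most $(1 + hL)\|x - x'\|$; this uses that $\nabla U$ is globally $L$-Lipschitz, which follows from $\nabla^2 U \preceq \sum_k p_k(x) \nabla^2 f(x - \mu_k) \preceq L I_d$ where $p_k(x)$ are the mixture responsibilities. The TV distance between the two Gaussians is thus $O(\rho \sqrt{\beta_i / h}) = O(c_0)$. For the acceptance probability, the log-Hastings ratio expands to
\begin{equation}
\log \frac{\pi_{i,j}(x') q_i(x', x)}{\pi_{i,j}(x) q_i(x, x')}
= \beta_i \bigl[f(x - \mu_j) - f(x' - \mu_j)\bigr]
+ \tfrac{\beta_i}{2} \langle \delta, \nabla U(x) + \nabla U(x') \rangle
- \tfrac{\beta_i h}{4} \bigl[\|\nabla U(x')\|^2 - \|\nabla U(x)\|^2\bigr],
\end{equation}
with $\delta = x' - x$. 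On $\Omega_s$ one has $\|\nabla U(x)\| \leq L(\|x - \mu_j\| + 2D) \leq L(\cR_s / \sqrt{\beta_i} + 2D)$, because $\nabla U(x)$ is a convex combination of the gradients $\nabla f(x - \mu_k)$, $\nabla f$ is $L$-Lipschitz with $\nabla f(0) = 0$, and $\|x - \mu_k\| \leq \|x - \mu_j\| + 2D$. Combined with a Gaussian tail bound for $\|\delta\|$ and a second-order Taylor expansion of $f$, each term on the right-hand side is bounded with high probability by a universal constant times $\sqrt{h \beta_i L^2 (\cR_s + D)^2 d}$ or $h L^2 (\cR_s + D)^2 d$. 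The $\beta_i$ prefactor and the proposal variance $2h/\beta_i$ conspire so that neither bound actually depends on $\beta_i$, and the hypothesis $h \leq c / [L^2 (D + \cR_s)^2 d]$ with $c \leq 0.01$ makes all such terms $O(1)$. Consequently, the acceptance probability is at least $1/2$ on an event of proposal probability at least $3/4$.

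Combining (a)--(c) through \cref{lm:s-conductance-isoperi} then yields
\begin{equation}
\Phi_s(P_{i,j}) \gtrsim (1-\alpha) \cdot \rho \cdot \sqrt{\beta_i m} \gtrsim (1-\alpha) \sqrt{h / \beta_i} \cdot \sqrt{\beta_i m} = (1-\alpha) \sqrt{h m},
\end{equation}
where the $(1-\alpha)/2$ factor comes from the identity $P_{i,j} = \tfrac{1-\alpha}{2} P'_{i,j}$ highlighted just before the lemma statement, and the $\beta_i$ dependences cancel exactly. Absorbing $(1-\alpha)$ into the universal constant $c'$ gives the claim.

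The main obstacle is controlling the drift-target mismatch $e(x) := \nabla U(x) - \nabla f(x - \mu_j)$ within the log-Hastings ratio. In classical MALA on a single log-concave target this mismatch vanishes and a favorable cancellation occurs between the Hastings correction and part of the target log-ratio; here one only has the coarse bound $\|e(x)\| \leq 2LD$, obtained by writing $e(x) = \sum_k p_k(x)[\nabla f(x - \mu_k) - \nabla f(x - \mu_j)]$ and using the $L$-Lipschitzness of $\nabla f$ together with $\|\mu_k - \mu_j\| \leq 2D$. This $D$-dependent estimate forces the step size to scale with $(D + \cR_s)^{-2}$ and is likely the reason the resulting mixing-time bound is suboptimal relative to MALA on a pure log-concave target, as foreshadowed in \cref{rmk:order}.
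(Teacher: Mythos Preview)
Your outline matches the paper's proof: reduce to \cref{lm:s-conductance-isoperi} using a concentration ball for the $(\beta_i m)$-strongly log-concave $\pi_{i,j}$, then establish the one-step TV bound by separately controlling the proposal overlap and the rejection probability on that ball, with the $\beta_i$-dependence cancelling at the end. Two executional differences are worth noting. First, the paper bounds the mean acceptance $\bar a(x)$ directly via Jensen's inequality applied to the log-ratio bound of \cref{lm:acc-mala}, obtaining $\bar a(x)\geq 0.73$ for $c\leq 0.01$; your high-probability route works too, but the constants you quote (acceptance $\geq 1/2$ on an event of probability $\geq 3/4$) yield only $\bar a(x)\geq 3/8$ and hence $\|P_{i,j}(x,\cdot)-Q_i(x,\cdot)\|_{\TV}\leq 5/8$, which is too loose to close the triangle inequality at $3/4$. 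You need both thresholds pushed close to $1$, which $c\leq 0.01$ permits.

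Second, and more substantively, your claim that $\nabla U$ is globally $L$-Lipschitz is incorrect: the Hessian identity is $\nabla^2 U(x)=\sum_k p_k(x)\nabla^2 f(x-\mu_k)-\mathrm{Cov}_{p}\bigl[\nabla f(x-\mu_\cdot)\bigr]$, and while the first term is indeed $\preceq L I_d$, the covariance subtraction can drive eigenvalues down to order $-L^2D^2$ (e.g.\ for the two-component isotropic Gaussian mixture with modes $\pm D e_1$ one has $\partial_1^2 U(0)=1-D^2$). So the bound ``means differ by at most $(1+hL)\|x-x'\|$'' does not follow from your one-sided Hessian inequality. The paper sidesteps this by routing through $f$: combine $\|\nabla U-\nabla f\|\leq LD$ (\cref{lm:U-bound}) with the contraction $\|x-h\nabla f(x)-(y-h\nabla f(y))\|\leq\|x-y\|$ for convex $L$-smooth $f$ and $h\leq 1/L$ (\cref{lm:mala-mean-bound}) to obtain $\|x-h\nabla U(x)-(y-h\nabla U(y))\|\leq\|x-y\|+2hLD$. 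After dividing by the proposal scale $\sqrt{2h/\beta_i}$, the extra $2hLD$ contributes at most $\sqrt{\beta_i h/2}\,LD\leq\sqrt{c/2}$ to the proposal TV distance, which is harmless. With this repair your argument goes through and coincides with the paper's.
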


\begin{proof} 
Fix $i \in [\T], j \in [\K]$.  By \cref{lm:log-concave}, we have $\Pi_{i, j}(\cB_s^i) \geq 1 - s$, where 
\begin{equation}
    \cB_{s}^i = \left\{ x \in \bbR^d \colon \| x \| \leq  \cR_s/ \sqrt{\beta_i }     \right\}. 
\end{equation} 
By \cref{lm:s-conductance-isoperi}, since $\pi_{i, j}$ is $(\beta_i m)$-strongly convex, if we can show that 
\begin{equation}\label{eq:goal-Pij}
    \| P_{i, j} (x, \cdot) - P_{i, j}(y, \cdot) \|_{\TV} \leq 3/4, \text{ whenever } x, y \in \cB_s^i, \text{ and } \| x - y\| \leq  \sqrt{ \frac{h}{8 \beta_i}}, 
\end{equation}
then $\Phi_s(P_{i, j}) \geq c' \sqrt{h m}$  for some universal constant $c' > 0$. 

To prove~\eqref{eq:goal-Pij}, we begin by applying  Pinsker's inequality and the well-known formula for  the KL divergence between two Gaussian distributions to get 
\begin{align}
    \| Q_i(x, \cdot) - Q_i(y, \cdot) \|_{\TV}    &\leq \frac{\sqrt{\beta_i}}{2 \sqrt{2 h}} \left\|x - h \nabla U(x) - (y - h \nabla U(y) )  \right\|,  \quad \forall \, x, y \in \bbR^d. 
\end{align}
By  \cref{lm:U-bound}, $\| \nabla U(x) - \nabla f(x)\| \leq LD$ for any $x$.
It then follows from   \cref{lm:mala-mean-bound} that 
\begin{align}
     \left\|x - h \nabla U(x) - (y - h \nabla U(y))   \right\|   
     & \leq \left\|x - h \nabla f(x) - (y - h \nabla f(y))   \right\|  + 2 h LD 
     \leq  \| x - y\| + 2 h LD. 
\end{align} 
Using the assumption on $h$ and that $\beta_i \leq 1$, we find that  
\begin{align}
     \| Q_i(x, \cdot) - Q_i(y, \cdot) \|_{\TV} \leq \frac{\sqrt{\beta_i} \|x - y \|}{2 \sqrt{2 h}} + \frac{ \sqrt{h} L D}{\sqrt{2}} \leq \frac{\sqrt{\beta_i} \|x - y \|}{2 \sqrt{2 h}} + \frac{ \sqrt{c} }{\sqrt{2 }}. 
\end{align}
An application of the triangle inequality thus yields, when $x, y \in \cB_s^i$ and $\| x - y\| \leq \sqrt{h / (8\beta_i)}$, 
\begin{align}\label{eq:tv-pij}
     \| P_{i, j} (x, \cdot) - P_{i, j}(y, \cdot) \|_{\TV} \leq  \frac{1}{8} + \frac{ \sqrt{c } }{\sqrt{2 }} +     2 \sup_{x \in \cB_s^i} \| P_{i, j}(x, \cdot) - Q_i(x, \cdot) \|_{\TV}. 
\end{align}

We now fix any  $x \in \cB_s^i$ and bound $\| P_{i, j}(x, \cdot) - Q_i(x, \cdot) \|_{\TV} $. It is easy to see that it equals the probability that the proposal $Y \sim Q_i(x, \cdot) $ is rejected. So we introduce the notation $\bar{a}(x)$, which denotes the average acceptance probability at $x$ and can be expressed by 
\begin{equation}\label{eq:def-bar-acc}
\bar{a}(x) = \int_{\bbR^d} \min\left\{1, \frac{e^{-\beta_i f(y(z) - \mu_j)} }{e^{-\beta_i f(x - \mu_j)}} 
\frac{ q_i(y(z), x)}{q_i(x, y(z))} \right\} \cN_d(\dd z),  
\end{equation} 
where $\cN_d$ denotes the $d$-dimensional standard Gaussian measure, 
$q_i(x, y)$ denotes the density of $Y \sim N(x - h \nabla U(x), (2h/\beta_i) I_d)$, 
and
\begin{equation}
     y(z) = x - h \nabla U(x) + \sqrt{2h / \beta_i } \, z.
\end{equation} 

We apply~\cref{lm:acc-mala} with $c_h = c/d$  to get 
\begin{align}
  \log  \frac{e^{-\beta_i f(y(z) - \mu_j)} }{e^{-\beta_i f(x - \mu_j)}} & \frac{ q_i(y(z), x)}{q_i(x, y(z))} 
  \geq  - b_0 - b_1 \|z\| - b_2 \|z\|^2,  \\ 
    \text{ where }   b_0 &= \left( 1 + \frac{hL}{2} \right)\left( 2 + \frac{hL}{2} \right)   \frac{c }{d} \leq  \left(2  + \frac{7}{4}h L\right) c, \\
    b_1 &= (4 + 5 hL + h^2L^2) \sqrt{ \frac{ c  }{2 d} } \leq (4 + 6hL) \sqrt{ \frac{ c  }{2 d} }, \\
    b_2 &= h L   \left( 2 + \frac{hL}{2} \right) \leq \frac{c}{4 d} \left( 2 + hL \right), 
\end{align} 
where to derive the bounds for $b_0, b_1, b_2$, we have used 
\begin{equation}\label{eq:LD2} 
 h L \leq \frac{c }{L (D + \cR_s)^2 d} \leq \frac{c }{L \cR_s^2 d} \leq \frac{c }{4 L d^2/ m} \leq  \frac{c }{4d} < 1, 
\end{equation}
since $\cR_s^2 \geq 4 d / m$  and $L \geq m$.  By Jensen's inequality, 
\begin{align*}
   \bar{a}(x) & \geq \int_{\bbR^d} e^{  - b_0 - b_1 \|z\| - b_2 \|z\|^2 }   \cN_d(\dd z)  
     \geq   \exp \left(- \int_{\bbR^d}   [ b_0 + b_1 \| z \| + b_2 \|z\|^2 ]\, \cN_d(\dd z) \right)   \\ 
   & \geq    \exp \left(- b_0 - b_1 \sqrt{d} - b_2 d \right)  
    \geq   \exp \left\{ - \left( \frac{5}{2} + 2 hL\right) c  - (4 + 6 hL) \sqrt{\frac{c}{2}}   \right\}. 
\end{align*}  

Using $c  \leq 0.01$ and $hL \leq c / 4$, 
one can numerically verify that the right-hand side is greater than $0.73$. Hence, 
$$\| P_{i, j}(x, \cdot) - Q_i(x, \cdot) \|_{\TV} \leq 1 - \bar{a}(x) \leq 0.27.$$
Since this holds for any $x \in \cB_s^i$, another numerical calculation with $c\leq 0.01$ verifies that the right-hand side of~\eqref{eq:tv-pij}  is less than $0.75$, which establishes~\eqref{eq:goal-Pij} and completes the proof.  
\end{proof}

\section{Necessary Conditions on the   Temperature Ladder} \label{sec:opt-beta}

In this section, we further analyze the choice of the temperature ladder $(\beta_i)_{i \in [\T]}$. 
In~\eqref{eq:opt-choice-beta} and \cref{lm:H-delta}, we have stated our conditions on $\beta_1$ and $\beta_{i + 1}/\beta_i$ for $i \in [\T - 1]$, which jointly also determine the number of temperature levels $\T$. 
We now show that these conditions are asymptotically optimal (up to logarithmic factors) in terms of the dependence on the dimension $d$ and maximum mode displacement $D$. 

We simply fix the target distribution to be a mixture of two equally weighted Gaussian distributions with covariance matrix $I_d$: 
\begin{equation}\label{eq:counter-pi-star}
    \pi^*(x) = \frac{1}{2 (2 \pi)^{d / 2}} \left( e^{ - \|x - \mu_1\|^2 / 2} + e^{ - \|x - \mu_2\|^2 / 2} \right). 
\end{equation} 

We first consider the choice of $\beta_1$ and assume that the RWM proposal is used. 

\begin{proposition}\label{prop:opt-beta1}
Let $\pi^*$ be given by~\eqref{eq:counter-pi-star} with  $\mu_1 = (D, 0, \dots, 0)$ and $\mu_2 = -  (D, 0, \dots, 0)$ for some $D > 0$, which implies $\max_j \|\mu_j\| = D$. 
Consider the simulated tempering chain $P^*$ with proposal $Q_i(x, \cdot) = N(x, \, (2h \beta_i^{-1}) I_d )$ for some $h > 0$. For any $s \in [0, 1/2)$, 
\begin{equation}
\Phi_s(P^*) \leq \frac{4}{1 - 2s} \left( \frac{2}{2 + h} \right)^{d/2} \exp\left( - \frac{\beta_1 D^2}{2 + h} \right). 
\end{equation} 
\end{proposition}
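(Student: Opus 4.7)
The plan is to exploit the symmetry of $\pi^*$ and pick a natural bottleneck separating the two modes. I would take $A = [\T] \times H$ with $H = \{x \in \bbR^d : x_1 > 0\}$. Because the reflection $x_1 \mapsto -x_1$ swaps $\mu_1$ and $\mu_2$ and therefore leaves $\pi^*$ (and hence every tempered density $\pi^*_i \propto (\pi^*)^{\beta_i}$) invariant, we have $\Pi^*(A) = 1/2$, so $\Pi^*(A) - s = 1/2 - s$. It then suffices to upper-bound $P^*(A, A^c)$.

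A convenient first simplification is that a temperature-swap proposal does not move $x$: if $(i, x) \to (i', x)$ is proposed, both endpoints lie in $A$ iff $x_1 > 0$, so swap moves do not cross the bottleneck and contribute nothing to the flow. Only the within-level RWM moves matter, and bounding the Metropolis acceptance probability by $1$ gives
\[
    P^*(A, A^c) \;\le\; (1-\alpha) \sum_{i=1}^{\T} r_i \int_H \pi^*_i(x)\, Q_i(x, H^c)\, dx.
\]

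The heart of the argument is a Gaussian flow estimate at each fixed temperature. On $H$ the component $\hat g_1 = N(\mu_1, I_d)$ dominates $\hat g_2 = N(\mu_2, I_d)$, so the subadditivity $(a+b)^{\beta_i} \le a^{\beta_i} + b^{\beta_i}$ (for $\beta_i \in (0, 1]$), together with a comparable lower bound on the normalizer $Z_i^* = \int (\pi^*)^{\beta_i}$ obtained from $\pi^* \ge \tfrac{1}{2}\hat g_j$ for each $j$, gives a pointwise estimate of the form $\pi^*_i(x) \le c\, N(x; \mu_1, I_d/\beta_i)$ for $x \in H$, with an absolute constant $c$. Substituting and extending the domain of integration from $H$ to $\bbR^d$, the integral reduces to estimating $\Pr(X'_1 < 0)$ where $X \sim N(\mu_1, I_d/\beta_i)$ and $X' \mid X \sim N(X, (2h/\beta_i)I_d)$; the coordinate $X'_1$ is then a one-dimensional Gaussian with mean $D$, and a Gaussian tail bound produces exponential decay in $\beta_i D^2$. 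Careful bookkeeping of the $d$-dimensional Gaussian normalizers (the determinant of the joint covariance of $(X, X')$) is what yields the prefactor $(2/(2+h))^{d/2}$ in the final form.

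Finally, summing over $i$ using $\beta_i \ge \beta_1$ (so every exponential $e^{-\beta_i D^2/(2+h)}$ is dominated by $e^{-\beta_1 D^2/(2+h)}$) and $\sum_i r_i = 1$ yields the stated bound on $P^*(A, A^c)$, and dividing by $\Pi^*(A) - s = \tfrac{1}{2} - s$ finishes the argument. I expect the principal technical difficulty to be bookkeeping the Gaussian constants precisely enough to land on exactly the denominator $(2+h)$ in both the prefactor and the exponent; the structural ingredients — the symmetric bottleneck giving $\Pi^*(A) = 1/2$, the harmlessness of temperature swaps, and the per-level Gaussian envelope on $\pi^*_i$ — are routine.
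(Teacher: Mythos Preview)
Your overall scaffold --- the symmetric bottleneck $A=[\T]\times\{x_1>0\}$ with $\Pi^*(A)=1/2$, the observation that temperature swaps do not cross $A$, and the reduction to a per-level Gaussian flow bound --- matches the paper exactly. The gap is in the last step, where you bound the acceptance probability by $1$ and compute $\Pr(X_1'<0)$ with $X\sim N(\mu_1,I_d/\beta_i)$ and $X'\mid X\sim N(X,(2h/\beta_i)I_d)$. This \emph{cannot} produce the prefactor $\bigl(2/(2+h)\bigr)^{d/2}$: once you are computing a genuine probability, the $d-1$ unconstrained coordinates of $X'$ integrate to exactly $1$, and all Gaussian normalizers cancel. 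What remains is the one-dimensional tail $\Pr(X_1'<0)$ with $X_1'\sim N\!\bigl(D,(1+2h)/\beta_i\bigr)$, which gives at best $\exp\bigl(-\beta_i D^2/[2(1+2h)]\bigr)$ --- the wrong exponent and no $d$-dependent factor. Your remark that the determinant of the joint covariance of $(X,X')$ supplies the prefactor is therefore incorrect.

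The paper's device is to keep the Metropolis form
\[
P^*_i(A,A^c)=\int_{x_1>0}\int_{y_1<0} q_i(x,y)\,\min\{\pi^*_i(x),\pi^*_i(y)\}\,\dd x\,\dd y
\]
and bound $\min\{\pi^*_i(x),\pi^*_i(y)\}\le 2\sqrt{\pi_{i,1}(x)\,\pi_{i,2}(y)}$ (using $\pi^*_i\le 2\pi_i$ and $\pi_i\le\pi_{i,1}$ on $\{x_1>0\}$, $\pi_i\le\pi_{i,2}$ on $\{y_1<0\}$). Because $\sqrt{\pi_{i,j}}$ is \emph{not} a probability density, the resulting Gaussian integral over $\bbR^d\times\bbR^d$ is no longer $1$; evaluating it in closed form yields precisely $\bigl(2/(2+h)\bigr)^{d/2}\exp\bigl(-\beta_i D^2/(2+h)\bigr)$, delivering both the dimensional prefactor and the exact exponent in one stroke. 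To recover the statement as written, you need this square-root trick (or an equivalent manipulation that prevents the integrand from being a probability density in the transverse coordinates).
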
 

\begin{proof}
Define the set 
\begin{equation}
    A = \{ (i, x): i \in [\T], x = (x_1, \dots, x_d) \in \bbR^d,  x_1 > 0\}. 
\end{equation} 
By symmetry, $\Pi^*(A) = 1/2$. Consider the probability flow from $A$ to $A^c$ of the Markov chain $P^*$. Since for $x \neq y$, the transition density  from $(i, x)$ to $(j, y)$ is nonzero only when $j = i$, we have 
\begin{align}
P^*(A, A^c) &= \sum_{i\in [\T]} r_i P^*_i(A, A^c) \leq \max_{i \in [\T]} P^*_i(A, A^c),  \label{eq:beta1-temp1} \\
\text{ where }
   P^*_i(A, A^c) &= \int_{x\colon  x_1 > 0} \int_{y\colon  y_1 < 0}  q_i(x, y) \min \left\{\pi^*_i(x), \pi^*_i(y) \right\} \dd x \, \dd y. \label{eq:opt-beta1-1} 
\end{align} 
By \cref{lm:ge-compare}, $\pi^*_i(x) \leq 2 \pi_i(x)$, where   $\pi_i(x) = \sum_{j \in [\K]} w_j \pi_{i, j}(x)$. 
When $x_1 > 0$, we have $\pi_{i, 1}(x) \geq \pi_{i, 2}(x)$ since $\|x - \mu_1 \| \leq \|x - \mu_2\|$. Hence, we have $\pi_i(x) \leq \pi_{i, 1}(x)$, and similarly, $\pi_i(y) \leq  \pi_{i, 2}(y)$. It follows that 
\begin{equation}
\min \left\{\pi^*_i(x), \pi^*_i(y) \right\} \leq 2
 \min \left\{\pi_i(x), \pi_i(y) \right\} \leq 2 \sqrt{ \pi_i(x)  \pi_i(y) } \leq 2 \sqrt{ \pi_{i, 1}(x)  \pi_{i, 2}(y) }. 
\end{equation}
Using the expression given in~\eqref{eq:opt-beta1-1},  we thus find that 
\begin{equation}
     P^*_i(A, A^c)    \leq \int_{\bbR^d \times \bbR^d}  q_i(x, y) \sqrt{ \pi_{i, 1}(x) \pi_{i, 2}(y) } \, \dd x \, \dd y    = 2  \left( \frac{2}{2 + h} \right)^{d/2} \exp\left( - \frac{\beta_i D^2}{2 + h} \right), 
\end{equation}
where the second equality is obtained via routine calculations with Gaussian densities. 
By the definition of $s$-conductance, $\Phi_s(P^*) \leq P^*(A, A^c) / (\Pi^*(A) - s)$ for any $s \in [0, 1/2)$. 
Applying~\eqref{eq:beta1-temp1} and using $\beta_1 = \min_{i \in [\T]} \beta_i$ concludes the proof. 
\end{proof}

Next, we determine the necessary condition needed on the ratio $\beta_{i + 1} / \beta_i$. 

\begin{proposition}\label{prop:opt-beta-ratio}
Let $\pi^*$ be given by~\eqref{eq:counter-pi-star}, and consider the simulated tempering chain $P^*$ with $r_i = 1 / \T$ for each $i \in [\T]$. 
Define $\rho_i = \beta_{i+1}/\beta_i - 1$ for each $i \in [\T] - 1$. 
For any $s \in [0, (4 \T)^{-1})$,  
\begin{equation}\label{eq:def-F}
    \Phi_s(P^*)  \leq 16 \min_{i \in [\T - 1]}   F(\rho_i), \text{ where } 
    F(\rho) = \left(\frac{ 2 \sqrt{ 1 + \rho  }  }{2 + \rho   } \right)^{d/2}. 
\end{equation} 
The function $F(\rho)$ is monotone decreasing on $[0, \infty)$, and for $\rho \in [0, 1/2]$, $F(\rho) \leq e^{- \rho^2 d / 48}$. 
\end{proposition}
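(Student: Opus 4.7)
The plan is to adapt the bottleneck-set argument of Proposition~\ref{prop:opt-beta1} by partitioning along the temperature index rather than the spatial coordinate. For each $i^*\in [\T-1]$, I would define $A_{i^*}=\{(i,x):i\leq i^*\}$, replacing it by its complement if $\Pi^*(A_{i^*})=i^*/\T$ exceeds $1/2$ (by reversibility this leaves the flow unchanged). Because $Q^*$ updates $i$ and $x$ separately and \eqref{ass:Q} restricts $i$-swaps to neighbors, the only transitions crossing $\partial A_{i^*}$ are swaps between levels $i^*$ and $i^*+1$. Using $r_i=1/\T$, $\alpha\,q(i^*,i^*+1)\leq 1$, and $\min\{a,b\}\leq\sqrt{ab}$, this gives
\begin{equation}
P^*(A_{i^*},A_{i^*}^c)\;\leq\;\frac{1}{\T}\int \sqrt{\pi^*_{i^*}(x)\,\pi^*_{i^*+1}(x)}\,\dd x.
\end{equation}
Since the smaller of $\Pi^*(A_{i^*}),\Pi^*(A_{i^*}^c)$ is at least $1/\T$, we also have $\Pi^*(A_{i^*})-s\geq 1/\T-(4\T)^{-1}=3/(4\T)$.

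The technical core is to bound the Hellinger integral by $4F(\rho_{i^*})$. I would first invoke \cref{lm:ge-compare} (with $\wmin=1/2$) to pass from $\pi^*_i$ to the Gaussian mixture $\pi_i=\tfrac12(\pi_{i,1}+\pi_{i,2})$ of \cref{sec:first-aux-chain}, losing a factor of $2$. Expanding $\pi_i\,\pi_{i+1}$ into four Gaussian-product terms and using the subadditivity $\sqrt{\sum_k a_k}\leq\sum_k\sqrt{a_k}$ reduces the task to the four cross-integrals $\int\sqrt{\pi_{i,j}\,\pi_{i+1,j'}}\,\dd x$, $j,j'\in\{1,2\}$. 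Each is a textbook Gaussian calculation: completing the square yields
\begin{equation}
\int\sqrt{\pi_{i,j}(x)\,\pi_{i+1,j'}(x)}\,\dd x
\;=\;\left(\frac{2\sqrt{\beta_i\beta_{i+1}}}{\beta_i+\beta_{i+1}}\right)^{\!d/2}\!\exp\!\left(-\frac{\beta_i\beta_{i+1}}{4(\beta_i+\beta_{i+1})}\|\mu_j-\mu_{j'}\|^2\right)\;\leq\;F(\rho_i),
\end{equation}
because after substituting $\beta_{i+1}=\beta_i(1+\rho_i)$ the prefactor simplifies to $F(\rho_i)$ and the exponential is at most $1$. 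Assembling, $\int\sqrt{\pi^*_{i^*}\pi^*_{i^*+1}}\,\dd x\leq 4F(\rho_{i^*})$, so combining with the denominator bound gives $\Phi_s(P^*)\leq (16/3)F(\rho_{i^*})\leq 16\,F(\rho_{i^*})$. Taking the minimum over $i^*$ yields the first claim.

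For the analytic properties of $F$, let $g(\rho)=2\sqrt{1+\rho}/(2+\rho)$ so that $F=g^{d/2}$. A direct differentiation gives $g'(\rho)=-\rho/[(1+\rho)^{1/2}(2+\rho)^2]\leq 0$, proving monotonicity. For the quantitative bound, the identity $(1+\rho/2)^2=(1+\rho)+\rho^2/4$ rewrites $-\log g(\rho)=\tfrac12\log\!\bigl(1+\tfrac{\rho^2/4}{1+\rho}\bigr)$. For $\rho\in[0,1/2]$ the argument lies in $[0,1/16]$, so the elementary bound $\log(1+x)\geq x/2$ on $[0,1]$ yields $-\log g(\rho)\geq \rho^2/[16(1+\rho)]\geq \rho^2/24$, whence $F(\rho)\leq e^{-\rho^2 d/48}$. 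I foresee no serious obstacle; the main delicate point is bookkeeping the two separate factors of $2$ (one from \cref{lm:ge-compare}, one from expanding the mixture) without overshooting the advertised constant, but the slack between $16/3$ and $16$ leaves ample room.
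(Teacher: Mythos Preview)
Your argument is essentially sound and runs parallel to the paper's, but there is one genuine slip. You assert that ``\eqref{ass:Q} restricts $i$-swaps to neighbors'', and then keep only the single pair $(i^*,i^*+1)$ when computing $P^*(A_{i^*},A_{i^*}^c)$. Neither is justified: the proposition does not invoke~\eqref{ass:Q}, and in any case~\eqref{ass:Q} only fixes the value of $q(i,i\pm1)$ --- it does not force $q(k,\ell)=0$ for $|k-\ell|>1$. As written, your displayed bound on the flow omits all long-range contributions and is therefore false in general. The fix is easy and you already have the tool in hand: sum over all pairs $k\leq i^*<\ell$, bound each $\int\sqrt{\pi_{k,j}\,\pi_{\ell,j'}}\,\dd x\leq F(\beta_\ell/\beta_k-1)\leq F(\rho_{i^*})$ by the monotonicity of $F$ that you prove, and use $\sum_{\ell}q(k,\ell)\leq 1$ to get a numerator of order $\min\{i^*,\T-i^*\}/\T$, which matches the denominator. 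The constant $16$ survives.

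Apart from that, the two proofs differ only in packaging. The paper first passes to the auxiliary chain $\tilde{P}$ via \cref{lm:compare-st-one} (costing a factor $\wmin^{-2}=4$), bounds $\tilde{P}(B_i,B_i^c)$ over all pairs using the monotonicity argument above, and combines. You instead stay with $P^*$ and use the pointwise inequality $\pi^*_i\leq 2\pi_i$ from \cref{lm:ge-compare} directly inside the Hellinger integral. Both routes rely on the same Gaussian computation and reach the same constant; your direct route is arguably more elementary, while the paper's route handles the long-range jumps more transparently. Your analytic treatment of $F$ (via $-\log g(\rho)=\tfrac12\log(1+\tfrac{\rho^2/4}{1+\rho})$ and $\log(1+x)\geq x/2$) is cleaner than the paper's Taylor-expansion argument and gives the same bound $e^{-\rho^2 d/48}$.
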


\begin{proof}
Consider the auxiliary chain $\tilde{P}$ defined in \cref{sec:first-aux-chain}. For $A_i = \{i\} \times \bbR^d$, we have 
\begin{align}
    \tilde{P}(A_i, A_{i'}) &=  \int_{\bbR^d} \pi(i, x) q(i, i') \min\left\{ 1, \, \frac{\pi(i', x)}{\pi(i, x) }\right\} \dd x \\ 
    & = q(i, i')  \int_{\bbR^d}    \min\left\{ r_i \pi_i(x), \, r_{i'} \pi_{i'}(x) \right\} \dd x \\
    & = \frac{ q(i, i') }{\T}    \int_{\bbR^d}  \sqrt{\pi_i(x) \pi_{i'}(x) } \dd x. \label{eq:ratio-1}
\end{align}
Since $\pi_i(x) = 0.5 \pi_{i, 1}(x) + 0.5 \pi_{i, 2}(x)$ and $\sqrt{a_1 + a_2} \leq \sqrt{a_1} + \sqrt{a_2}$, we have  
\begin{equation}\label{eq:ratio-2}
     \sqrt{\pi_i(x) \pi_{i'}(x) } \leq \frac{1}{2} \sum_{j =1 }^2 \sum_{j' = 1}^2 \sqrt{ \pi_{i, j}(x) \pi_{i', j'}(x) }.  
\end{equation}

A routine calculation yields that 
\begin{equation}\label{eq:ratio-3}
    \int_{\bbR^d} \sqrt{ \pi_{i, j}(x) \pi_{i', j'}(x) } \dd x 
    = \left(\frac{ 2 \sqrt{\beta_i  \beta_{i'}}  }{\beta_i + \beta_{i'}} \right)^{d/2}
    \exp\left\{ - \frac{ \beta_i \beta_{i'} \|\mu_j - \mu_{j'}\|^2 }{4 (\beta_i + \beta_{i'})} \right\}. 
\end{equation}
We simply bound the exponential term by $1$. 
Substituting~\eqref{eq:ratio-2} and~\eqref{eq:ratio-3} into~\eqref{eq:ratio-1} yields, for $i' > i$, 
\begin{align}\label{eq:opt-ratio-temp1}
    \tilde{P}(A_i, A_{i'}) &\leq   \frac{2 q(i, i')}{\T}    \left(\frac{ 2 \sqrt{\beta_i  \beta_{i'}}  }{\beta_i + \beta_{i'}} \right)^{d/2}
    =  \frac{2 q(i, i')}{\T} F \left( \frac{\beta_{i'}}{\beta_i} - 1 \right),
\end{align}
where $F$ is given by~\eqref{eq:def-F}. 
It is straightforward to verify that the function $F(\rho)$ is monotone increasing on $(-\infty, 0]$ and monotone decreasing on $[0, \infty)$.  
Hence, letting $B_i = \cup_{k \leq i} A_k$ and $\rho_i = \beta_{i+1}/\beta_i - 1$,  we find that 
\begin{align}
    \tilde{P}( B_i, B_i^c) = \sum_{k \leq i} \sum_{\ell \geq i + 1} \tilde{P}(A_k, A_\ell) 
    \leq  F(\rho_i) \sum_{k \leq i} \sum_{\ell \geq i + 1} \frac{2 q(k, \ell)}{\T} 
    \leq \frac{2 i}{\T} F(\rho_i). 
\end{align} 
Since $\tilde{\Pi}(B_i) = \sum_{k \leq i} r_k = i / \T$, we can apply \cref{lm:compare-st-one} to get 
\begin{equation}
    \Phi_s(P^*) \leq 4 \Phi_{2s} (\tilde{P}) \leq   \frac{4 \, \tilde{P}( B_i, B_i^c) }{ \tilde{\Pi}(B_i) - 2s } \leq 16 F(\rho_i), 
\end{equation}
provided that $s \leq i / (4 \T)$. This yields the claimed $s$-conductance bound. To prove the last claim about $F(\rho)$,  we use $\rho - \rho^2 /2 \leq \log(1 + \rho) \leq \rho - \rho^2/2 + \rho^3/3$ for $\rho \geq 0$ to get
 \begin{align}
     \log \frac{2 \sqrt{1 + \rho}}{2 + \rho}
     = \frac{1}{2}\log (1 + \rho) - \log \left(1 + \frac{\rho}{2} \right)  
     \leq - \frac{\rho^2}{8} + \frac{\rho^3}{6}. 
 \end{align} 
A simple calculation shows that the right-hand side is bounded by $-\rho^2 / 24$ on $[0, 1/2]$. 
\end{proof}

Combining \cref{prop:opt-beta1} and \cref{prop:opt-beta-ratio}, we obtain the following necessary conditions on the temperature ladder for polynomial complexity of the simulated tempering algorithm with respect to both $d$ and $D$. They show that   our choice of the   temperature ladder given in~\eqref{eq:opt-choice-beta}  achieves the asymptotic optimal dependence on $d, D$ up to logarithmic factors. 

\begin{corollary}\label{coro:necessary}
    Let $\pi^*$ be given by~\eqref{eq:counter-pi-star} with  $\mu_1 = (D, 0, \dots, 0)$ and $\mu_2 = -  (D, 0, \dots, 0)$ for some $D > 0$. Consider the simulated tempering chain $P^*$ with $r_i = 1 / \T$ and  proposal $Q_i(x, \cdot) = N(x, \, (2h \beta_i^{-1}) I_d )$ for some $h > 0$. Let $s \leq (4 \T)^{-1}$. If $\Phi_s(P^*) = \Omega ( d^{-c} D^{-c} )$ for some fixed $c \geq 0$ as $d, D \rightarrow \infty$,  then
    \begin{equation}
        \beta_1 = O \left( \frac{\log D}{D^2} \right), \quad \frac{\beta_{i+1}}{\beta_i} = 1 + O \left( \frac{\sqrt{\log d} }{\sqrt{d}} \right), \text{ for each } i \in [\T - 1]. 
    \end{equation}
\end{corollary}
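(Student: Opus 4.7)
The plan is to combine the two $s$-conductance upper bounds from \cref{prop:opt-beta1} and \cref{prop:opt-beta-ratio} with the hypothesis $\Phi_s(P^*) \geq c_0 d^{-c} D^{-c}$, and extract the two stated conditions independently by a short asymptotic argument.

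I would first handle the temperature spacing. Applying \cref{prop:opt-beta-ratio} at each $i \in [\T - 1]$ gives $16 F(\rho_i) \geq c_0 d^{-c} D^{-c}$. Since $F(\rho)$ decays exponentially in $d$ for any $\rho$ bounded away from $0$, the polynomial lower bound forces $\rho_i \to 0$ as $d \to \infty$, so in particular $\rho_i \in [0, 1/2]$ for $d$ large enough. The explicit quadratic bound $F(\rho_i) \leq e^{-\rho_i^2 d / 48}$ in \cref{prop:opt-beta-ratio} then yields
\begin{equation}
    \frac{\rho_i^2 d}{48} \leq c (\log d + \log D) + O(1),
\end{equation}
which gives $\rho_i = O(\sqrt{(\log d + \log D)/d})$, and simplifies to the stated $O(\sqrt{\log d}/\sqrt{d})$ under the convention that $\log D = O(\log d)$ (standard when $d$ and $D$ are polynomially related).

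For $\beta_1$, I would apply \cref{prop:opt-beta1} at the $h > 0$ actually used by the algorithm, together with $s \leq (4\T)^{-1}$ so the prefactor $4/(1 - 2s)$ is bounded, to obtain
\begin{equation}
    \left( \frac{2}{2 + h} \right)^{d/2} \exp\left( -\frac{\beta_1 D^2}{2 + h} \right) \geq c_1 d^{-c} D^{-c}
\end{equation}
for some constant $c_1 > 0$. Since $h$ is unknown, I would pass to the maximum of the left-hand side over $u = 2 + h \geq 2$, which is a routine one-dimensional optimization. Writing $A = \beta_1 D^2$, the interior stationary point is $u^\star = 2A/d$, admissible only when $A \geq d$. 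In that regime the maximum equals $(d/A)^{d/2} e^{-d/2} \leq e^{-d/2}$, which contradicts the polynomial lower bound for large $d$; hence $A < d$, the maximum is attained at the boundary $u = 2$, and equals $e^{-A/2}$. The lower bound then forces $\beta_1 D^2 = A \leq 2c (\log d + \log D) + O(1)$, which is $O(\log D)$ under the same polynomial-scaling convention and yields the desired bound on $\beta_1$.

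The main subtlety is the bookkeeping between $\log d$ and $\log D$: the hypothesis treats the two parameters symmetrically, whereas the stated conclusions single out $\log D$ (for $\beta_1$) and $\log d$ (for $\rho_i$) separately. Without assuming a polynomial relation between $d$ and $D$, the proof yields the slightly stronger but less compact bounds $\beta_1 = O(\log(dD)/D^2)$ and $\rho_i = O(\sqrt{\log(dD)/d})$, which still match the choice~\eqref{eq:opt-choice-beta} up to the logarithmic factors emphasized in the corollary statement. Apart from this notational issue, both steps are direct applications of the two preceding propositions and require no new technical ingredients.
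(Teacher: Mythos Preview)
Your proposal is correct. For the temperature-spacing bound on $\rho_i$ you proceed exactly as the paper does (it just writes ``follows from \cref{prop:opt-beta-ratio}''), though you spell out the intermediate step $F(\rho_i) \leq e^{-\rho_i^2 d/48}$ more explicitly.

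For $\beta_1$ your route differs slightly from the paper's. The paper argues in two steps: first, the factor $(2/(2+h))^{d/2}$ in \cref{prop:opt-beta1} already forces $h \downarrow 0$ for polynomial-in-$d$ behavior, so one may take $h \leq 1$; second, with $2+h \in [2,3]$ the exponential factor alone gives $\beta_1 D^2 = O(\log D)$ directly. You instead optimize the full bound over $u = 2+h \geq 2$, locate the critical point $u^\star = 2\beta_1 D^2/d$, and rule out the interior case before reading off the boundary value. Both reach the same conclusion; the paper's two-step argument is shorter because it decouples the $d$- and $D$-dependence before extracting $\beta_1$, whereas your optimization treats them jointly. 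Your remark that, absent a polynomial relation between $d$ and $D$, the honest bounds are $\beta_1 = O(\log(dD)/D^2)$ and $\rho_i = O(\sqrt{\log(dD)/d})$ is a valid observation that the paper's terse proof glosses over.
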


\begin{proof}
    By \cref{prop:opt-beta1}, in order that $\Phi_s(P^*)$ has polynomial dependence on $d$, we must have step size $h \downarrow 0$. Hence, we can assume $h \leq 1$, which then implies that $\beta_1 = O(  (\log D) / D^2 )$. The condition on $\beta_{i+1} / \beta_i$ follows from~\cref{prop:opt-beta-ratio}. 
\end{proof}

\section{Discussion}\label{sec:disc}
One major limitation of the analysis conducted in this work is the suboptimal complexity bound obtained when $Q_i$ is the MALA proposal. As discussed in Remark~\ref{rmk:order}, this is due to the potentially suboptimal approach we take to bounding the $s$-conductance of $P_{i, j}$ and the squaring of   $\Gap(\barP)$ when applying the decomposition bound given in \cref{th:conduct-decomp}  with the mixing time bound in~\eqref{eq:s-conduct-tv-rate}. The latter can likely be overcome by refining the mixing time bound to incorporate the decomposition of $\Phi_s(P)$ or generalizing Cheeger's inequality to connect $s$-conductance with approximate or restricted spectral gaps~\citep{atchade2021approximate, garg2025restricted}. The former seems more challenging to improve. Existing sharp $s$-conductance bounds for MALA with log-concave targets are obtained by comparing the MALA dynamics with Langevin diffusion~\citep{wu2022minimax, chewi2021optimal}, which leads to a step size $h = \Theta(d^{-1/2})$, and it is unclear whether such techniques extend to our setting where the potential used in the MALA proposal is biased. In fact, whether MALA does provide any improvement over RWM in the context of simulated tempering with mixture targets remains an open question. 

We also note that all existing complexity results for simulated tempering rely on certain decomposition bounds for spectral gap or conductance in multiplicative form, e.g., the one given in~\eqref{eq:gap-decomp}. However, as shown in~\citet{pillai2017elementary}, such multiplicative bounds are likely suboptimal and, in certain scenarios, could be replaced by tighter, additive bounds. Hence, even our complexity bound for simulated tempering with RWM given in~\eqref{eq:comp-rwm} may be further improved to achieve a better dependence on $d$. 

Extending our analysis to other sampling algorithms is another interesting direction for future work. In recent years, non-reversible tempering algorithms have drawn growing interest~\citep{syed2022non}, but their complexity is not fully understood. While non-reversible samplers are often considered more efficient than reversible ones, the recent work of~\citet{roberts2025quantifying} suggests that their asymptotic complexities are likely of the same order. 
Other extensions include parallel tempering and sequential Monte Carlo algorithms, which exhibit dynamics similar to that of simulated tempering~
\citep{lee2023improved, lee2024convergence}, and our techniques may be adapted to analyze them.

\section*{Acknowledgments}
The author would like to thank Jhanvi Garg, Krishnakumar Balasubramanian and the participants of the conference ``The Fast and Curious 2: MCMC in Action'' at the University of Toronto (where the author presented a preliminary version of this work) for helpful discussions. 
This work was supported in part by NSF through grants DMS-2311307 and DMS-2245591.

\bibliographystyle{plainnat}
\bibliography{Reference}

\newpage 

\appendix

\begin{center}
    \LARGE\textbf{Appendices}
\end{center}

\section{Auxiliary Results}\label{sec:aux}

\subsection{Properties of s-conductance}
\begin{lemma}\label{lm:s-conduct}
Let $P$ be a transition kernel on $(\cX, \cB(\cX))$ reversible w.r.t. the distribution $\Pi$. 
\begin{enumerate}[label=(\roman*)]
    \item For any $A \in \cB(\cX)$ and $s \in [0, 1/2)$, 
    \begin{equation}
        V_ \Pi(\ind_A) - s \leq \frac{ \cE_P(\ind_A) }{ \Phi_s(P)}. 
    \end{equation} 
    \item  Suppose there exists some $A_0 \in \cB(\cX)$ such that $ \Pi(A_0) = 1/2$. Then, for any $s \in [0, 1/2)$, 
    \begin{equation}
      \frac{\Gap(P)}{2} \leq   \Phi_s(P) \leq \frac{1}{1-2s}. 
    \end{equation}
    \item $\Phi_s(P) \leq \Phi_{s'}(P)$ whenever $0 \leq s \leq s' < 1/2$. 
\end{enumerate}
\end{lemma}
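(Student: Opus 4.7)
The plan is to prove the three parts sequentially, with part (i) supplying the main machinery. I would start from the two elementary identities $\cE_P(\ind_A)=P(A,A^c)$ (read off directly from the definition of the Dirichlet form applied to an indicator) and $V_\Pi(\ind_A)=\Pi(A)\Pi(A^c)\le\min\{\Pi(A),\Pi(A^c)\}$. If this minimum is at most $s$, then $V_\Pi(\ind_A)-s\le 0$ and the bound in (i) is trivial. Otherwise, let $B$ be whichever of $A$, $A^c$ has smaller $\Pi$-measure, so that $\Pi(B)\in(s,1/2]$ makes $B$ admissible in the infimum defining $\Phi_s(P)$. Reversibility of $P$ gives $P(B,B^c)=P(A,A^c)=\cE_P(\ind_A)$, so the definition of $s$-conductance yields $\cE_P(\ind_A)\ge\Phi_s(P)(\Pi(B)-s)\ge\Phi_s(P)(V_\Pi(\ind_A)-s)$, which is exactly (i).

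For part (ii), I would handle the two bounds independently. For the upper bound $\Phi_s(P)\le 1/(1-2s)$, simply plug the hypothesized set $A_0$ into the defining infimum and use the trivial inequality $P(A_0,A_0^c)\le\Pi(A_0)=1/2$ to obtain $\Phi_s(P)\le(1/2)/(1/2-s)=1/(1-2s)$. For the lower bound $\Gap(P)/2\le\Phi_s(P)$, test the variational formula for $\Gap(P)$ against $g=\ind_A$ for an arbitrary $A$ with $\Pi(A)\in(s,1/2]$: this gives $\cE_P(\ind_A)\ge\Gap(P)\,\Pi(A)\Pi(A^c)\ge\Gap(P)\,\Pi(A)/2\ge\Gap(P)(\Pi(A)-s)/2$, where the middle inequality uses $\Pi(A^c)\ge 1/2$ and the last uses $s\ge 0$. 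Dividing by $\Pi(A)-s$ and taking the infimum over admissible $A$ concludes.

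Part (iii) is immediate from the definition: if $s\le s'<1/2$, every $A$ admissible for $\Phi_{s'}(P)$ is also admissible for $\Phi_s(P)$, and $\Pi(A)-s\ge\Pi(A)-s'$ forces $P(A,A^c)/(\Pi(A)-s)\le P(A,A^c)/(\Pi(A)-s')$, so taking the infimum over the common admissible family already yields $\Phi_s(P)\le\Phi_{s'}(P)$. No step here is a genuine obstacle; the only technical care is the case analysis in (i), which I streamline by passing to the smaller of $A$ and $A^c$ so that reversibility permits a single unified treatment without having to split the proof according to whether $\Pi(A)$ lies above or below $1/2$.
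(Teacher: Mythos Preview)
Your proof is correct and follows essentially the same route as the paper's. The only cosmetic difference is in part~(i): the paper splits explicitly into the cases $\Pi(A)\le s$, $\Pi(A)\in(s,1/2]$, and $\Pi(A)>1/2$, whereas you collapse the latter two by passing to $B\in\{A,A^c\}$ with $\Pi(B)\le 1/2$; the underlying estimates $V_\Pi(\ind_A)\le\min\{\Pi(A),\Pi(A^c)\}$ and $P(A,A^c)=P(A^c,A)$ are identical, and parts~(ii) and~(iii) match the paper almost line for line.
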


\begin{proof}
Consider part (i), and fix an arbitrary $A \in \cB(\cX)$.
We always have $P(A, A^c) = P(A^c, A)$, which holds even if $P$ is not reversible. Hence, it follows from the definition of $\cE_P$ that 
\begin{equation}
    \cE_P(\ind_A) = \frac{1}{2} \left\{P(A, A^c) + P(A^c, A) \right\} = P(A, A^c) = P(A^c, A). 
\end{equation} 
If $ \Pi(A) \leq s$,  the asserted inequality in part (i) holds trivially. If $ \Pi(A) \in (s, 1/2]$, using  $V_ \Pi(\ind_A) =  \Pi(A)  \Pi(A^c)$ we get 
\begin{equation} 
    V_ \Pi(\ind_A) - s \leq     \Pi(A) - s \leq  \frac{  P(A, A^c) }{  \Phi_s(P)} = 
     \frac{  \cE_P(\ind_A) }{  \Phi_s(P)}.  
\end{equation} 
If $ \Pi(A) > 1/2$, we use $ V_ \Pi(\ind_A) - s \leq     \Pi(A^c) - s $ and then apply the same argument. 

For part (ii), the lower bound follows from part (i), since for any $A$ with $ \Pi(A) \in(s, 1/2]$, 
\begin{equation}
    \frac{ \cE_P(\ind_A)}{  \Pi(A) - s }  \geq     \frac{ \cE_P(\ind_A)}{   \Pi(A) }   \geq \frac{ \cE_P(\ind_A)}{ 2  V_ \Pi(A) } \geq \frac{\Gap(P)}{2}. 
\end{equation}
The upper bound follows by choosing $A = A_0$. 

To prove Part (iii),  we note that since $s' > s$,  
\begin{equation}
   \frac{P(A, A^c)}{\Pi(A) - s'} \geq \frac{P(A, A^c)}{\Pi(A) - s}. 
\end{equation}
Taking infimum on both sides over all $A$ such that $\Pi(A) \in (s', 1/2] \subset (s, 1/2]$, we get the claimed inequality. 
\end{proof}

\begin{lemma}\label{lm:s-conductance-isoperi}
    Let $\pi(x) \propto e^{-f(x)}$ be a probability density function on $\bbR^d$,  where $f$ is $m$-strongly convex.  Let $P$ be the transition kernel of a Markov chain on $\bbR^d$ reversible with respect to $\pi$. Suppose that for some convex $\cX_0 \subset \bbR^d$ and $\xi \in (0, 1/\sqrt{m}) $,  
    \begin{align}
       \Pi(\cX_0) \geq 1 - s, &\text{ for some } s \in [0, 1/2), \\ 
          \| P(x, \cdot) - P(y, \cdot) \|_{\TV}  \leq 3/4,  &\text{ whenever }
     x, y \in \cX_0 \text{ and } \| x - y\| \leq \xi. 
    \end{align}
   Then, $\Phi_s(P) \geq c \, \xi \sqrt{m}$ for some universal constant $c > 0$. 
\end{lemma}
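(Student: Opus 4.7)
The plan is a Cheeger-style ``bad set'' argument combined with the standard isoperimetric inequality for $m$-strongly log-concave measures, following Lovász--Simonovits and its adaptation in \citet{dwivedi2019log}. Fix a measurable $A$ with $\Pi(A) \in (s, 1/2]$; it is enough to lower bound $P(A, A^c)$ by $c\,\xi\sqrt{m}\,(\Pi(A) - s)$ for some universal constant $c > 0$. First I would introduce the ``stuck'' subsets inside $\cX_0$,
\begin{equation}
    A_1 = \{x \in A \cap \cX_0 : P(x, A^c) < 1/8\}, \quad A_1' = \{x \in A^c \cap \cX_0 : P(x, A) < 1/8\}.
\end{equation}
For any $x \in A_1$ and $y \in A_1'$ one has $P(x, A) > 7/8$ and $P(y, A) < 1/8$, so testing the definition of total variation against the set $A$ gives $\|P(x, \cdot) - P(y, \cdot)\|_{\TV} > 3/4$. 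The contrapositive of the hypothesis then forces $\|x - y\| > \xi$, so $d(A_1, A_1') \geq \xi$.

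The analysis then splits into two cases. Case (a) is when $\Pi(A_1) \leq \Pi(A \cap \cX_0)/2$ or $\Pi(A_1') \leq \Pi(A^c \cap \cX_0)/2$. Assuming the former (the latter is symmetric via reversibility $P(A, A^c) = P(A^c, A)$), the set $(A \cap \cX_0) \setminus A_1$ has $\Pi$-mass at least $\Pi(A \cap \cX_0)/2 \geq (\Pi(A) - s)/2$ because $\Pi(\cX_0^c) \leq s$, and each such $x$ satisfies $P(x, A^c) \geq 1/8$. This gives $P(A, A^c) \geq (\Pi(A) - s)/16$, i.e.\ $\Phi_s(P) \geq 1/16$, which dominates $c\,\xi\sqrt{m}$ since $\xi\sqrt{m} < 1$.

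Case (b) is the complementary regime, in which $\min\{\Pi(A_1), \Pi(A_1')\} > (\Pi(A) - s)/2$ (using also $\Pi(A^c \cap \cX_0) \geq \Pi(A^c) - s \geq \Pi(A) - s$). Here I would apply the isoperimetric inequality for $m$-strongly log-concave measures to the partition $\{A_1, A_1', S_3\}$ with $S_3 = \bbR^d \setminus (A_1 \cup A_1')$. Since $d(A_1, A_1') \geq \xi$ and $\xi\sqrt{m} < 1$, this yields $\Pi(S_3) \geq c_0 \sqrt{m}\,\xi\,(\Pi(A) - s)/2$ for a universal $c_0$. The decomposition $S_3 = \cX_0^c \cup ((A\cap\cX_0) \setminus A_1) \cup ((A^c\cap\cX_0) \setminus A_1')$ together with $\Pi(\cX_0^c) \leq s$ gives
\begin{equation}
     \Pi((A\cap\cX_0)\setminus A_1) + \Pi((A^c\cap\cX_0)\setminus A_1') \geq \Pi(S_3) - s,
\end{equation}
and combining with reversibility and the $1/8$ lower bound on escape probabilities off $A_1 \cup A_1'$ yields $P(A, A^c) \geq (\Pi(S_3) - s)/16$.

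The main obstacle I anticipate is the residual $-s$ correction in the Case (b) bound: if the isoperimetric lower bound on $\Pi(S_3)$ does not itself exceed $s$, the resulting bound on $P(A, A^c)$ is vacuous. I expect this to be overcome either by invoking the stronger ``product'' form of isoperimetry $\Pi(S_3) \geq c_0 \sqrt{m}\,\xi\,\Pi(A_1)\Pi(A_1')/(\Pi(A_1) + \Pi(A_1'))$, or by sharpening the Case (a) threshold so that whenever the isoperimetric bound in Case (b) fails to absorb the $-s$ term, Case (a) already handles the set $A$. Either route delivers $\Phi_s(P) \geq c\,\xi\sqrt{m}$ with a universal constant $c$, as claimed.
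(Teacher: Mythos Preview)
Your overall architecture---bad-set argument plus isoperimetry---is exactly the route taken in the paper (which defers to Lemma~6 of \citet{dwivedi2019log}). The gap you yourself flag in Case~(b) is real, and neither of your proposed patches resolves it: the ``product form'' $\Pi(S_3)\geq c_0\sqrt{m}\,\xi\,\Pi(A_1)\Pi(A_1')$ is no larger than the minimum form you already used, and sharpening the Case~(a) threshold cannot absorb an additive $-s$ that may dwarf $c_0\sqrt{m}\,\xi\,(\Pi(A)-s)$ when $\xi\sqrt{m}$ is small.

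The missing idea is to apply the isoperimetric inequality not to $\Pi$ on $\bbR^d$ but to $\Pi$ \emph{restricted to} $\cX_0$. Because $\cX_0$ is convex and $f$ is $m$-strongly convex, the conditional measure $\Pi(\,\cdot\mid\cX_0)$ is again $m$-strongly log-concave, so the same isoperimetric inequality holds for the partition $\{A_1,\,A_1',\,S_3\cap\cX_0\}$ of $\cX_0$. This yields
\[
\Pi(S_3\cap\cX_0)\;\geq\;c_0\,\sqrt{m}\,\xi\,\min\{\Pi(A_1),\Pi(A_1')\}
\]
directly (up to a factor $\Pi(\cX_0)^{-1}\leq 2$), and $S_3\cap\cX_0=((A\cap\cX_0)\setminus A_1)\cup((A^c\cap\cX_0)\setminus A_1')$ is precisely the set on which the escape probability is at least $1/8$. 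The $-s$ correction never appears. This is what the paper's proof does, and it is the one substantive point your proposal is missing.
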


\begin{proof}
This is essentially Lemma 6 of~\citet{dwivedi2019log}, which shows that  for any Borel set $A$, 
\begin{equation}
    P(A, A^c) \geq \frac{1}{16} \min \left\{ 1,  \, \frac{ \Pi(\cX_0)^2  \log 2}{8 } \xi \sqrt{m}  \right\} \min\left\{  \Pi(A \cap \cX_0), \,  \Pi(A^c \cap \cX_0) \right\}. 
\end{equation}
As shown in~\citet{dwivedi2019log}, this can be proved by first observing that  $\Pi$ restricted to $\cX_0$ is still $m$-strongly log-concave (and thus satisfies an isoperimetric inequality) and then applying a standard argument based on this inequality.  
Using the assumptions $\xi \leq 1/ \sqrt{m}$ and $ \Pi(\cX_0)  \geq  1 - s \geq 1/2$, we get 
\begin{equation}
   P(A, A^c)    \geq   \frac{  \log 2}{512} \, \xi \sqrt{m} \, \min\left\{  \Pi(A) - s, \,  \Pi(A^c ) - s \right\}. 
\end{equation} 
Since $A$ is arbitrary, this proves that $\Phi_s(P) \geq c\,  \xi \sqrt{m}$ with $c = (\log 2)/512$.  
\end{proof}

\subsection{Comparison Results for the Simulated Tempering Chain} \label{sec:aux-compare-st}

The proof of the following lemma is omitted and can be found in~\citet{ge2018simulated}.  

\begin{lemma}[Lemma 7.3 of~\citet{ge2018simulated}]\label{lm:ge-compare}
Let $\pi^*(i, x)$ be given by~\eqref{eq:def-pi-star-ix} and $\pi(i, x)$ be given by~\eqref{eq:def-approx-pi}. For any $(i, x) \in [\T] \times \bbR^d$, 
\begin{equation}\label{eq:compare-two-pi}
    \wmin \leq \frac{\pi^*(i, x)}{\pi(i, x)} \leq \frac{1}{\wmin}, \text{ where } \wmin = \min_{j \in [\K]} w_j.   
\end{equation}  
\end{lemma}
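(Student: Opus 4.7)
The plan is to decompose the ratio $\pi^*(i,x)/\pi(i,x)$ into a pointwise factor times a ratio of normalizing constants, and bound each piece by a power of $\wmin$. Setting $a_j(x) = e^{-f(x-\mu_j)}$ and $Z_i^* = \int_{\bbR^d}\bigl(\sum_j w_j a_j(y)\bigr)^{\beta_i}\,dy$, the definitions~\eqref{eq:def-pi-star-ix} and~\eqref{eq:def-approx-pi} give
\begin{equation}
\frac{\pi^*(i,x)}{\pi(i,x)} \,=\, \frac{\bigl(\sum_j w_j a_j(x)\bigr)^{\beta_i}}{\sum_j w_j a_j(x)^{\beta_i}} \,\cdot\, \frac{C_i}{Z_i^*}.
\end{equation}
By translation invariance of Lebesgue measure, $\int a_j(y)^{\beta_i}\,dy = \int e^{-\beta_i f(z)}\,dz = C_i$ for every $j$, so $\int \sum_j w_j a_j(y)^{\beta_i}\,dy = C_i$. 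Hence integrating any pointwise bound on the first factor immediately yields a bound on $C_i/Z_i^*$.

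The technical core is then the following pointwise two-sided inequality, valid for $\beta_i \in (0,1]$, non-negative $\{a_j\}$, and weights with $\sum_j w_j = 1$ and $w_j \geq \wmin$:
\begin{equation}
\sum_j w_j a_j^{\beta_i} \,\leq\, \Bigl(\sum_j w_j a_j\Bigr)^{\beta_i} \,\leq\, \wmin^{-1} \sum_j w_j a_j^{\beta_i}.
\end{equation}
The lower bound is Jensen's inequality for the concave function $t \mapsto t^{\beta_i}$. For the upper bound I would invoke the subadditivity $\bigl(\sum_j b_j\bigr)^{\beta_i} \leq \sum_j b_j^{\beta_i}$, which holds for any $\beta_i \in (0,1]$ and $b_j \geq 0$, with $b_j = w_j a_j$, giving $\bigl(\sum_j w_j a_j\bigr)^{\beta_i} \leq \sum_j w_j^{\beta_i} a_j^{\beta_i}$, and then bound $w_j^{\beta_i} = w_j \cdot w_j^{\beta_i - 1} \leq w_j \wmin^{\beta_i - 1} \leq w_j/\wmin$, where the final step uses $\beta_i - 1 \in [-1,0]$ together with $\wmin \in (0,1]$.

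To conclude, integrating the displayed pointwise bounds yields $C_i \leq Z_i^* \leq \wmin^{-1} C_i$, i.e., $C_i/Z_i^* \in [\wmin,\,1]$, while the pointwise factor lies in $[1,\,\wmin^{-1}]$. Multiplying gives $\pi^*(i,x)/\pi(i,x) \in [\wmin,\wmin^{-1}]$ uniformly in $(i,x)$, as required. There is no substantial obstacle here; the only mildly delicate bookkeeping is the chain $w_j^{\beta_i - 1} \leq \wmin^{\beta_i - 1} \leq \wmin^{-1}$, which relies on $\beta_i \leq \beta_{\T} = 1$ and $\wmin \leq 1$, and the use of concavity in one direction together with $\beta_i$-subadditivity in the other.
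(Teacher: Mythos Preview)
Your argument is correct and is essentially the standard proof of this lemma (the paper itself omits the proof and simply cites \citet{ge2018simulated}). The decomposition into a pointwise factor and a ratio of normalizing constants, together with Jensen's inequality in one direction and the subadditivity $(\sum_j b_j)^{\beta_i}\le\sum_j b_j^{\beta_i}$ in the other, is exactly how this result is obtained.
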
  

Next, we prove a result for comparing the acceptance rates of the chains $\tilde{P}$ and $P$. This result is well known in the MCMC literature and has also been used in~\citet{garg2025restricted}. It essentially says that when targeting a mixture distribution, it is always more advantageous to compute the acceptance probability in the Metropolis--Hastings schemes using the marginal densities. 
Recall the acceptance probabilities for $P$ and $\tilde{P}$: 
\begin{align}
 a_{\tilde{P}}( (i, x), (i, x') )    =   1 \wedge \frac{ \pi(i, x') q_i(x', x)}{\pi(i, x) q_i(x, x')},   &\quad 
  a_{\tilde{P}}( (i, x), (i', x)  ) =   1 \wedge \frac{ \pi(i', x) q(i', i)}{\pi(i, x) q(i, i) },    \\
    a_{P}( (i, j, x), \, (i, j, x') )  =   1 \wedge \frac{ \pi(i, j, x') q_i(x', x)}{\pi(i, j, x) q_i(x, x')},  &\quad
    a_{P}( (i, j, x), \, (i', j, x) )  =   1 \wedge \frac{ \pi(i', j, x)  q(i', i)}{\pi(i, j, x) q(i, i')}.  
\end{align}

\begin{lemma}\label{lm:compare-acc}
Let $\pi(i, j, x)$ be given by~\eqref{eq:def-joint-pi}  and $a_P, a_{\tilde{P}}$ be as defined above. Then, 
\begin{align}
    \sum_{j \in [\K]}    \pi(i, j, x)   a_P( (i, j, x), \, (i', j, x) ) 
    &\leq \pi(i,  x) a_{\tilde{P}}( (i, x), \, (i', x) ),  \\
  \sum_{j \in [\K]}    \pi(i, j, x)  a_P( (i, j, x), \, (i, j, x') )  &\leq \pi(i, x)  a_{\tilde{P}}( (i,   x), \, (i,  x') ).
\end{align}
\end{lemma}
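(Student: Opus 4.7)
The plan is to reduce both inequalities to the following elementary fact: for any nonnegative sequences $(a_j)_{j \in [\K]}$ and $(b_j)_{j \in [\K]}$,
\begin{equation}
    \sum_{j \in [\K]} \min\{a_j, b_j\} \;\leq\; \min\!\left\{ \sum_{j \in [\K]} a_j, \; \sum_{j \in [\K]} b_j \right\},
\end{equation}
which is immediate from $\min\{a_j, b_j\} \leq a_j$ and $\min\{a_j, b_j\} \leq b_j$ summed over $j$.

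For the first inequality (temperature update), I would first use the identity $b \cdot \min\{1, a/b\} = \min\{a, b\}$ applied to the acceptance probability $a_P$ to rewrite
\begin{equation}
    \pi(i, j, x) \, q(i, i') \, a_P( (i, j, x), (i', j, x) ) = \min\bigl\{ \pi(i, j, x)\, q(i, i'), \; \pi(i', j, x)\, q(i', i) \bigr\},
\end{equation}
and similarly on the right-hand side,
\begin{equation}
    \pi(i, x)\, q(i, i') \, a_{\tilde{P}}( (i, x), (i', x) ) = \min\bigl\{ \pi(i, x)\, q(i, i'), \; \pi(i', x)\, q(i', i) \bigr\}.
\end{equation}
Setting $a_j = \pi(i, j, x)\, q(i, i')$ and $b_j = \pi(i', j, x)\, q(i', i)$, the marginalization identity $\sum_{j} \pi(i, j, x) = \pi(i, x)$ and the analogous identity for $\pi(i', x)$ allow me to apply the elementary fact above, yielding the first inequality after dividing through by $q(i, i')$.

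The second inequality (location update) follows by exactly the same rewriting with $a_j = \pi(i, j, x)\, q_i(x, x')$ and $b_j = \pi(i, j, x')\, q_i(x', x)$; then $\sum_j a_j = \pi(i, x)\, q_i(x, x')$ and $\sum_j b_j = \pi(i, x')\, q_i(x', x)$, and the elementary subadditivity of $\min$ under summation finishes the argument. There is really no technical obstacle here: the only content is recognizing that rewriting the Metropolis--Hastings acceptance rate as $\min\{a, b\}/b$ converts a Jensen-type comparison into the trivial fact that $\sum \min \leq \min \sum$.
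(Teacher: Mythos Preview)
Your argument is correct and amounts to the same underlying inequality as the paper's proof, just packaged differently. The paper applies Jensen's inequality to the concave map $t \mapsto \min\{1,t\}$ with weights $\pi(j \mid i,x)$ and then multiplies through by $\pi(i,x)$; you instead first clear denominators via the identity $b\min\{1,a/b\}=\min\{a,b\}$ and then invoke the subadditivity $\sum_j \min\{a_j,b_j\}\le \min\{\sum_j a_j,\sum_j b_j\}$. These two moves are equivalent, and your formulation is arguably the more transparent of the two since it avoids any appeal to Jensen. One trivial caveat: your ``divide through by $q(i,i')$'' step tacitly assumes $q(i,i')>0$; when $q(i,i')=0$ the claimed inequality is vacuous anyway, so you may either note this or, as you hint with the second inequality, simply set $a_j=\pi(i,j,x)$ and $b_j=\pi(i',j,x)\,q(i',i)/q(i,i')$ directly without the extra multiplication.
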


\begin{proof}
For any joint probability density function $p(y, z) \geq 0$ and constant $c \in \bbR$,  using Jensen's inequality  and  the concavity of minimum,  we find that 
 \begin{equation}
   \int p(z \mid y)  \min\left\{1,  \frac{c\, p(y', z)   }{p(y, z)  } \right\}  \dd z   \leq    \min\left\{1,  \int \frac{c\, p(y', z)   }{p(y, z)  }    p(z \mid y) \dd z \right\}=  \min\left\{ 1, \,  \frac{ c\,   p(y') }{  p(y) } \right\}. 
  \end{equation}
Since $p(z \mid y) = p(y, z) / p(y)$, we get 
\begin{equation}
      \int p(y, z) \min\left\{1,  \frac{c\, p(y', z)   }{p(y, z)  } \right\}  \dd z   \leq   p(y ) \min\left\{ 1, \,  \frac{ c\,   p(y') }{  p(y) } \right\}. 
\end{equation}

To prove the first asserted inequality, let $y = (i, x), y' = (i', x)$ and $z = j$, $p(y, z) = \pi(i, j, x)$ and $c = q(i', i) / q(i, i')$, which yields
\begin{equation}
\sum_{j \in [\K]} \pi(i, j, x) \min\left\{ 1, \,  \frac{\pi(i', j, x)}{\pi(i, j, x)}  \frac{q(i', i)}{q(i, i')}  \right\}
\leq  \pi(i, x)\min\left\{ 1, \,  \frac{\pi(i',  x)}{\pi(i,  x)}  \frac{q(i', i)}{q(i, i')}  \right\}. 
\end{equation}
The second claimed inequality can be proved similarly.  
\end{proof}

\subsection{Other Auxiliary Results} 
\label{sec:prelim}

\begin{lemma}\label{lm:KL}
Let $f \geq 0$ and $p_\beta$ denote a probability density function on $\bbR^d$ given by 
\begin{equation}
    p_\beta(x) = e^{ - \beta f(x) - \psi(\beta) }, \text{ where } \psi(\beta) = \log \int_{\bbR^d} e^{-\beta f(x)} \dd x. 
\end{equation}
Let $v(\beta)$ denote the variance of $f(X)$ with $X \sim p_\beta$. 
Then, 
\begin{equation}
    \KL( p_\beta \,\|\, p_{\beta'} ) = \int_\beta^{\beta'} (\beta' - z)   v(z) \, \dd z. 
\end{equation}
\end{lemma}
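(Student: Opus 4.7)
The plan is to recognize $\{p_\beta\}$ as a one-parameter exponential family with natural parameter $-\beta$, sufficient statistic $f$, and log-partition function $\psi(\beta)$. The standard identities for exponential families give $\psi'(\beta) = -\mathbb{E}_{p_\beta}[f] =: -\mu(\beta)$ and $\psi''(\beta) = \Var_{p_\beta}[f] = v(\beta)$. These follow by differentiating $\int e^{-\beta f(x) - \psi(\beta)} \, dx = 1$ under the integral (or equivalently differentiating $\psi(\beta) = \log \int e^{-\beta f(x)}\, dx$ twice), which is justified since $f \geq 0$ makes the dominated convergence argument routine.

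Next I would write out the KL divergence directly from the definition:
\begin{align*}
\KL(p_\beta \,\|\, p_{\beta'}) &= \int p_\beta(x) \bigl[ -\beta f(x) - \psi(\beta) + \beta' f(x) + \psi(\beta') \bigr] \, dx \\
&= (\beta' - \beta) \mu(\beta) + \psi(\beta') - \psi(\beta).
\end{align*}
The final step is to expand $\psi(\beta')$ around $\beta$ using Taylor's theorem with integral remainder:
\begin{equation*}
\psi(\beta') - \psi(\beta) = \psi'(\beta)(\beta' - \beta) + \int_\beta^{\beta'} (\beta' - z)\, \psi''(z)\, dz = -\mu(\beta)(\beta' - \beta) + \int_\beta^{\beta'} (\beta' - z)\, v(z)\, dz.
\end{equation*}
Substituting this into the expression for KL, the $\mu(\beta)(\beta'-\beta)$ terms cancel and leave exactly $\int_\beta^{\beta'} (\beta' - z) v(z)\, dz$, as claimed.

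There is no real obstacle here; the only technical point to be careful about is the differentiation under the integral sign needed to establish $\psi'' = v$, but since $f \geq 0$ and $p_\beta$ is a well-defined probability density (implicit in the hypotheses via the definition of $\psi$), the usual exponential-family differentiation applies. The result is symmetric in structure to the standard Bregman-divergence representation of KL for exponential families, and the Taylor-with-remainder form is what produces the weighted integral of the Fisher information $v(z)$ on the right-hand side.
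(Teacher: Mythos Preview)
Your proposal is correct and follows essentially the same route as the paper: both identify the exponential-family structure, use $\psi'(\beta)=-\mathsf{E}_{p_\beta}[f]$ and $\psi''(\beta)=v(\beta)$, and reduce $\KL(p_\beta\,\|\,p_{\beta'})=\psi(\beta')-\psi(\beta)-(\beta'-\beta)\psi'(\beta)$ to the claimed integral. The only cosmetic difference is that the paper writes out the double integral and applies Fubini explicitly, whereas you invoke Taylor's theorem with integral remainder directly; these are the same argument.
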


\begin{proof}
This representation of KL divergence is classical; see, e.g.,~\citet[Chap. 3.5]{amari2000methods}, for a more general result. 
To prove the claim,   first observe that $p_\beta$ can be seen as an exponential family with sufficient statistic $-f(X)$ and parameter $\beta$. Then, by a standard result for exponential families~\citep[Prop. 3.2]{shao1999mathematical},  if $X \sim p_\beta$, 
\begin{equation}\label{eq:fisher-info}
    \sE[ f(X) ] =  -\psi'(\beta),  \quad v(\beta) = \mathrm{Var}(f(X)) =   \psi''(\beta) \geq 0. 
\end{equation} 

By the definition of KL divergence, 
\begin{align*}
    \KL( p_\beta \,\|\, p_{\beta'} ) &= \int_{\bbR^d} p_\beta(x) \log \frac{p_\beta(x)}{p_{\beta'}(x)}  \dd x \\
    &= \int_{\bbR^d} e^{ - \beta f(x) - \psi(\beta) } \left[ (\beta' - \beta) f(x) + \psi(\beta') - \psi(\beta) \right]  \dd x  \\
    &= \psi(\beta') - \psi(\beta) + (\beta' - \beta)  \int_{\bbR^d} e^{ - \beta f(x) - \psi(\beta) }  f(x)  \dd x  \\
    &= \psi(\beta') - \psi(\beta) - (\beta' - \beta) \psi'(\beta), 
\end{align*}
where the last step follows from~\eqref{eq:fisher-info}. 
Expressing $\psi(\beta') - \psi(\beta)$ as an integral of $\psi'$, we get 
\begin{align*}
    \KL( p_\beta \,\|\, p_{\beta'} ) &=  \int_\beta^{\beta'} [ \psi'(y)  - \psi'(\beta) ] \dd y
    = \int_\beta^{\beta'} \left\{ \int_{\beta}^{y} \psi''(z) \dd  z \right\} \dd y
\end{align*}

Since 
$$\{ (y, z) \colon z \in [\beta, y], \, y \in [\beta, \beta'] \} = \{ (y, z) \colon z \in [\beta, \beta'], \, y \in [z, \beta'] \},$$
the claim then follows from~\eqref{eq:fisher-info} and a routine calculation using Fubini's theorem. 
\end{proof}

\newpage 

\section{State Decomposition for Reversible Markov Chains}  \label{sec:proof-conduct-decomp}

We  recall the following result proved in~\citet{jerrum2004elementary}. 

\begin{proposition}[Theorem 1 of~\citet{jerrum2004elementary}]\label{th:jerrum}
Consider the setting of~\cref{def:decomp}. For any measurable and square-integrable function $g$, define 
\begin{equation}\label{eq:def-Pkl}
    \cE_{P}^{k, \ell}(g)  = \int_{x \in \cX_k} \int_{y \in \cX_{\ell}} [g(y) - g(x)]^2  \Pi(\dd x) P(x, \dd y), 
\end{equation}
and $\gamma = \max_{k \in [n]} \sup_{x \in \cX_k} P(x, \cX \setminus \cX_k)$. 
Then, we have 
\begin{align}
     \cE_P(g)  
    &=   \frac{1}{2} \sum_{k \neq \ell} \cE_{P}^{k, \ell}(g) + \sum_{k \in [n]} \bar{\pi}(k) \cE_{P_k}(g), \label{eq:decomp-E} \\  
    V_ \Pi(g)    &\leq \frac{3}{2 \Gap(\barP)}\sum_{k \neq \ell} \cE_{P}^{k, \ell}(g)   + \left[\frac{3 \gamma}{\Gap(\barP)} + 1\right] \sum_{k \in [n]} \bar{\pi}(k)  V_{ \Pi_k}(g).   \label{eq:decomp-V}
\end{align}
\end{proposition}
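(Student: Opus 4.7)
The plan is to handle the two statements separately. The equality for $\cE_P(g)$ is essentially bookkeeping: I would split the double integral defining $\cE_P(g)$ according to whether the endpoints $(x,y)$ lie in the same block $\cX_k$. The off-diagonal contributions aggregate directly to $\tfrac{1}{2}\sum_{k\neq\ell} \cE_P^{k,\ell}(g)$, and for the diagonal piece I would use that $P$ and $P_k$ agree on $\cX_k\times\cX_k$ away from $\{x=y\}$ (where the integrand vanishes), together with $\Pi(\dd x) = \bar\pi(k)\,\Pi_k(\dd x)$ for $x\in\cX_k$, recovering $\sum_k \bar\pi(k)\cE_{P_k}(g)$.

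For the variance inequality I would invoke the law of total variance. Setting $\bar g(k) = \Pi_k(g)$, one has
\begin{equation}
V_\Pi(g) = \sum_{k\in[n]}\bar\pi(k) V_{\Pi_k}(g) + V_{\bar\pi}(\bar g),
\end{equation}
so it remains to bound $V_{\bar\pi}(\bar g)$. By the Poincar\'e-type definition of spectral gap applied to $\barP$, $V_{\bar\pi}(\bar g) \leq \cE_{\barP}(\bar g)/\Gap(\barP)$, and the entire task reduces to controlling $\cE_{\barP}(\bar g)$ in terms of $\sum_{k\neq\ell}\cE_P^{k,\ell}(g)$ and $\sum_k\bar\pi(k) V_{\Pi_k}(g)$.

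To do this I would use the identity $\bar\pi(k)\barP(k,\ell) = \int_{\cX_k} P(x,\cX_\ell)\,\Pi(\dd x)$ for $k\neq\ell$ together with the three-term decomposition
\begin{equation}
\bar g(\ell)-\bar g(k) = [\bar g(\ell)-g(y)] + [g(y)-g(x)] + [g(x)-\bar g(k)],
\end{equation}
valid for any $x\in\cX_k$, $y\in\cX_\ell$. Squaring via $(a+b+c)^2\leq 3(a^2+b^2+c^2)$ and integrating against $\Pi(\dd x)P(x,\dd y)$ over $\cX_k\times\cX_\ell$, the middle term produces $3\,\cE_P^{k,\ell}(g)$. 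The third term is independent of $y$, so after summing over $\ell\neq k$ the factor $\sum_{\ell\neq k}P(x,\cX_\ell)=P(x,\cX\setminus\cX_k)$ collapses and is uniformly bounded by $\gamma$, contributing at most $3\gamma\sum_k\bar\pi(k) V_{\Pi_k}(g)$. The first term is handled symmetrically by first using reversibility $\Pi(\dd x)P(x,\dd y) = \Pi(\dd y)P(y,\dd x)$ to rewrite the integral in a form where $y$ plays the role that $x$ played above, and then repeating the argument. Collecting everything, dividing by the factor $2$ in front of the Dirichlet form of $\barP$, and passing through the bound $V_{\bar\pi}(\bar g)\leq \cE_{\barP}(\bar g)/\Gap(\barP)$ yields the claimed inequality.

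The main technical point is exactly this last step: one has to argue cleanly that the two boundary contributions $[\bar g(\ell)-g(y)]^2$ and $[g(x)-\bar g(k)]^2$, after being summed over the other block index, collapse to within-component variances weighted by $\gamma$. Reversibility of $P$ is what makes these two terms symmetric, and the uniform leak bound $\gamma = \max_k \sup_{x\in\cX_k} P(x,\cX\setminus\cX_k)$ is what lets the escape probability be pulled out of the integral against $\Pi_k$ to give a single coefficient in front of $\sum_k\bar\pi(k)V_{\Pi_k}(g)$. Without $\gamma$ one would only obtain an implicit bound, and without reversibility the first boundary term would require a separate symmetrization.
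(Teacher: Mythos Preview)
Your proposal is correct and follows essentially the same route as the paper's proof: the Dirichlet identity is straightforward bookkeeping, and for the variance you use the law of total variance, bound $V_{\bar\pi}(\bar g)$ by $\cE_{\barP}(\bar g)/\Gap(\barP)$, and control $\cE_{\barP}(\bar g)$ via a three-term split combined with $(a+b+c)^2\le 3(a^2+b^2+c^2)$, reversibility, and the uniform leak bound $\gamma$. The only cosmetic difference is that the paper inserts the conditional means $\Pi_{k\mid\ell}(g)$ and $\Pi_{\ell\mid k}(g)$ as intermediate points (then applies Jensen to recover $\cE_P^{k,\ell}(g)$), whereas you insert $g(x),g(y)$ pointwise and integrate; both choices yield exactly the same constants $\tfrac{3}{2}$ and $3\gamma$.
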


\begin{proof}
This is essentially Theorem~1 of~\citet{jerrum2004elementary}. For completeness, we give their proof here.  
Fix an arbitrary square-integrable function $g$, and let $ \Pi_k$ and $\bar{\pi}$ be as defined in~\eqref{eq:def-muk-barmu}. 
Define the function $\bar{g} \colon [n] \rightarrow \bbR$ by $\bar{g}(k) =  \Pi_k(g)$. 
The Dirichlet form of $P$ can be expressed as 
\begin{align*}
    \cE_P(g) &
    = \frac{1}{2} \sum_{k, \ell \in [n]} \cE_{P}^{k, \ell}(g). 
\end{align*} 
By the definition of $P_k$, we have  $\cE_{P}^{k, k}(g) = 2 \bar{\pi}(k) \,  \cE_{P_k}(g)$. Hence, we obtain~\eqref{eq:decomp-E}.

By the law of of total variance and the definition of spectral gap, 
\begin{equation}\label{eq:decomp0}
    V_ \Pi(g) = V_{\bar{\Pi}}(\bar{g}) + \sum_{k \in [n]} \bar{\pi}(k) V_{\Pi_k}(g) \leq  \frac{\cE_{ \barP}(\bar{g})  }{ \Gap(\barP)} + \sum_{k \in [n]}\bar{\pi}(k) V_{ \Pi_k}(g).   
\end{equation}  
Recalling that $\bar{g}(k) =  \Pi_k(g)$ and using the triangle inequality, we find that 
\begin{equation}\label{eq:decomp2} 
   \cE_{ \barP}(\bar{g})   = \sum_{k \neq \ell}  [\barg(\ell) - \barg(k) ]^2\bar{\pi}(k) \barP(k, \ell)  \leq \frac{3}{2} \left( \bar{E}_1 + \bar{E}_2 + \bar{E}_3  \right), 
\end{equation} 
where 
\begin{align*}
    \bar{E}_1 &=    \sum_{k \neq \ell}    [  \Pi_k(g) -  \Pi_{k \mid \ell}(g) ]^2    \bar{\pi}(k) \barP(k, \ell),  \\
    \bar{E}_2 &=    \sum_{k \neq \ell}      [  \Pi_{k \mid \ell}(g) -  \Pi_{\ell \mid k}(g) ]^2  \bar{\pi}(k) \barP(k, \ell), \\
    \bar{E}_3 &=    \sum_{k \neq \ell}   [  \Pi_\ell(g) -  \Pi_{\ell \mid k}(g) ]^2   \bar{\pi}(k) \barP(k, \ell). 
\end{align*}

In the above definitions of $\bar{E}_1, \bar{E}_2, \bar{E}_3$, the notation $ \Pi_{k \mid \ell}$ denotes the conditional distribution of $Y_0$ given $\{Y_0 \in \cX_k, Y_1 \in \cX_\ell\}$, where  $Y_0 \sim  \Pi$ and $Y_1 \sim P(Y_0, \cdot)$.
That is, $ \Pi_{k \mid \ell}$  is a probability measure with support $\cX_k$ defined by 
\begin{equation}\label{eq:def-mu-k-ell}
     \Pi_{k \mid \ell} (A) = \frac{ \int_{A \cap \cX_k} P(x, \cX_\ell)   \Pi_k(\dd x) }{ \int_{\cX_k}  P(x, \cX_\ell)   \Pi_k(\dd x) }  = \frac{ 1}{  \barP(k, \ell)}\int_{A \cap \cX_k}  P(x, \cX_\ell)   \Pi_k (\dd x). 
\end{equation} 
By the reversibility of $P$, $ \Pi_{\ell \mid k}$ can also be viewed as the conditional distribution of $Y_1$ given $\{Y_0 \in \cX_k, Y_1 \in \cX_\ell\}$.  Hence, 
\begin{align*}
     \Pi_{k \mid \ell}(g) -  \Pi_{\ell \mid k}(g) &= \frac{1}{\barP(k, \ell)} \int_{x \in \cX_k, y \in \cX_\ell} \left[ g(x) - g(y) \right]  \Pi_k(\dd x) P(x, \dd y). 
\end{align*} 
It then follows from the elementary inequality  $\sE[Z]^2 \leq \sE[Z^2]$ that  
\begin{equation}\label{eq:decomp-E2}
    \bar{E}_2  \leq    \sum_{k \neq \ell}     \bar{\pi}(k)  \int_{x \in \cX_k, y \in \cX_\ell} \left[ g(x) - g(y) \right]^2  \Pi_k(\dd x) P(x, \dd y) = \sum_{k \neq \ell} \cE_{P}^{k, \ell}(g). 
\end{equation}

Since $\bar{P}$ is also reversible, $\bar{E}_1 = \bar{E}_3$. To bound $\bar{E}_1$, we use $\sE[Z]^2 \leq \sE[Z^2]$ to get 
\begin{align*}
  [ \Pi_{k \mid \ell}(g)-  \Pi_k(g) ]^2 &= 
  \left[  \Pi_{k \mid \ell}(g -  \Pi_k(g) ) \right]^2   \leq     \Pi_{k \mid \ell}\left( [g -  \Pi_k(g)]^2  \right). 
\end{align*}
Hence, using~\eqref{eq:def-mu-k-ell} and the definition of $\gamma$ 
\begin{align}
 \bar{E}_1 &\leq   \sum_{k \neq \ell}  \Pi_{k \mid \ell}\left( [g -  \Pi_k(g)]^2  \right)   \bar{\pi}(k) \barP(k, \ell) \\
 & = \sum_{k \in [n]}\bar{\pi}(k) \sum_{\ell \neq k} \int  [g(x) -  \Pi_k(g)]^2 P(x, \cX_\ell)  \Pi_k(\dd x) \\
 & \leq  \gamma \sum_{k \in [n]} \bar{\pi}(k)  V_{ \Pi_k}(g).  \label{eq:decomp-E1}
\end{align}

Since $\bar{E}_1 = \bar{E}_3$, we obtain~\eqref{eq:decomp-V} by combining~\eqref{eq:decomp0},~\eqref{eq:decomp2},~\eqref{eq:decomp-E2} and~\eqref{eq:decomp-E1}. 
\end{proof} 

Below is a more general version of the state decomposition bound given in \cref{th:conduct-decomp}. When $\gamma$ and $\Gap(\barP)$ have the same order, it yields a lower bound on the $s$-conductance of order $ \min\{ \Gap(\barP), \Psi(c s \Gap(\barP)  ) \} $ for some constant $c > 0$. 

\begin{theorem}\label{th:conduct-decomp-general}
In the setting of~\cref{def:decomp}, for $s \in [0, 1/2)$, we have 
\begin{equation}
    \Phi_s(P) \geq \min \left\{  \frac{\Gap(\barP)}{6}, \;   \frac{\Gap(\barP) } { 6 \gamma + 2 \Gap(\barP)}  \, \Psi\left(   \frac{\Gap(\barP) \, s } { 6 \gamma + 2 \Gap(\barP)}  \right)
    \right\},  
\end{equation}
where  
$\Psi(s) = \min_{k} \Phi_s(P_k)$  and  $\gamma = \max_{k \in [n]} \sup_{x \in \cX_k} P(x, \cX \setminus \cX_k).$
\end{theorem}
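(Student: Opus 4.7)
The plan is to retrace the proof of \cref{th:conduct-decomp} but keep the exact coefficient $3\gamma/\Gap(\barP) + 1$ from \cref{th:jerrum} instead of absorbing $\gamma$ and the $+1$ into the simplification $4/\Gap(\barP)$ used under $\Gap(\barP) \leq 1$. This leaves the dependence on $\gamma$ explicit in the final bound and avoids the auxiliary assumption on $\Gap(\barP)$.

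Concretely, I would fix $A \in \cB(\cX)$ with $\Pi(A) \in (s,1/2]$ and set $g = \ind_A$, so that $\cE_P(g) = P(A,A^c)$ and $V_\Pi(g) = \Pi(A)\Pi(A^c) \geq \Pi(A)/2$. Applying \cref{th:jerrum} yields the two decompositions
\begin{align}
\cE_P(g) &= \tfrac{1}{2}\sum_{k\neq \ell}\cE_P^{k,\ell}(g) + \sum_{k}\bar{\pi}(k)\cE_{P_k}(g),\\
V_\Pi(g) &\leq \tfrac{3}{2\Gap(\barP)}\sum_{k\neq \ell}\cE_P^{k,\ell}(g) + \Bigl(\tfrac{3\gamma}{\Gap(\barP)} + 1\Bigr)\sum_{k}\bar{\pi}(k)V_{\Pi_k}(g).
\end{align}
Combining these with $\Pi(A)-s \leq 2 V_\Pi(g) - s$ gives a bound on $\Pi(A) - s$ in terms of the same two sums that appear in $\cE_P(g)$, except with multipliers $3/\Gap(\barP)$ and $(6\gamma + 2\Gap(\barP))/\Gap(\barP)$.

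The key algebraic step is to pick
\begin{equation}
\tilde{s} = \frac{s\,\Gap(\barP)}{6\gamma + 2\Gap(\barP)},
\end{equation}
rewrite $s = [(6\gamma+2\Gap(\barP))/\Gap(\barP)]\,\tilde{s}$, and distribute it over the second sum using $\sum_k \bar{\pi}(k) = 1$. Then part~(i) of \cref{lm:s-conduct} applied to each restricted chain yields $V_{\Pi_k}(g)-\tilde{s} \leq \cE_{P_k}(g)/\Phi_{\tilde{s}}(P_k) \leq \cE_{P_k}(g)/\Psi(\tilde{s})$, where monotonicity (part~(iii) of the same lemma) allows replacing $\Phi_{\tilde{s}}(P_k)$ by the uniform lower bound $\Psi(\tilde{s})$. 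Dividing term-by-term against the decomposition of $\cE_P(g)$ then gives
\begin{equation}
\frac{\cE_P(\ind_A)}{\Pi(A)-s} \;\geq\; \min\!\left\{\frac{\Gap(\barP)}{6},\ \frac{\Gap(\barP)\,\Psi(\tilde{s})}{6\gamma + 2\Gap(\barP)}\right\},
\end{equation}
which is the asserted bound after taking the infimum over $A$.

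There is no real obstacle beyond bookkeeping; the only technical thing to verify is that $\tilde{s} \in [0,1/2)$ so that $\Psi(\tilde{s})$ is well-defined, which is immediate since $\Gap(\barP)/(6\gamma + 2\Gap(\barP)) \leq 1/2$ (equivalent to $\gamma \geq 0$) and $s < 1/2$. Note that, unlike \cref{th:conduct-decomp}, we no longer need to argue that the second term in the $\min$ is the binding one, so the assumptions $\Gap(\barP) \leq 1$ and $\Pi(A_0)=1/2$ can be dropped; when $\gamma \asymp \Gap(\barP)$ (e.g., for lazy chains with $\gamma \leq 1/2$), the coefficient in front of $\Psi(\cdot)$ becomes of order $\Gap(\barP)$, recovering the $\min\{\Gap(\barP), \Psi(c s\,\Gap(\barP))\}$ form mentioned after the statement.
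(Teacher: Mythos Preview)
Your proposal is correct and follows exactly the approach the paper intends: retracing the proof of \cref{th:conduct-decomp} while retaining the precise coefficient $3\gamma/\Gap(\barP)+1$ from \cref{th:jerrum} rather than bounding it by $4/\Gap(\barP)$, and then choosing $\tilde{s} = s\,\Gap(\barP)/(6\gamma+2\Gap(\barP))$ so that the term-by-term comparison yields the stated minimum. Your observation that the assumptions $\Gap(\barP)\le 1$ and $\Pi(A_0)=1/2$ can be dropped (since we no longer need to argue that the second term in the minimum is binding) is also exactly what the paper notes.
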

\begin{proof}
    The proof is almost identical to that of \cref{th:conduct-decomp}. The only change is that we do not assume or use the bounds $\Gap(\barP) \leq 1$ and $\gamma \leq 1$. 
\end{proof}

\newpage 

\section{Canonical Paths for the Projected Chain}

Consider the transition kernel $\barP$ of the projected chain,  
\begin{equation}
    \barP( (i, j), \, (i', j') ) 
    =  \int_{\bbR^d}   \pi_{i,j}(x) P( (i, j, x), \, (i', j', x ) ) \dd x, \text{ if } (i,j) \neq (i', j'), 
\end{equation}
where  $P$ denotes the auxiliary Metropolis--Hastings chain constructed in \cref{sec:second-aux-chain}.  

\begin{lemma}\label{lm:canonical-path}
For the projected chain $\barP$ on $[\T] \times [\K]$ with stationary distribution $\pi(i, j) = r_i w_j$,  \begin{equation}
    \Gap(\barP)  \geq \frac{1}{2 \T}  \min\{ \Lambda_1, \Lambda_2 \}, 
\end{equation}
where 
\begin{align}
    \Lambda_1 = \min_{i \in [\T - 1], j \in [\K]}  r_i \, \barP( (i, j), \, (i+1, j) ),  \quad 
    \Lambda_2 = \min_{j, j' \in [\K]} \frac{ \min_{i \in [\T]} r_i }{w_{j'} }  \barP( (1, j), \, (1, j') ).  
\end{align}
\end{lemma}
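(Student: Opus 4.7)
The plan is to bound $\Gap(\barP)$ via Sinclair's canonical paths method~\citep{sinclair1992improved}, since $\barP$ is a reversible Markov chain on the finite state space $[\T] \times [\K]$ with stationary distribution $\pi(i,j) = r_i w_j$. That bound gives $\Gap(\barP) \geq 1/\rho$, where $\rho = \max_e Q(e)^{-1} \sum_{(y,z) \colon \gamma_{y,z} \ni e} |\gamma_{y,z}| \, \pi(y)\pi(z)$, with $e$ ranging over directed transition edges of $\barP$, $Q(e) = \pi(u)\barP(u,v)$ for $e = (u,v)$, and $\gamma_{y,z}$ a fixed canonical path from $y$ to $z$.

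First I would specify the paths. For an ordered pair of distinct states $y = (i_1, j_1)$ and $z = (i_2, j_2)$, if $j_1 = j_2$ I take the monotone temperature sweep from $(i_1, j_1)$ to $(i_2, j_1)$ within the fiber $[\T] \times \{j_1\}$; if $j_1 \neq j_2$, the path first descends from $(i_1, j_1)$ to $(1, j_1)$ one temperature step at a time, swaps mixture labels in one step to $(1, j_2)$, then ascends within component $j_2$ to $(i_2, j_2)$. Every such path uses only two kinds of edges --- vertical edges $\{(k, j), (k+1, j)\}$ with $k \in [\T-1]$, and bottom-rung swap edges $\{(1, j), (1, j')\}$ with $j \neq j'$ --- and has length at most $2\T$.

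The main work is the congestion bound, which I split by edge type. For an upward vertical edge $e_1 = ((k, j), (k+1, j))$, the canonical paths traversing it are either same-component ascents with $y = (i_1, j)$, $i_1 \leq k$ and $z = (i_2, j)$, $i_2 > k$, or cross-component paths with $j_1 \neq j$ and $z = (i_2, j)$, $i_2 > k$. Writing $N(e_1)$ for the sum $\sum_{\gamma_{y,z} \ni e_1} \pi(y)\pi(z)$, a direct computation gives $N(e_1) \leq w_j^2 a(1-a) + w_j(1-w_j)(1-a) \leq w_j$, where $a = \sum_{i \leq k} r_i$; meanwhile $Q(e_1) = r_k w_j \barP((k,j),(k+1,j)) \geq w_j \Lambda_1$ by the definition of $\Lambda_1$, so $N(e_1)/Q(e_1) \leq 1/\Lambda_1$. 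The downward orientation gives the same bound by reversibility of $\barP$. For a swap edge $e_2 = ((1,j),(1,j'))$, the traversing paths are exactly those with $j_1 = j$, $j_2 = j'$ and arbitrary temperatures, giving $N(e_2) = w_j w_{j'}$; combined with $Q(e_2) = r_1 w_j \barP((1,j),(1,j')) \geq \rmin w_j \barP((1,j),(1,j'))$ and the definition of $\Lambda_2$, this yields $N(e_2)/Q(e_2) \leq 1/\Lambda_2$.

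Multiplying by the uniform path-length bound $|\gamma_{y,z}| \leq 2\T$ then gives $\rho \leq 2\T/\min\{\Lambda_1, \Lambda_2\}$, which is the claimed inequality. The main potential source of error is the combinatorial bookkeeping in the vertical-edge case: one must include every cross-component path whose descent or ascent phase crosses the given edge while excluding those whose mixture swap happens on a different fiber, and must verify that the resulting cumulative-$r$ factors cancel exactly against the $r_k$ in $Q(e_1)$ so that the bound depends only on $\Lambda_1$ and not on a tighter product of stationary weights.
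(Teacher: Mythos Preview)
Your proposal is correct and follows the same canonical-path argument as the paper: identical path system (monotone temperature sweeps within a fiber, descend/swap/ascend across fibers), the same edge types, the same length bound $2\T$, and the same congestion bounds $N(e)/Q(e) \leq 1/\Lambda_1$ for vertical edges and $\leq 1/\Lambda_2$ for swap edges. The only cosmetic difference is that the paper bounds the vertical-edge load in one line by noting the endpoint $z'$ must lie in fiber $j$ (hence $\cL(e) \leq w_j$), whereas you compute $N(e_1) = w_j^2 a(1-a) + w_j(1-w_j)(1-a)$ explicitly before arriving at the same estimate.
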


\begin{proof}
We apply the same canonical path argument as used in~\citet[Thm. 7.1]{ge2018simulated} and~\citet[Lem. 19]{garg2025restricted}. 
Denote the state space of $\barP$ by $\cZ =  [\T] \times [\K]$.
Consider a graph $\cG$ with node set $\cZ$ and directed edge set 
\begin{equation}
   \mathrm{Edge}(\cG) =  \left\{ ( (i, j), \, (i', j) ) \in \cZ^2 \colon   |i - i'| = 1 \right\} \cup \left\{ ( (1, j), \, (1, j') ) \in \cZ^2 \colon j \neq j'  \right\}. 
\end{equation} 
We define the canonical path from $(i, j)$ to $(i', j')$ as the shortest directed path on $\cG$: 
\begin{itemize}[label=$\circ$]
    \item If $j = j'$ and $i \leq i'$,  the canonical path from $(i, j)$ to $(i',j)$ is given by 
    $$ (i, j) \rightarrow (i+1, j) \rightarrow \cdots \rightarrow (i', j).$$ 
    \item If $j \neq j'$, the canonical path from $(i, j)$ to $(i',j')$ is given by 
    $$ (i, j) \rightarrow (i-1, j) \rightarrow \cdots \rightarrow (1, j) \rightarrow (1, j') \rightarrow (2, j') \rightarrow \cdots \rightarrow (i', j').$$
\end{itemize}

Clearly, any such canonical path has length (i.e., number of edges traversed) bounded by $2 \T$. Given an edge $e  \in  \mathrm{Edge}(\cG)$, let $\cA(e)$ denote the set of all pairs $(z, z') \in \cZ$ such that the path from $z$ to $z'$ traverses $e$, and define 
$\cL(e) = \sum_{ (z, z') \in  \cA(e)} \pi(z) \pi(z')$.  
By the canonical path method~\citep{sinclair1992improved}, 
\begin{equation}\label{eq:canon-path}
    \Gap(\barP)^{-1} \leq \max_{e = ( (i, j), \, (i', j') ) \in  \mathrm{Edge}(\cG)} \frac{2 \T \cL(e)}{ \pi(i, j) \barP( (i, j), \, (i', j') ) }. 
\end{equation}

If $e = ( (i, j), \, (i+1, j)) $, then the path from $z$ to $z'$ traverses $e$ only when $z' = (i', j)$ for some $i' \geq i + 1$. Hence, 
\begin{equation}
    \frac{\cL(e)}{\pi(i, j)} \leq  \frac{1}{\pi(i, j)} \sum_{i' \geq i + 1} \pi(i', j) \leq \frac{1}{\pi(i, j)} \sum_{i' \in [\T]} \pi(i', j) = \frac{w_j}{r_i w_j} = \frac{1}{r_i}. 
\end{equation}
  
If $e = ( (1, j), \, (1, j')) $,  then the path from $z$ to $z'$ traverses $e$ only when $z = (i, j), z' = (i', j')$  for some $i, i' \in [\T]$. Hence, 
\begin{equation}
     \frac{\cL(e)}{\pi(i, j)}   \leq \frac{1}{\pi(i, j)} 
 \sum_{i, i'} \pi(i, j) \pi(i', j') = \frac{1}{\pi(i, j)} \left( \sum_{i} \pi(i, j) \right) \left( \sum_{i'} \pi(i, j') \right) = \frac{ w_j w_{j'} }{r_i w_j} = \frac{w_{j'}}{ r_i}.
\end{equation}
The claimed bound then follows from~\eqref{eq:canon-path}. 
\end{proof} 

\newpage 

\section{Conductance Bounds for the Restricted Chains}\label{sec:supp-Pij}

In this section, we prove a sequence of auxiliary results used to establish the $s$-conductance bounds for the restricted chains $P_{i, j}$ considered in \cref{sec:conduct-restrict} when the MALA proposal is used. 

\subsection{Auxiliary Results for Smooth and Strongly Convex Functions} \label{sec:prelim-convex}

We first recall some useful auxiliary results for smooth and strongly convex functions and strongly log-concave distributions. 

\begin{definition}\label{def:smooth-convex}
Let $f \colon \bbR^d \rightarrow \bbR$ be differentiable. We say $f$ is $L$-smooth for $L>0$ if 
\begin{equation}
    f(y) - f(x) - \nabla f(x)^\top (y - x) \leq \frac{L}{2} \| y - x \|^2, \quad \forall x, y \in \bbR^d.  
\end{equation} 
Similarly, we say $f$ is $m$-strongly convex for $m>0$  if 
\begin{equation}
    f(y) - f(x) - \nabla f(x)^\top (y - x) \geq \frac{m}{2} \| y - x \|^2, \quad \forall x, y \in \bbR^d.  
\end{equation} 
We say a probability distribution is $m$-strongly log-concave, if its density is proportional to $e^{-f(x)}$ for some $m$-strongly convex $f$. 
\end{definition}

\begin{lemma}[Theorem 2.1.5 of~\citet{nesterov2013introductory}]\label{lm:smooth}
Let $f \colon \bbR^d \rightarrow \bbR$ be differentiable and $L$-smooth. For any $t \in [0, 1],  x, y \in \bbR^d$, 
\begin{align}
(\nabla f (x) - \nabla f(y))^\top (x - y) &\leq L \| x - y\|^2, \\ 
t f(x) + (1 - t) f(y) &\leq f ( tx + (1-t) y ) + \frac{t (1-t) L}{2} \| y - x\|^2. 
\end{align}
If $f$ is also convex, then for any $ x, y \in \bbR^d$, 
\begin{align}
     \| \nabla f (x) - \nabla f(y)\| &\leq L \| x - y\|, \\ 
     \frac{1}{L}  \| \nabla f (x) - \nabla f(y)\|^2 &\leq  (\nabla f (x) - \nabla f(y))^\top (x - y). 
\end{align}
\end{lemma}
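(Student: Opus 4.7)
The plan is to derive all four inequalities from the single defining property of $L$-smoothness,
\begin{equation*}
    f(y) \leq f(x) + \nabla f(x)^\top (y - x) + \frac{L}{2}\| y - x\|^2, \quad \forall \, x, y \in \bbR^d,
\end{equation*}
which I will refer to as $(\star)$. Convexity enters only for parts 3 and 4.

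For part 1, I would apply $(\star)$ once as written and once with $x$ and $y$ swapped, then add the two bounds: the function values $f(x), f(y)$ cancel, the two gradient inner products combine into $-(\nabla f(x) - \nabla f(y))^\top (x-y)$, and the right-hand side becomes $L\|x-y\|^2$, which rearranges to the claim. For part 2, I would set $z = tx + (1-t)y$, write $(\star)$ with base point $z$ against both $x$ and $y$, and take the convex combination with weights $t$ and $1-t$. Since $x - z = (1-t)(x-y)$ and $y - z = -t(x-y)$, the gradient terms vanish because $t(x-z) + (1-t)(y-z) = 0$, and the quadratic remainder simplifies to $\tfrac{L}{2}\bigl[t(1-t)^2 + (1-t)t^2\bigr]\|x-y\|^2 = \tfrac{t(1-t) L}{2}\|x-y\|^2$, giving part 2.

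For parts 3 and 4, I would prove the co-coercivity bound (part 4) first and then deduce the Lipschitz-gradient bound (part 3) by Cauchy--Schwarz. The standard device for part 4 is to fix $x_0 \in \bbR^d$ and consider the shifted function $g(z) = f(z) - \nabla f(x_0)^\top z$, which inherits convexity and $L$-smoothness from $f$ but now satisfies $\nabla g(x_0) = 0$, making $x_0$ a global minimizer. Applying $(\star)$ to $g$ at base point $z$ with trial step $z - L^{-1} \nabla g(z)$ yields the descent bound $g(z - L^{-1}\nabla g(z)) \leq g(z) - \tfrac{1}{2L}\|\nabla g(z)\|^2$, and since $x_0$ minimizes $g$ the left-hand side is at least $g(x_0)$. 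Unpacking $g$ gives
\begin{equation*}
    \frac{1}{2 L} \|\nabla f(z) - \nabla f(x_0)\|^2 \leq f(z) - f(x_0) - \nabla f(x_0)^\top (z - x_0).
\end{equation*}
Running the same argument with $z$ and $x_0$ swapped and adding telescopes the right-hand sides into $(\nabla f(z) - \nabla f(x_0))^\top (z - x_0)$, which is precisely part 4. Part 3 then follows by Cauchy--Schwarz on the right-hand side of part 4 and cancelling a factor of $\|\nabla f(x) - \nabla f(y)\|$ (the degenerate case of equal gradients being trivial).

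The main technical step is the descent argument for part 4: this is where convexity is genuinely used, through the fact that the stationary point $x_0$ of the shifted function $g$ is a global minimizer. Parts 1 and 2 are purely algebraic consequences of $(\star)$ and hold without any convexity assumption, while part 3 is a painless corollary of part 4 once the latter is in hand.
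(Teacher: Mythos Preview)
Your proposal is correct and follows the standard textbook argument. The paper itself does not give a proof of this lemma; it simply cites Theorem~2.1.5 of Nesterov's \emph{Introductory Lectures on Convex Optimization}, and your derivation is essentially the one found there.
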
 

\begin{lemma}\label{lm:mala-mean-bound}
   Let $f \colon \bbR^d \rightarrow \bbR$ be differentiable, $L$-smooth and convex. For any $h \in (0, 1/L]$,  
   \begin{equation}
       \| (x - h \nabla f(x) ) - (y - h \nabla f(y) ) \| \leq \| x - y\|. 
   \end{equation}
\end{lemma}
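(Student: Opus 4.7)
The plan is to expand the squared norm on the left-hand side and then use the co-coercivity inequality that appears as the last statement of \cref{lm:smooth}. Specifically, I would start from
\begin{align}
 \| (x - h\nabla f(x)) - (y - h\nabla f(y)) \|^2
 &= \| x - y\|^2 - 2h (\nabla f(x) - \nabla f(y))^\top (x - y) \\
 &\quad + h^2 \|\nabla f(x) - \nabla f(y)\|^2.
\end{align}
The key inequality is the co-coercivity bound from \cref{lm:smooth}, namely $(\nabla f(x) - \nabla f(y))^\top (x-y) \geq L^{-1} \|\nabla f(x) - \nabla f(y)\|^2$, which holds because $f$ is both $L$-smooth and convex.

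Substituting this bound into the cross term yields
\begin{equation}
  \| (x - h\nabla f(x)) - (y - h\nabla f(y)) \|^2 \leq \|x-y\|^2 + \left( h^2 - \frac{2h}{L} \right) \|\nabla f(x) - \nabla f(y)\|^2.
\end{equation}
Since $h \in (0, 1/L]$, we have $h^2 \leq h/L$, so the coefficient $h^2 - 2h/L \leq -h/L \leq 0$. Dropping the nonpositive term and taking square roots gives the claimed inequality.

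There is no real obstacle here: the argument is the standard ``gradient descent is a contraction for smooth convex functions'' calculation, and every ingredient (co-coercivity and the step-size bound $h \leq 1/L$) is already in hand.
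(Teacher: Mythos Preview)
Your proof is correct and entirely self-contained. The paper itself does not give an argument but simply cites Section~5.5.1 of~\citet{dwivedi2019log}; the calculation there is precisely the co-coercivity expansion you wrote out, so your approach is the same as the one the paper defers to, only made explicit using the last inequality of \cref{lm:smooth}.
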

\begin{proof}
    This is established in Section 5.5.1 of~\citet{dwivedi2019log}, as part of the  proof of Lemma 7 therein. 
\end{proof}

\begin{lemma}\label{lm:log-concave}
Let $\Pi$ be an $m$-strongly log-concave distribution on $\bbR^d$. For any integrable $g$, 
\begin{align}
 & \text{Poincar\'{e} inequality: } \quad \quad   V_\Pi(g) \leq \frac{1}{m} \Pi \left(  \| \nabla g\|^2  \right), \\
  & \text{log-Sobolev inequality: } \quad    \mathrm{Ent}_\Pi(g^2) \leq \frac{2}{m}  \Pi \left(  \| \nabla g\|^2  \right),  
\end{align}
where $ \mathrm{Ent}_\Pi(\tilde{g}) =  \Pi \left( \tilde{g} \log \tilde{g} \right) - \Pi(\tilde{g}) \log \Pi(\tilde{g})$ denotes the entropy. 
Further,  letting $x^*$ denote the unique mode of $\Pi$, we have 
\begin{align}
    \int |x - x^*|^2 \Pi(\dd x) &\leq \frac{d}{m},  \label{eq:durmus} \\ 
    \Pi\left( \cB_s   \right) &\leq s, \label{eq:concentrate-m-convex} 
\end{align} 
where $s \in (0, 1)$ and 
   \begin{equation}
     \cB_s = \left\{x \in \bbR^d \colon \|x - x^*\| \geq  2 \sqrt{ \frac{  d \vee (2 \log s^{-1} ) }{m} } \right\}. 
    \end{equation} 
\end{lemma}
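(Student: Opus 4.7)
The Poincar\'{e} and log-Sobolev inequalities are classical consequences of the Bakry--\'Emery criterion: writing $\pi \propto e^{-V}$ with $\nabla^2 V \succeq m I_d$, the Markov semigroup associated with the overdamped Langevin diffusion $\dd X_t = -\nabla V(X_t) \dd t + \sqrt{2} \dd B_t$ satisfies the curvature-dimension condition $\mathrm{CD}(m, \infty)$, which is equivalent to the log-Sobolev inequality with constant $2/m$. The Poincar\'{e} inequality with constant $1/m$ follows by linearizing the log-Sobolev inequality around $g \equiv 1$ (i.e., applying it to $1 + \epsilon g$ and sending $\epsilon \to 0$). I would simply cite a standard reference (e.g., Bakry, Gentil and Ledoux) for these two well-known facts rather than prove them from scratch.

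The bound $\int \|x - x^*\|^2 \Pi(\dd x) \leq d/m$ would come from an integration-by-parts identity rather than from the Poincar\'{e} inequality, since the latter only controls variance around the mean $\Pi(x)$, not around the mode $x^*$. Writing $\pi \propto e^{-V}$ with $\nabla V(x^*) = 0$, the divergence theorem gives
\begin{equation}
    \int_{\bbR^d} \nabla V(x) \cdot (x - x^*) \, \Pi(\dd x) = -\int_{\bbR^d} (x - x^*) \cdot \nabla e^{-V(x)} \, \dd x \Big/ Z = d,
\end{equation}
assuming the boundary terms at infinity vanish, which they do by the strong log-concavity tail decay. On the other hand, $m$-strong convexity of $V$ combined with $\nabla V(x^*) = 0$ yields $\nabla V(x) \cdot (x - x^*) \geq m \|x - x^*\|^2$ pointwise, and integrating against $\Pi$ produces the claimed inequality.

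Finally, the concentration estimate~\eqref{eq:concentrate-m-convex} follows from Gaussian concentration for $m$-strongly log-concave measures, itself a consequence of the log-Sobolev inequality via Herbst's argument: for any $1$-Lipschitz $\phi$,
\begin{equation}
    \Pi\bigl( \phi - \Pi(\phi) \geq t \bigr) \leq e^{-m t^2 / 2}, \quad t \geq 0.
\end{equation}
Applying this to $\phi(x) = \|x - x^*\|$, which is $1$-Lipschitz, and combining with $\Pi(\phi) \leq \sqrt{\Pi(\phi^2)} \leq \sqrt{d/m}$ from the previous step, I obtain $\Pi(\|x - x^*\| \geq \sqrt{d/m} + \sqrt{2 \log(s^{-1})/m}) \leq s$. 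It remains a routine calculation to verify $\sqrt{d/m} + \sqrt{2 \log(s^{-1})/m} \leq 2 \sqrt{ (d \vee 2 \log s^{-1})/m}$ by considering the two cases $d \geq 2 \log s^{-1}$ and $d < 2 \log s^{-1}$ separately.

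The only mildly subtle step is the integration by parts in the second moment bound, since one must justify vanishing boundary terms; this is standard under strong log-concavity (the density decays at least as fast as a Gaussian) and could be made rigorous by first integrating over a ball $\{\|x\| \leq R\}$ and sending $R \to \infty$. Everything else is a direct application of well-established functional-inequality machinery.
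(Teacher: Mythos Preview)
Your proposal is correct and follows essentially the same route as the paper: cite standard references for the Poincar\'{e} and log-Sobolev inequalities, derive the second-moment bound around the mode (the paper cites Durmus--Moulines, whose proof is the integration-by-parts argument you sketch), and then obtain the concentration estimate from the log-Sobolev Gaussian concentration of the $1$-Lipschitz function $x \mapsto \|x - x^*\|$ together with $\sE\|X - x^*\| \leq \sqrt{d/m}$ and the elementary bound $\sqrt{d/m} + \sqrt{2\log s^{-1}/m} \leq 2\sqrt{(d \vee 2\log s^{-1})/m}$. The only cosmetic difference is the choice of references (Bakry--Gentil--Ledoux and Herbst versus the paper's Saumard--Wellner and Bobkov--G\"{o}tze).
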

\begin{proof}
    For the Poincar\'{e} and log-Sobolev inequalities, see Proposition 10.1 of~\citet{saumard2014log}. 
    The inequality~\eqref{eq:durmus} is proved in Proposition 1 of~\citet{durmus2019high}. 

To prove~\eqref{eq:concentrate-m-convex}, we use that $\Pi$ satisfies log-Sobolev inequality with constant $m^{-1}$. Since any distribution satisfying a log-Sobolev inequality has Gaussian concentration of  Lipschitz functionals~\citep[Cor. 3.2]{BobkovGoetze1999} and  $x \mapsto \|x - x^*\|$ is $1$-Lipschitz, 
we have 
\begin{equation}
  \sP \left( \| X  - x^*\| - \sE \| X - x^* \| \geq t \right) \leq e^{ - m t^2  /2 }, \quad \text{ where } X \sim \Pi. 
\end{equation}
By~\eqref{eq:durmus}, $\sE \| X - x^* \| \leq \sqrt{d / m}$. Hence, letting $t^2 = 2 m^{-1}\log s^{-1} $, 
we find that 
\begin{equation}
    \Pi\left( \{x \colon \|x -x^* \| \geq r_s \} \right) \leq s,  \text{ where } r_s = \sqrt{\frac{d}{m}} + \sqrt{\frac{2 \log s^{-1}}{m}}. 
\end{equation} 
The claimed bound then follows.  
\end{proof}

\subsection{Gradient Bounds for the Local and Mixture Potential}

Consider the function $U \colon \bbR^d \rightarrow \bbR$ defined in~\eqref{eq:target-pi-star}, 
\begin{equation}
    U(x) = -\log \sum_{j=1}^{\K} w_j e^{ - f(x - \mu_j)}, 
\end{equation}
where $\mu_j \in \bbR^d, w_j \geq 0$ and $\sum_{j = 1}^{\K} w_j = 1$.
Let $D = \max_{j  \in [\K]} \| \mu_j \|.$  
The function $f$ characterizes the potential of each component distribution and thus we call it the ``local potential'', while $U$ is the potential of the mixture density. 

\begin{lemma}\label{lm:U-bound}
Let $f \colon \bbR^d \rightarrow \bbR$ be differentiable, $L$-smooth and convex.  Then, 
\begin{align}
\| \nabla U(x) - \nabla f(x ) \| &\leq L D,  \label{eq:lm-grad-U} \\ 
    U(y) - U(x) - \nabla U(x)^\top (y - x) &\leq \frac{L}{2} \| y - x \|^2, \quad \forall x, y \in  \bbR^d.   \label{eq:lm-smooth-U}
\end{align}
\end{lemma}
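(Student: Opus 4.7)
The plan is to exploit the softmax-style identity for $\nabla U$ and then reduce both bounds to elementary consequences of the corresponding bounds for the local potential $f$. Define the mixture weights
\begin{equation}
    p_j(x) := \frac{w_j e^{-f(x - \mu_j)}}{\sum_{k=1}^{\K} w_k e^{-f(x - \mu_k)}}, \qquad j \in [\K],
\end{equation}
which form a probability vector on $[\K]$ for every $x$. A direct differentiation gives the clean representation $\nabla U(x) = \sum_{j} p_j(x) \, \nabla f(x - \mu_j)$.

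For the first claim, since $\sum_j p_j(x) = 1$, I can rewrite
\begin{equation}
    \nabla U(x) - \nabla f(x) = \sum_{j} p_j(x) \bigl[ \nabla f(x - \mu_j) - \nabla f(x) \bigr].
\end{equation}
Because $f$ is convex and $L$-smooth, the gradient-Lipschitz form of \cref{lm:smooth} yields $\| \nabla f(x - \mu_j) - \nabla f(x) \| \leq L \| \mu_j \| \leq L D$. The triangle inequality and $\sum_j p_j(x) = 1$ then immediately give \eqref{eq:lm-grad-U}.

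For the descent inequality, I would apply the standard smoothness descent lemma (a direct consequence of \cref{lm:smooth}) to each shifted local potential $f_j(\cdot) := f(\cdot - \mu_j)$, which is itself $L$-smooth, obtaining $f_j(y) \leq f_j(x) + \nabla f_j(x)^\top (y - x) + \tfrac{L}{2} \| y - x \|^2$. Exponentiating the negative of this inequality, multiplying by $w_j$, summing over $j$, and taking $-\log$ gives
\begin{equation}
    U(y) \leq U(x) + \tfrac{L}{2} \| y - x \|^2 - \log \sum_{j} p_j(x) \, e^{ -\nabla f_j(x)^\top (y - x) }.
\end{equation}
Finally, I invoke Jensen's inequality for the convex function $t \mapsto e^t$: $\sum_j p_j(x) e^{-\nabla f_j(x)^\top (y - x)} \geq \exp\bigl( -\sum_j p_j(x) \nabla f_j(x)^\top (y - x) \bigr) = \exp\bigl( -\nabla U(x)^\top (y - x) \bigr)$, and take $-\log$ to conclude \eqref{eq:lm-smooth-U}.

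There is no serious obstacle here. The only conceptual point to notice is the ``tilt-and-Jensen'' identity for log-sum-exp functions that yields the descent inequality for $U$ without requiring $f$ to be $C^2$ or computing Hessians; in particular, the proof avoids the usual route through $\nabla^2 U(x) = \sum_j p_j(x) \nabla^2 f(x - \mu_j) - \mathrm{Cov}_{p(x)}(\nabla f(x - \mu_\cdot)) \preceq L I_d$ and its integration along the segment from $x$ to $y$.
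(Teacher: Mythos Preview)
Your proof is correct and follows essentially the same route as the paper's: the softmax representation $\nabla U(x)=\sum_j p_j(x)\nabla f(x-\mu_j)$ combined with the Lipschitz gradient bound for~\eqref{eq:lm-grad-U}, and the descent inequality for each $f_j$, exponentiation, summation, and Jensen's inequality for~\eqref{eq:lm-smooth-U}. The only cosmetic difference is that the paper works directly with $e^{-U(y)}$ rather than first isolating the $-\log\sum_j p_j(x)e^{-\nabla f_j(x)^\top(y-x)}$ term, but the algebra is identical.
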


\begin{proof}
A direct calculation yields that 
\begin{equation}
    \nabla U(x) =      \sum_{j=1}^{\K}  
    \omega(x, j) \nabla f(x - \mu_j), \text{ where } \omega(x, j) = \frac{ w_j e^{ -f (x - \mu_j)} }{  e^{- U(x) } }. 
\end{equation} 
Using $\sum_{j=1}^{\K} \omega(x, j) = 1$, we find that 
\begin{align*}
   \|   \nabla U(x) - \nabla f(x ) \|  
    \leq \sum_{j=1}^{\K} \omega(x, j)  \|  \nabla f(x - \mu_{j}) - \nabla f(x) \|. 
\end{align*} 
Then~\eqref{eq:lm-grad-U} follows from $\|  \nabla f(x - \mu_{j}) - \nabla f(x) \| \leq L \|  \mu_j \| \leq L D$. 

To prove that $U$ is $L$-smooth, we first use the $L$-smoothness of $f$ to get that for any $x, y \in \bbR^d$, 
\begin{equation}
    - f(y - \mu_j) \geq  - f(x - \mu_j) - \nabla f(x - \mu_j)^\top (y - x) - \frac{L}{2} \| y - x \|^2.  
\end{equation}
Then, by the definition of $U$ and Jensen's inequality, 
\begin{align}
  e^{-U(y)} 
  &\geq  \sum_{j=1}^{\K} w_j e^{  -f(x - \mu_j) - \nabla f(x - \mu_j)^\top (y - x) - \frac{L}{2} \| y - x \|^2}  \\
  &=  e^{- U(x) - \frac{L}{2} \| y - x \|^2} \sum_{j=1}^{\K} \omega(x, j) e^{   - \nabla f(x - \mu_j)^\top (y - x)  } \\
  & \geq e^{- U(x) - \frac{L}{2} \| y - x \|^2}  e^{   -  \sum_{j=1}^{\K} \omega(x, j) \nabla f(x - \mu_j)^\top (y - x)  } \\
  & = e^{- U(x) - \frac{L}{2} \| y - x \|^2}  e^{   - \nabla U(x)^\top (y - x)  }.
\end{align}
Taking logarithm on both sides yields~\eqref{eq:lm-smooth-U}. 
\end{proof}

\begin{lemma}\label{lm:all-gradients}
    Let $f$ satisfy \cref{ass:f}.  Define 
    \begin{equation}
         y(x, z) = x - h \nabla U(x) + \sqrt{2h / \beta } \, z. 
    \end{equation}
    Then, the following bounds hold for any $x, z \in \bbR^d, h > 0, \beta \in (0, 1], j \in [\K]$:
    \begin{align*}
       \sqrt{\beta}\, \| \nabla f(x - \mu_j ) \| &\leq L ( \sqrt{\beta} \|x\| + D), \\ 
      \sqrt{\beta}\, \| \nabla U(x) \| &\leq L ( \sqrt{\beta} \|x\| + D), \\ 
        \sqrt{\beta} \| \nabla U   (y(x, z)) \| &\leq \sqrt{2h} \, L \| z \|  +  (1 + hL) L ( \sqrt{\beta} \| x \| + D). 
    \end{align*}
\end{lemma}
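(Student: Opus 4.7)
The plan is to establish each of the three bounds in order, with the later bounds building on the earlier ones. Throughout I will use two facts that follow immediately from Assumption~\ref{ass:f}: that $\nabla f(0)=0$ since $0$ is the minimizer of $f$, and that $\|\nabla f(a)-\nabla f(b)\|\le L\|a-b\|$ for all $a,b$ by the convex case of \cref{lm:smooth}. I will also repeatedly use that $\beta\le 1$, so $\sqrt{\beta}\,D\le D$.

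For the first bound, I write
\[
\|\nabla f(x-\mu_j)\|=\|\nabla f(x-\mu_j)-\nabla f(0)\|\le L\|x-\mu_j\|\le L(\|x\|+D),
\]
then multiply by $\sqrt{\beta}$ and absorb $\sqrt{\beta}D$ into $D$. For the second bound, \cref{lm:U-bound} gives $\|\nabla U(x)-\nabla f(x)\|\le LD$; combining this with the first bound applied at $\mu_j=0$ (or equivalently the triangle inequality $\|\nabla U(x)\|\le\|\nabla f(x)\|+LD\le L\|x\|+LD$) and multiplying through by $\sqrt{\beta}$ yields the claim.

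For the third bound, the key idea is to apply the second bound at the point $y(x,z)$ rather than trying to use a gradient-Lipschitz property of $U$ (which is not directly provided, only the one-sided smoothness of~\eqref{eq:lm-smooth-U}). That is, by the second bound,
\[
\sqrt{\beta}\,\|\nabla U(y(x,z))\|\le L\bigl(\sqrt{\beta}\,\|y(x,z)\|+D\bigr).
\]
Now I control $\sqrt{\beta}\,\|y(x,z)\|$ using the triangle inequality on the definition of $y(x,z)$, noting that $\sqrt{\beta}\cdot\sqrt{2h/\beta}=\sqrt{2h}$:
\[
\sqrt{\beta}\,\|y(x,z)\|\le\sqrt{\beta}\,\|x\|+h\sqrt{\beta}\,\|\nabla U(x)\|+\sqrt{2h}\,\|z\|,
\]
and substitute the second bound for $\sqrt{\beta}\,\|\nabla U(x)\|$ to get
\[
\sqrt{\beta}\,\|y(x,z)\|\le(1+hL)\sqrt{\beta}\,\|x\|+hLD+\sqrt{2h}\,\|z\|.
\]
Plugging this back and regrouping $(1+hL)L\sqrt{\beta}\|x\|+hL^2D+LD=(1+hL)L(\sqrt{\beta}\|x\|+D)$ produces exactly the stated inequality.

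No step here is really an obstacle; the only thing to be careful about is resisting the temptation to invoke a two-sided Lipschitz gradient for $U$ (which would not be justified from what is available, since $U$ need not be convex). Bounding $\|\nabla U(y)\|$ directly by applying bound (2) at $y$ sidesteps this issue cleanly.
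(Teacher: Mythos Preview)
Your proof is correct and follows essentially the same approach as the paper's own proof: bound $\|\nabla f\|$ via the Lipschitz gradient and $\nabla f(0)=0$, pass to $\|\nabla U\|$ via \cref{lm:U-bound}, and for the third bound expand $\|y(x,z)\|$ by the triangle inequality and feed in the second bound. The only cosmetic difference is that in the third step you invoke the already-established second inequality directly at $y(x,z)$, whereas the paper re-applies \cref{lm:U-bound} and $\|\nabla f(y)\|\le L\|y\|$ separately; the resulting algebra is identical.
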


\begin{proof}
By~\cref{lm:smooth},  $f$ has Lipschitz gradient. Since $f(0) = \nabla f (0) = 0$,  we have 
\begin{equation}
    \| \nabla f(x ) \| \leq  \| \nabla f(0) \|   + L \| x - 0 \| = L \| x \|, 
\end{equation}
which implies that 
\begin{equation}
   \sqrt{\beta} \| \nabla f(x - \mu_j ) \|    \leq \sqrt{\beta} L ( \|x\| + \|\mu_j\|) \leq L( \sqrt{\beta} \|x\| + \sqrt{\beta} D). 
\end{equation} 
Since $\beta \leq 1$, this proves the first claimed inequality. 
For the second, applying \cref{lm:U-bound} we get 
\begin{equation} 
   \sqrt{\beta} \| \nabla U(x) \| \leq 
   \sqrt{\beta} \left( \| \nabla f(x) \| + LD \right) \leq L (\sqrt{\beta} \| x \| + D). 
\end{equation}  

Consider  $\| \nabla U   (y(x, z)) \|$. By \cref{lm:U-bound} and triangle inequality,  
\begin{align}
\sqrt{\beta} \| \nabla U   (y(x, z)) \| &\leq  LD + \| \nabla f(y(x, z)) \| \\ 
   &\leq  \sqrt{\beta}  L D+  \sqrt{\beta} L  \, \| y(x, z) \|  \\ 
    &\leq  \sqrt{2h} \, L \|z\| +  L ( \sqrt{\beta} \| x \| + D) + h L  \sqrt{\beta} \| \nabla U(x)\|  \\   
   & \leq    \sqrt{2h} \, L \| z \|  +  L ( \sqrt{\beta} \| x \| + D) + h L^2 (\sqrt{\beta} \|x \| + D)  \\
   & \leq   \sqrt{2h} \, L \| z \|  +  (1 + hL) L ( \sqrt{\beta} \| x \| + D). 
\end{align}
\end{proof}

\begin{lemma}\label{lm:acc-mala}
    Let $f$ satisfy \cref{ass:f}.  Define 
    \begin{align}
         y(x, z) &= x - h \nabla U(x) + \sqrt{2h / \beta } \, z, \\ 
         q_\beta(x, y) &= \frac{1}{ (4 \pi h / \beta)^{d / 2} } \exp\left\{ - \frac{\beta}{4 h} \|y - x + h \nabla U(x)\|^2 \right\}.
    \end{align} 
    Suppose 
    \begin{equation}\label{eq:cond-h}
         0 <  h  \leq \frac{ c_h }{ L^2 ( \sqrt{\beta} \, \|x\| + D )^2 }, 
    \end{equation}
    for some $c_h > 0$.  
Then, 
\begin{align}
& \beta \left[ f(x - \mu_j) -  f(y(x, z) - \mu_j) \right]    \geq  - hL \|z\|^2  - (1 + hL) \sqrt{2 c_h} \| z \| -  \left( 1 + \frac{hL}{2}\right)  \sqrt{c_h}   , \\
& \log \frac{ q_\beta(y(x, z), x)}{q_\beta(x, y(x, z))}   
  \geq    - hL \left( 1 + \frac{hL}{2} \right)\|z\|^2 - (2 + 3 hL + h^2 L^2) \sqrt{\frac{c_h}{2}}  \| z \| -  \left( 1 + \frac{hL}{2} \right)^2    c_h. 
\end{align}  
\end{lemma}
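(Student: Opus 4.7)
The plan is to prove each inequality by substituting the explicit representation $y(x,z) - x = -h\nabla U(x) + \sqrt{2h/\beta}\,z$ into a Taylor-type expansion and then bounding every resulting term via Cauchy--Schwarz, using Lemma~\ref{lm:all-gradients} for the gradient estimates and the step size hypothesis $hL^2(\sqrt{\beta}\|x\|+D)^2 \leq c_h$ to convert those estimates into the target coefficients involving $c_h$ and $\sqrt{c_h}$.

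For the first inequality, I would apply the $L$-smoothness of $f$ (Lemma~\ref{lm:smooth}) at the pair $(x-\mu_j,\, y(x,z)-\mu_j)$ to obtain
\begin{equation*}
\beta[f(x-\mu_j) - f(y(x,z)-\mu_j)] \geq -\beta\nabla f(x-\mu_j)^\top(y(x,z)-x) - \tfrac{\beta L}{2}\|y(x,z)-x\|^2.
\end{equation*}
After substituting $y(x,z) - x$, the linear term splits into a scalar piece bounded by $h\beta\|\nabla f\|\|\nabla U\| \leq c_h$ and a noise piece bounded by $\sqrt{2h\beta}\|\nabla f\|\|z\| \leq \sqrt{2c_h}\|z\|$, both via the first two bounds of Lemma~\ref{lm:all-gradients}. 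Expanding $\|y(x,z)-x\|^2 = h^2\|\nabla U(x)\|^2 - 2h\sqrt{2h/\beta}\nabla U(x)^\top z + (2h/\beta)\|z\|^2$ and bounding each term by the same estimates produces further contributions of order $hLc_h$, $hL\sqrt{2c_h}\|z\|$, and $hL\|z\|^2$. Collecting everything yields a natural bound of the form $-(1+hL/2)c_h - (1+hL)\sqrt{2c_h}\|z\| - hL\|z\|^2$, which implies the stated inequality (using $c_h \leq \sqrt{c_h}$ when $c_h \leq 1$, as is enforced by the step size constraint in the regime of interest).

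For the second inequality, the Gaussian form of $q_\beta$ gives
\begin{equation*}
\log\frac{q_\beta(y(x,z),x)}{q_\beta(x,y(x,z))} = \tfrac{\beta}{4h}\|y(x,z)-x+h\nabla U(x)\|^2 - \tfrac{\beta}{4h}\|x-y(x,z)+h\nabla U(y(x,z))\|^2.
\end{equation*}
The first term reduces to $\|z\|^2/2$ because $y(x,z)-x+h\nabla U(x) = \sqrt{2h/\beta}\,z$. For the second, I would rewrite $x-y(x,z)+h\nabla U(y(x,z)) = h[\nabla U(x)+\nabla U(y(x,z))] - \sqrt{2h/\beta}\,z$, expand the square, and observe that the diagonal $\|z\|^2$ contribution cancels the first term exactly. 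What remains is
\begin{equation*}
-\tfrac{h\beta}{4}\|\nabla U(x)+\nabla U(y(x,z))\|^2 + \tfrac{\sqrt{2h\beta}}{2}[\nabla U(x)+\nabla U(y(x,z))]^\top z.
\end{equation*}
The third bound of Lemma~\ref{lm:all-gradients} gives $\sqrt{\beta}\|\nabla U(x)+\nabla U(y(x,z))\| \leq (2+hL)L(\sqrt{\beta}\|x\|+D) + \sqrt{2h}L\|z\|$. Squaring this, scaling by $h/4$, and applying Cauchy--Schwarz to the linear piece, with the step size hypothesis applied throughout, the quadratic part contributes $(1+hL/2)^2 c_h$, $hL(1+hL/2)\sqrt{2c_h}\|z\|$, and $\tfrac{1}{2}h^2L^2\|z\|^2$, while the cross part contributes $(1+hL/2)\sqrt{2c_h}\|z\|$ and $hL\|z\|^2$. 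Summing, the $\|z\|^2$ coefficient becomes $hL + \tfrac{1}{2}h^2L^2 = hL(1+hL/2)$ and the $\|z\|$ coefficient becomes $(1+hL)(1+hL/2)\sqrt{2c_h} = (2+3hL+h^2L^2)\sqrt{c_h/2}$, matching the stated form exactly.

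The main obstacle is bookkeeping rather than any conceptually difficult estimate: the second inequality involves roughly a dozen cross terms, and the claimed coefficients emerge only after carefully tracking how the extra $\sqrt{2h}L\|z\|$ produced by Lemma~\ref{lm:all-gradients} at the shifted point $y(x,z)$ (whose norm is not independently controlled) propagates through the square. This term is ultimately responsible for both the $h^2L^2$ inside the $\|z\|$ coefficient and the $\tfrac{1}{2}h^2L^2$ refinement in the $\|z\|^2$ coefficient of the second inequality.
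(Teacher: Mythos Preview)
Your approach is essentially identical to the paper's: both proofs apply $L$-smoothness of $f$ for the first inequality, compute the Gaussian log-ratio explicitly for the second, reduce both to the quantity $\sqrt{\beta h}\,\|\nabla U(x)+\nabla U(y(x,z))\|$, and bound all gradient norms via \cref{lm:all-gradients} combined with the step-size hypothesis $hL^2(\sqrt{\beta}\|x\|+D)^2\le c_h$. Your bookkeeping of the coefficients in the second inequality matches the paper's line by line. One remark: the constant term you derive for the first inequality is $-(1+hL/2)c_h$, not $-(1+hL/2)\sqrt{c_h}$; the paper's own proof obtains $c_h$ as well (and the application in \cref{lm:mala}, where $b_0=(1+hL/2)(2+hL/2)c_h$, confirms this), so the $\sqrt{c_h}$ in the displayed statement is a typo and your appeal to $c_h\le\sqrt{c_h}$ is unnecessary.
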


\begin{proof} 
We begin by deriving bounds on $\| \nabla U \|$ and $\| \nabla f \|$ that will be used frequently.  Applying \cref{lm:all-gradients} with condition~\eqref{eq:cond-h},  we get 
\begin{align}
   \sqrt{\beta  h} \, \| \nabla f(x - \mu_j) \| &\leq  \sqrt{h} \,  L  ( \sqrt{\beta} \|x\| + D ) \leq \sqrt{c_h}, \label{eq:gradient-fx}\\ 
   \sqrt{\beta  h} \, \| \nabla U (x) \| & \leq \sqrt{h} \,  L  ( \sqrt{\beta} \|x\| + D ) \leq   \sqrt{c_h},  \label{eq:gradient-Ux} \\ 
   \sqrt{\beta  h} \, \| \nabla U   (y(x, z)) \| &\leq \sqrt{2} \, hL \|z\| + (1 + hL) \sqrt{c_h  }.  \label{eq:gradient-Uy}  
\end{align}

Since $f$ is $L$-smooth, by~\cref{lm:smooth}, we have  
\begin{align}
 & \quad \beta \left[ f(x - \mu_j) -  f(y(x, z) - \mu_j) \right]  \\
& \geq - \beta  \nabla f(x - \mu_j)^\top (y(x, z) - x) - \frac{ \beta  L}{2} \| y(x, z) - x \|^2  \\
& =  - \beta  \nabla f(x - \mu_j)^\top ( -h \nabla U(x) + \sqrt{2h/\beta} z) - \frac{ \beta  L}{2} \| -h \nabla U(x) + \sqrt{2h/\beta} z \|^2  \\ 
& =   \beta h \nabla f(x - \mu_j)^\top   \nabla U(x)    - \frac{\beta h^2 L}{2} \| \nabla U(x)\|^2 - h L \|z\|^2 +  \sqrt{ 2 \beta h} \left[ h L \nabla U(x) -  \nabla f(x - \mu_j) \right]^\top z.  
\end{align} 
Using the Cauchy--Schwarz inequality, ~\eqref{eq:gradient-fx} and~\eqref{eq:gradient-Ux}, we get
\begin{align}
  \beta h \nabla f(x - \mu_j)^\top   \nabla U(x)  &\geq - \beta h \| \nabla f(x - \mu_j) \| \,   \|  \nabla U(x) \|  \geq - c_h, \\
  - \frac{\beta h^2 L}{2} \| \nabla U(x)\|^2 &\geq - \frac{hL}{2} c_h, \\ 
  \sqrt{ 2 \beta h} \left[ h L \nabla U(x) -  \nabla f(x - \mu_j) \right]^\top z &
  \geq - (1 + hL) \sqrt{ 2 c_h }\|z\|. 
\end{align}
The first asserted inequality then follows. 

Consider the second inequality. A routine calculation yields 
\begin{align}
   \log \frac{ q_\beta(y(x, z), x)}{q_\beta(x, y(x, z))}   &=     - \frac{\beta } {4h} \left\|h  \nabla U  (y(x, z))  +  h \nabla U (x)  - \sqrt{2h / \beta  } \, z  \right\|^2  + \frac{\beta}{4h} \|  \sqrt{2h / \beta  } \, z \|^2     \\
& =   - \frac{\beta    h}{4} \left\|   \nabla U  (y(x, z))  +   \nabla U (x)  \right\|^2   + \sqrt{ \frac{\beta h }{2}}  \left[ \nabla U  (y(x, z))  +    \nabla U (x) \right]^\top z   \\
& \geq - \frac{\beta    h}{4} \left\|   \nabla U  (y(x, z))  +   \nabla U (x)  \right\|^2   - \sqrt{ \frac{\beta h }{2}}  \| \nabla U  (y(x, z))  +  \nabla U (x)  \| \,  \| z \|. 
\label{eq:mala-prop-1}
\end{align} 
By~\eqref{eq:gradient-Ux} and~\eqref{eq:gradient-Uy},
\begin{equation}
    \sqrt{\beta h} \left\|   \nabla U  (y(x, z))  +    \nabla U (x)  \right\|
    \leq \sqrt{2} \, h L \| z \| + (2 + hL) \sqrt{c_h }. 
\end{equation}
Hence, 
\begin{align}
   \log \frac{ q_\beta(y(x, z), x)}{q_\beta(x, y(x, z))}   
& \geq  -\frac{1}{4} \left\{ \sqrt{2} \, h L \| z \| + (2 + hL) \sqrt{c_h }  \right\}^2 -  \left\{  h L \| z \| + (2 + hL) \sqrt{ \frac{c_h}{2} }  \right\} \| z \|  \\ 
& \geq  - \frac{h^2 L^2}{2} \|z\|^2 -  \left( 1 + \frac{hL}{2} \right)^2    c_h - h L (2 + hL) \sqrt{\frac{ c_h}{2 } } \|z \|  \\
& \quad - hL \|z\|^2  - (2 + hL) \sqrt{\frac{c_h}{2}}  \| z \|.  
\end{align}
The second claimed inequality then follows after   simplification. 
\end{proof}

\end{document}